\documentclass[10pt]{article}
\usepackage{times}

\sloppy
\usepackage{multicol}
\usepackage{latexsym}
\usepackage{amstext,amssymb,amsmath,amsthm}
\usepackage{amsbsy}
\usepackage{fullpage}
\usepackage{paralist}

\usepackage[pdflatex=true,recompilepics=auto]{gastex} 
\usepackage{pstricks}
\usepackage{pict2e}
\usepackage{graphicx, subfigure}

\usepackage[all]{xy}

\sloppy

\newtheorem{thm}{Theorem} 

\newtheorem{lem}{Lemma}
\newtheorem{prop}{Proposition}
\newtheorem{rem}{Remark}
\newtheorem{cor}{Corollary}
\newtheorem{Obs}{Observation}

\newcommand{\Z}{\mathbb{Z}}
\newcommand{\Q}{\mathbb{Q}}
\newcommand{\R}{\mathbb{R}}


\newcommand{\Set}[1]{\{ #1 \}}
\newcommand{\Range}[2]{#1 ,\dots, #2}
\newcommand{\RangeSet}[2]{\Set{ \Range{#1}{#2}}}

\newcommand{\Tuple}[1]{\langle #1 \rangle}

\newcommand{\AttrTwo}[1]{\ensuremath{Attr_2(#1)}}

\newcommand{\Heading}[1]{\vspace{-0.25cm}\paragraph{\bf{#1}}}
\newcommand{\BeginProof}{\vspace{-0.25cm}\begin{proof}}

\newcommand{\Comment}[1]{}

\newcommand{\Avg}{\mathsf{Avg}}
\newcommand{\LimInfAvg}{\mathsf{LimInfAvg}}
\newcommand{\LimSupAvg}{\mathsf{LimSupAvg}}
\newcommand{\LimAvg}{\mathsf{LimAvg}}

\newcommand{\VEC}{\vec} 

\def\Win{\mathsf{Win}}

\newcommand{\Nat}{\ensuremath{\mathbb{N}}}



\def\Skip{\mathit{skip}}
\def\Pop{\mathit{pop}}
\def\Push{\mathit{push}}
\def\Com{\mathsf{Com}}
\def\com{\mathit{com}}
\def\SH{\mathsf{SH}}
\def\ASH{\mathsf{ASH}}
\def\Top{\mathsf{Top}}
\newcommand{\wps}{\mathcal{A}}
\newcommand{\atuple}[1]{\langle #1 \rangle}
\newcommand{\ov}{\overline}
\newcommand{\Gr}{\mathsf{Gr}}

\def\Calls{\mathsf{Calls}}
\def\Returns{\mathsf{Retns}}
\newcommand{\LocalHistory}{\mathsf{LocalHistory}}

\newcommand{\wrg}{\mathcal{A}}
\newcommand{\En}{\mathit{En}}
\newcommand{\Ex}{\mathit{Ex}}

\newcommand{\Sig}{\mathsf{Sig}}

\newcommand{\NumModules}{\mathsf{M}}
\newcommand{\NumExits}{\mathsf{E}}
\newcommand{\NumBoxes}{\mathsf{B}}


\makeatletter

\begingroup \catcode `|=0 \catcode `[= 1
\catcode`]=2 \catcode `\{=12 \catcode `\}=12
\catcode`\\=12 |gdef|@xcomment#1\end{comment}[|end[comment]]
|endgroup

\def\@comment{\let\do\@makeother \dospecials\catcode`\^^M=10\def\par{}}

\def\begincomment{\@comment\@xcomment}

\makeatother

\newenvironment{comment}{\begincomment}{}

\begin{document}

\title{Hyperplane Separation Technique for\\Multidimensional Mean-Payoff Games}
\author{Krishnendu Chatterjee (IST Austria) \and 
Yaron Velner (Tel Aviv University, Israel)}
\date{}
\maketitle
\pagestyle{plain}

\begin{abstract}
Two-player games on graphs are central in many problems in formal verification 
and program analysis such as synthesis and verification of open systems. 
In this work, we consider both finite-state game graphs, and recursive game 
graphs (or pushdown game graphs) that can model the control flow of sequential 
programs with recursion.
The objectives we study are multidimensional mean-payoff objectives, where 
the goal of player~1 is to ensure that the mean-payoff is at least zero in all
dimensions.
In pushdown games two types of strategies are relevant: (1)~global strategies, 
that depend on the entire global history; and (2)~modular strategies, that 
have only local memory and thus do not depend on the context of invocation.
We present solutions to several fundamental algorithmic questions and 
our main contributions are as follows:
(1)~We show that finite-state multidimensional mean-payoff games can be 
solved in polynomial time if the number of dimensions and the maximal absolute 
value of weights are fixed; whereas if the number of dimensions is arbitrary, 
then the problem is known to be coNP-complete.
(2)~We show that pushdown graphs (or one-player pushdown games) with
multidimensional mean-payoff objectives can be solved in polynomial
time.
For both (1) and (2) our algorithms are based on hyperplane separation technique.
(3)~For pushdown games under global strategies both one and multidimensional
mean-payoff objectives problems are known to be undecidable, and we show
that under modular strategies the multidimensional problem is also undecidable;
under modular strategies the one-dimensional problem is known to be
NP-complete. 
We show that if the number of modules, the number of exits, and the 
maximal absolute value of the weights are fixed, then pushdown games
under modular strategies with one-dimensional mean-payoff objectives
can be solved in polynomial time, and if either the number of exits
or the number of modules is unbounded, then the problem is NP-hard.
(4)~Finally we show that a fixed parameter tractable algorithm for 
finite-state multidimensional mean-payoff games or pushdown games 
under modular strategies with one-dimensional mean-payoff objectives
would imply the solution of the long-standing open problem of 
fixed parameter tractability of parity games.
\end{abstract}

\noindent{\bf Keywords:} 
{\em (1)~Finite-state graph games; (2)~Mean-payoff objectives; 
(3)~Multidimensional objectives; (4)~Pushdown graphs and games.
(5)~Computer-aided verification.}
\newpage
\setcounter{page}{1}

\section{Introduction}
In this work we present a hyperplane based technique that solves several fundamental algorithmic open questions
for multidimensional mean-payoff objectives.
We first present an overview of mean-payoff games, then the 
important extensions, followed by the open problems, and finally 
our contributions.

\smallskip\noindent{\bf Mean-payoff games on graphs.} 
Two-player games played on finite-state graphs provide the mathematical 
framework to analyze several important problems in computer science as 
well as in mathematics, such as formal analysis of reactive systems~\cite{BuchiLandweber69,RamadgeWonham87,PnueliRosner89}.
Games played on graphs are dynamic games that proceed for an infinite 
number of rounds.
The vertex set of the graph is partitioned into player-1 vertices
and player-2 vertices. 
The game starts at an initial vertex, and if the current vertex is a 
player-1 vertex, then player~1 chooses an outgoing edge, and 
if the current vertex is a player-2 vertex, then player~2 does 
likewise.
This process is repeated forever, and gives rise to 
an outcome of the game, called a {\em play}, that consists of the 
infinite sequence of vertices that are visited. 
The most well-studied payoff criteria in such games  is the \emph{mean-payoff} 
objective, where a weight (representing a reward) is associated with every 
transition and the goal of one of the players is to maximize the long-run 
average of the weights;  and the goal of the opponent is to minimize.
Mean-payoff games and the special case of graphs (with only one player) with mean-payoff
objectives have been extensively studied over the last three decades; 
e.g. \cite{Karp78,EM79,ZP95,GKK88}.
Graphs with mean-payoff objectives can be solved in polynomial time~\cite{Karp78}, 
whereas mean-payoff games can be decided in NP $\cap$ coNP~\cite{EM79,ZP95}. 
The mean-payoff games problem is an intriguing problem and one of the rare combinatorial problems that is
known to be in NP $\cap$ coNP, but no polynomial time algorithm is known.
However, pseudo-polynomial time algorithms exist for mean-payoff games~\cite{ZP95,BCDGR11}, 
and if the weights are bounded by a constant, then the algorithm is polynomial.

\smallskip\noindent{\bf The extensions.}
Motivated by applications in formal analysis of reactive systems, 
the study of mean-payoff games has been extended in two directions:
(1)~pushdown mean-payoff games; and 
(2)~multidimensional mean-payoff games on finite game graphs.
Pushdown games, aka  games on recursive state machines, can model reactive 
systems with recursion (i.e., model the control flow of sequential programs 
with recursion).
Pushdown games have been studied widely with applications in verification, 
synthesis, and program analysis in~\cite{Wal01,Wal00,AlurReach,AlurParity} 
(also see~\cite{EY05,EY09,BBKO11,BBFK08} for sample research in stochastic pushdown games).
In applications of verification and synthesis, the quantitative objectives
that typically arise are multidimensional quantitative objectives 
(i.e., conjunction of several objectives), e.g., to express 
properties like the average response time between a grant and a request is 
below a given threshold $\nu_1$, and the average number of unnecessary grants 
is below a threshold $\nu_2$. 
Thus mean-payoff objectives can express properties related to 
resource requirements, performance, and robustness; 
multiple objectives can express the different, potentially dependent or 
conflicting objectives.
Moreover, recently many quantitative logics and automata theoretic formalisms
have been proposed with mean-payoff objectives in their heart to express 
properties such as reliability requirements, and resource bounds of reactive
systems~\cite{CDH10,BCHK11,DM10,BBFR13}, and several quantitative synthesis
questions (such as synthesis from incompatible specifications~\cite{CGHRT12}) 
translate directly to multidimensional mean-payoff games. 
Thus pushdown games and graphs with mean-payoff objectives, and finite-state 
game graphs with multidimensional mean-payoff objectives are fundamental  
theoretical questions in model checking of quantitative logics and 
quantitative analysis of reactive systems (along with recursion features).
Pushdown games with multidimensional objectives are also a natural 
generalization to study.
Furthermore, in applications related to reactive system analysis, the number of 
dimensions of mean-payoff objectives is typically small, say~2 or~3, 
as they denote the different types of resources; and the weights 
denoting the resource consumption amount are also bounded
by constants; whereas the state space of the reactive system is huge; 
see~\cite{BCHJ09,BCKN12} for examples.

\smallskip\noindent{\bf Relevant aspects of pushdown games.} 
In pushdown games two types of strategies are relevant and studied in 
the literature. 
The first one are the \emph{global} strategies, where a global strategy can 
choose the successor vertex depending on the entire global history of the 
play; where history is the finite sequence of configurations of the current 
prefix of a play.
The second are \emph{modular} strategies, which are 
understood more intuitively in the model of games on recursive state machines.
A \emph{recursive state machine} (RSM) consists of a set of component machines 
(or modules). 
Each module has a set of \emph{nodes} (atomic states) and \emph{boxes} 
(each of which is mapped to a module), a well-defined interface consisting of 
\emph{entry} and \emph{exit} nodes, and edges connecting nodes/boxes. 
An edge entering a box models the invocation of the module associated with the 
box and an edge leaving the box represents return from the module.
In the game version the nodes are partitioned into player-1 nodes and 
player-2 nodes.
Due to recursion the underlying global state-space is infinite and isomorphic
to pushdown games. 
The equivalence of pushdown games and recursive games has been established
in~\cite{AlurReach}. 
A modular strategy is a strategy that has only local memory, and thus, 
the strategy does not depend on the context of invocation of the module,
but only on the history within the current invocation of the module. 
Informally, modular strategies are appealing because they are 
stackless strategies, decomposable into one for each module.

\smallskip\noindent{\bf Previous results and open questions.}
We now summarize the main previous results and open questions and then 
present our contributions.
\begin{compactenum}
\item \emph{(Finite-state graphs).} 
Finite-state graphs (or one-player games) with mean-payoff objectives can be solved
in polynomial time~\cite{Karp78}, and finite-state graphs with multidimensional 
mean-payoff objectives can also be solved in polynomial time~\cite{VR11}
using the techniques to detect zero-circuits in graphs of~\cite{KS88}.

\item \emph{(Finite-state games).}
Finite-state games with a one-dimensional mean-payoff objective can be decided in 
NP $\cap$ coNP~\cite{ZP95,EM79}, and pseudo-polynomial time 
algorithms exist for mean-payoff games~\cite{ZP95,BCDGR11}: the current
fastest known algorithm works in time $O(n \cdot m \cdot W)$, 
where $n$ is the number of vertices, $m$ is the number of edges, and 
$W$ is the maximal absolute value of the weights~\cite{BCDGR11}.
Finite-state games with multidimensional mean-payoff objectives are 
coNP-complete with weights in $\Set{-1,0,1}$ (i.e., the weights are bounded
by a constant) 
but with arbitrary dimensions~\cite{CDHR10}, and the current best known 
algorithm works in time $O(2^n \cdot \text{poly}(n,m,\log W))$.

\item \emph{(Pushdown graphs and games).}
Pushdown graphs and games have been studied only for one-dimensional mean-payoff
objectives~\cite{CV12}.
Under global strategies, pushdown graphs with a one-dimensional mean-payoff objective
can be solved in polynomial time, whereas pushdown games are undecidable.
Under modular strategies, pushdown graphs with every module restricted to 
have single exit and weights restricted to $\Set{-1,0,1}$ are NP-hard, and 
pushdown games with any number of exits and general weight function 
are in NP (i.e., the problems are NP-complete)~\cite{CV12}.
\end{compactenum}
Many fundamental algorithmic questions have remained open for analysis of 
finite-state and pushdown graphs and games with multidimensional mean-payoff 
objectives where the goal of player~1 is to ensure that the mean-payoff is at 
least zero in all dimensions.
The most prominent ones are: 
(A)~Can finite-state game graphs with multidimensional mean-payoff objectives
with $2$ or $3$ dimensions and constant weights be solved in polynomial time? 
(note that with arbitrary dimensions the problem is coNP-complete, and for arbitrary
weights no polynomial time algorithm is known even for the one-dimensional case);
(B)~Can pushdown graphs under global strategies with multidimensional mean-payoff 
objectives be solved in polynomial time?; 
(C)~Can a polynomial time algorithm be obtained for pushdown games under modular 
strategies with a one-dimensional mean-payoff objective when relevant parameters 
(such as the number of modules) are bounded?; and 
(D)~In what complexity class does pushdown games under modular strategies with 
multidimensional mean-payoff objectives lie?
The above questions are not only of theoretical interest, but stem from 
practically motivated problems of formal analysis of reactive systems.

\smallskip\noindent{\bf Our contributions.} 
In this work we present a hyperplane separation technique to provide answers 
to many of the open fundamental questions. 
Our contributions are summarized as follows:

\begin{compactenum}

\item \emph{(Hyperplane technique).} 
We use the separating hyperplane technique from computational geometry 
to answer the open questions (A) and (B) above.
First, we present an algorithm for finite-state games with multidimensional 
mean-payoff objectives of $k$-dimensions that works in time 
$O(n^2 \cdot m \cdot k \cdot W \cdot (k \cdot n \cdot W)^{k^2 +2\cdot k +1})$ 
(Section~2: Theorem~\ref{thm1}), 
and thus for constant weights and any constant $k$ (not only $k=2$ or $k=3$) 
our algorithm is polynomial.
Second, we present a polynomial-time algorithm for pushdown graphs under global 
strategies with multidimensional mean-payoff objectives 
(Section~3: Theorem~\ref{thm:LimInfInP}); the algorithm is polynomial for 
general weight function and any number of dimensions.
Our key intuition is to reduce the multidimensional problem to 
searching for a separating hyperplane such that all realizable mean-payoff vectors lie on one side of the hyperplane. 
This intuition allows us to search for a vector, 
which is normal to the separating hyperplane,  
and reduce the multidimensional problem to a one-dimensional problem by
multiplying the multidimensional weight function by the vector.

\item \emph{(Modular pushdown games).} 
We first show that the hyperplane techniques do not extend for modular 
strategies in pushdown games: we show that pushdown games under modular 
strategies with multidimensional mean-payoff objectives with fixed number of 
dimensions are undecidable (Section~4: Theorem~\ref{thm:undec}).
Thus the only relevant algorithmic problem for pushdown games is the modular
strategies problem for a one-dimensional mean-payoff objective; 
under global strategies
even a one-dimensional mean-payoff objective problem is undecidable~\cite{CV12}.
It was already shown in~\cite{CV12} that if the number of modules is unbounded,
then even with single exits for every module the problem is NP-hard.
We show that pushdown games under modular strategies with one-dimensional mean-payoff 
objectives are NP-hard with two modules and with weights $\Set{-1,0,1}$ 
if the number of exits is unbounded (Section~4: Theorem~\ref{thm:np-hard}).
Thus to obtain a polynomial time algorithm we need to bound both the number of modules
as well as the number of exits. 
We show that pushdown games under modular strategies with one-dimensional mean-payoff 
objectives can be solved in time $(n \cdot \NumModules)^{O(\NumModules^5 + \NumModules \cdot \NumExits^2)} \cdot W^{O(\NumModules^2 + \NumExits)}$,
where $n$ is the number of vertices, 
$W$ is the maximal absolute weight, $\NumModules$ is the number of modules, and $\NumExits$ is the number
of exits (Section~4: Theorem~\ref{thm:SolveBoundedRecursiveGamesWith}).
Thus if $\NumModules$, $\NumExits$, and $W$ are constants, our algorithm is polynomial.
Hence we answer the open questions (C) and (D).

\item \emph{(Hardness for fixed parameter tractability).}
Given our polynomial-time algorithms when the parameters are fixed 
for finite-state multidimensional mean-payoff games and pushdown 
games with a one-dimensional mean-payoff objective under modular strategies, 
a natural question is whether they are fixed parameter tractable, e.g.,
could we obtain an algorithm that runs in time 
$f(k) \cdot O(\text{poly}(n,m,W))$ (resp. $f(\NumModules,\NumExits) \cdot O(\text{poly}(n,W))$)  
for finite-state multidimensional mean-payoff games (resp. for pushdown modular 
games with one-dimensional objective), for some computable function 
$f$ (e.g., exponential or double exponential).
We show the hardness of fixed parameter tractability problem by reducing the long-standing
open problem of fixed parameter tractability of parity games to both the 
problems (Section~2: Theorem~\ref{thm:finite-reduction} and 
Section~4: Theorem~\ref{thm:parity-modular-reduction}), 
i.e., fixed parameter tractability of any of the above problems
would imply fixed parameter tractability of parity games.

\end{compactenum}


\begin{comment}

\smallskip\noindent{\bf Technical contributions.} 
Our key technical contributions are as follows.
For pushdown systems under global strategies we show that the mean-payoff objective problem 
can be solved by only considering additional stack height that is polynomial.
We then show that the stack height bounded problem can be solved in 
polynomial time using a dynamic programming style algorithm.
For pushdown games under global strategies our undecidability result is obtained as a reduction 
from the universality problem of \emph{weighted} automata (which is 
undecidable~\cite{Krob,ABK11}).
For modular strategies we first show the existence of cycle independent 
modular winning strategies, and then show memoryless modular strategies 
are sufficient. 
Given memoryless modular strategies and polynomial time algorithm for
pushdown systems, we obtain the NP upper bound for the modular strategies
problem.
Our NP-hardness result for modular strategies is a reduction from the
3-SAT problem.
\end{comment}

\newcommand{\Out}{\mathsf{Out}}
\newcommand{\strab}{\sigma}

\newcommand{\straa}{\tau}
\newcommand{\set}[1]{\{#1\}}
\newcommand{\sseq}{\langle v_0,v_1,v_2,\ldots\rangle}

\newcommand{\pat}{\pi}
\newcommand{\Pat}{\Pi}
\newcommand{\wh}{\widehat}

\section{Finite-State Games with Multidimensional Mean-Payoff Objectives}\label{sec:finite-games}

In this section we will present two results: (1)~an algorithm for finite-state
multidimensional mean-payoff games for which the running time is polynomial 
when the number of dimensions and weights are fixed;
(2)~a reduction of finite-state parity games to finite-state multidimensional 
mean-payoff games with polynomial weights and arbitrary dimensions that shows that 
fixed parameter tractability of multidimensional mean-payoff games would imply 
the solution of a long-standing open problem of fixed parameter tractability of 
parity games. 
We start with the basic definitions of finite-state games, strategies, and 
mean-payoff objectives.

\smallskip\noindent{\bf Game graphs.} 
A \emph{game graph} $G=((V,E),(V_1,V_2))$ consists of a \emph{finite} 
directed graph $(V,E)$ with a finite set $V$ of $n$ vertices and a set $E$ of 
$m$ edges, 
and a partition $(V_1,V_2)$ of $V$ into two sets.
The vertices in $V_1$ are {\em player-1 vertices}, where player~1 chooses the
outgoing edges, and the vertices in $V_2$ are {\em player-2 vertices},
where  player~2 (the adversary to player~1) chooses the outgoing edges.
Intuitively game graphs are the same as AND-OR graphs.
For a vertex $u\in V$, we write $\Out(u)=\set{v\in V \mid (u,v) \in E}$ 
for the set of successor vertices of~$u$.
We assume that every vertex has at least one outgoing edge, i.e., 
$\Out(u)$ is non-empty for all vertices $u\in V$.

\smallskip\noindent{\bf Plays.} A game is played by two players: 
player~1 and player~2, who form an infinite path in the game graph by 
moving a token along edges.
They start by placing the token on an initial vertex, and then they
take moves indefinitely in the following way.
If the token is on a vertex in~$V_1$, then player~1 moves the token along
one of the edges going out of the vertex.
If the token is on a vertex in~$V_2$, then player~2 does likewise.
The result is an infinite path in the game graph, called {\em plays}.
Formally, a \emph{play} is an infinite sequence 
$\pat=\sseq$ of vertices such that $(v_j,v_{j+1}) \in E$ for all $j \geq 0$.

\smallskip\noindent{\bf Strategies.} 
A strategy for a player is a rule that specifies how to extend plays.
Formally, a \emph{strategy} $\straa$ for player~1 is a function 
$\straa$: $V^* \cdot V_1 \to V$ that, given a finite sequence of vertices 
(representing the history of the play so far) which ends in a player~1 
vertex, chooses the next vertex.
The strategy must choose only available successors, i.e., for all $w \in V^*$ 
and $v \in V_1$ we have $\straa(w \cdot v) \in \Out(v)$.
The strategies for player~2 are defined analogously.
A strategy is \emph{memoryless} if it is independent of the history and 
only depends on the current vertex. 
Formally, a memoryless strategy for player~1 is a function 
$\straa$: $V_1 \to V$ such that $\straa(v)\in \Out(v)$ for all $v \in V_1$, and 
analogously for player~2 strategies.
Given a starting vertex $v\in V$, a strategy $\straa$ for player~1, 
and a strategy $\strab$ for player~2, there is a unique play, 
denoted $\pat(v,\straa,\strab)=\sseq$, which is defined as follows: 
$v_0=v$ and for all $j \geq 0$,
if $v_j \in V_1$, then $\straa((v_0,v_1,\ldots v_j))=v_{j+1}$, and
if $v_j \in V_2$, then $\strab((v_0,v_1,\ldots,v_j))=v_{j+1}$.

\smallskip\noindent{\bf Graphs obtained under memoryless strategies.}
A player-1 graph is a special case of a game graph where all vertices
in $V_2$ have a unique successor (and player-2 graphs are defined analogously).
Given a memoryless strategy $\strab$ for player~2, we denote by 
$G^\strab$ the player-1 graph obtained by removing from all player-2 vertices
the edges that are not chosen by $\strab$.

\smallskip\noindent{\bf Multidimensional mean-payoff objectives.} 
For multidimensional mean-payoff objectives we will consider game graphs 
along with a weight function $w: E \to \Z^k$ that maps each edge to a vector
of integer weights.
We denote by $W$ the maximal absolute value of the weights.
For a finite path $\pat$, we denote by $w(\pat)$ the sum of the weight vectors 
of the edges in $\pat$ and $\Avg(\pat) = \frac{w(\pat)}{|\pat|}$, 
where $|\pat|$ is the length of $\pat$, denotes the average vector of the 
weights.
We denote by $\Avg_i(\pi)$ the projection of $\Avg(\pi)$ to the $i$-th 
dimension.
For an infinite path $\pat$, let $\rho_t$ denote the finite prefix of 
length $t$ of $\pat$; and we define 
$\LimInfAvg_i(\pat) =\lim\inf_{t \to\infty} \Avg_i(\rho_t)$ and analogously 
$\LimSupAvg_i(\pat)$ with $\lim\inf$ replaced by $\lim\sup$.
For an infinite path $\pi$, we denote by 
$\LimInfAvg(\pi)= (\LimInfAvg_1(\pi),\dots,\LimInfAvg_k(\pi))$ 
(resp. $\LimSupAvg(\pi) = (\LimSupAvg_1(\pi),\dots,\LimSupAvg_k(\pi))$)  
the limit-inf (resp. limit-sup) vector of the averages (long-run average or 
mean-payoff objectives).
The objective of player~1 we consider is to ensure that the 
mean-payoff is non-negative in every dimension, i.e., to ensure 
$\LimInfAvg(\pi) \geq \vec{0}$, where $\vec{0}$ denotes 
the vector of all zeros.

\begin{rem}
A mean-payoff objective is invariant to the shift operation, i.e., if in a 
dimension $i$, we require that the mean-payoff is at least $\nu_i$, then we 
simply subtract $\nu_i$ in the weight vector from every edge in the $i$-th
dimension and require the mean-payoff is at least $0$ in dimension~$i$.
Hence the comparison with $\vec{0}$ is without loss of generality. 
We will present all the results for $\LimInfAvg$ objectives and the
results for $\LimSupAvg$ objectives are simpler. 
Hence, in sequel we will write $\LimAvg$ for $\LimInfAvg$.
Moreover, all the results we will present would also hold if we replace 
the non-strict inequality comparison ($\geq \vec{0}$) with a strict
inequality ($> \vec{0}$).
\end{rem}

\smallskip\noindent{\bf Winning strategies.} A player-1 strategy $\straa$ is a 
winning strategy from a set $U$ of vertices, if for all player-2 strategies $\strab$ 
and all $v \in U$ we have $\LimAvg(\pat(v,\straa,\strab)) \geq \VEC{0}$. 
A player-2 strategy is a winning strategy from a set $U$ of vertices if for all 
player-1 strategies $\straa$ and for all $v \in U$ we have that the path 
$\pat(v,\straa,\strab)$ does not satisfy $\LimAvg(\pat(v,\straa,\strab)) \geq \VEC{0}$. 
The winning region for a player is the largest set $U$ such that the player 
has a winning strategy from $U$.

\subsection{Hyperplane separation algorithm}
In this subsection we will present our algorithm to decide the existence of 
a winning strategy for player~1 in finite-state multidimensional mean-payoff
games.

\smallskip\noindent{\bf Hyperplane separation technique.}
Our key insight is to search for a hyperplane $\mathcal{H}$ such that 
player~2 can ensure a mean-payoff vector that lies below $\mathcal{H}$.
Intuitively, we show that if such a hyperplane exists, then any point in space 
that is below $\mathcal{H}$ is negative in at least one dimension, and 
thus the multidimensional mean-payoff objective for player~1 is violated.
Conversely, we show that if for all hyperplanes $\mathcal{H}$ player~1 can 
achieve a mean-payoff vector that lies above $\mathcal{H}$, then player~1 
can ensure the multidimensional mean-payoff objective.
The technical argument relies on the fact that if we have an infinite sequence 
of unit vectors $\vec{b}_1,\vec{b}_2,\dots$ and $\vec{b}_\ell$ lies above the 
hyperplane that is normal to $\sum_{j=1}^{\ell-1} \vec{b}_j$, then 
$\lim\inf_{\ell\to\infty} \frac{1}{\ell}\cdot\sum_{j=1}^\ell \vec{b}_j = \vec{0}$.

\smallskip\noindent{\em Multiple dimensions to one dimension.}
Given a multidimensional weight function $w$ and a vector $\vec{\lambda}$, we 
denote by $w\cdot \vec{\lambda}$ the one-dimensional weight function that 
assigns every edge $e$ the weight value $w(e)^T\cdot \vec{\lambda}$, where 
$w(e)^T$ is the transpose of the weight vector $w(e)$.
We show that with the hyperplane technique we can reduce a game with 
multidimensional mean-payoff objective to the same game with 
a one-dimensional mean-payoff objective.
A vector $\vec{b}$ lies above a hyperplane $\mathcal{H}$ if $\vec{\lambda}$ is 
the normal vector of $\mathcal{H}$ and $\vec{b}^T \cdot \vec{\lambda} \geq 0$.
Hence, player~1 can achieve a mean-payoff vector that lies above $\mathcal{H}$ 
if and only if player~1 can ensure the one-dimensional mean-payoff objective
with weight function $w(e)\cdot \vec{\lambda}$.

\smallskip\noindent{\bf Examples.}
Consider the game graph $G_1$ (Figure~\ref{fig:G1}) where all vertices 
belong to player~1.
The weight function $w_1$ labels each edge with a two-dimensional weight
vector.
In $G_1$, player~1 can ensure all mean-payoff vectors that are 
convex combination of $(1,-2),(-2,1)$ and $(-1,-1)$ (see Figure~\ref{fig:G1Conv}).
By Figure~\ref{fig:G1Conv}, all the vectors reside below the hyperplane $y=-x$,
and consider the normal vector $\vec{\lambda} = (1,1)$ to the hyperplane $y=-x$.
All the cycles in $G_1$ with weight function $w_1 \cdot \vec{\lambda}$ 
(shown in Figure~\ref{fig:LambdaG1})
have negative weights. Therefore player~1 loses in the one-dimensional mean-payoff objective.
Consider the game graph $G_2$ (Figure~\ref{fig:G2}) with all player-1 vertices;
where player~1 can achieve any mean-payoff vector that is a convex combination of 
$(2,-1),(-1,2)$ and $(-2,-1)$ (see Figure~\ref{fig:G2Conv}).
By Figure~\ref{fig:G2Conv}, every two-dimensional hyperplane that passes 
through the origin intersects with the feasible region. 
Thus, no separating hyperplane exists.
\begin{figure}
\begin{minipage}[t]{0.45\linewidth}
	 \begin{center} 
    \begin{gpicture}(25, 10)(0,-5)
    \thinlines

    \node[Nw=4.0,Nh=4.0](A1)(0,0){$v_0$}
    \drawloop[loopangle=180,loopdiam=4](A1){$(1,-2)$}
    \node[Nw=4.0,Nh=4.0](A2)(25,0){$v_2$}
    \drawloop[loopangle=0,loopdiam=4](A2){$(-2,1)$}

    \node[Nw=4.0,Nh=4.0](A3)(12.5,0){$v_1$}

    \drawedge(A1,A3){$(0,0)$}
    \drawedge(A3,A2){$(0,0)$}
    \gasset{Nw=5,Nh=5,Nmr=2.5,curvedepth=6}
    \drawedge(A2,A1){$(-3,-3)$}
    \end{gpicture}
\end{center}
  \caption{Game graph $G_1$.}
\label{fig:G1}
\end{minipage}
\hfill
\begin{minipage}[t]{0.45\linewidth}	 
\begin{center} 
    \begin{gpicture}(25, 10)(0,-5)
    \thinlines

    \node[Nw=4.0,Nh=4.0](A1)(0,0){$v_0$}
    \drawloop[loopangle=180,loopdiam=4](A1){$(2,-1)$}
    \node[Nw=4.0,Nh=4.0](A2)(25,0){$v_2$}
    \drawloop[loopangle=0,loopdiam=4](A2){$(-1,2)$}

    \node[Nw=4.0,Nh=4.0](A3)(12.5,0){$v_1$}

    \drawedge(A1,A3){$(0,0)$}
    \drawedge(A3,A2){$(0,0)$}
    \gasset{Nw=5,Nh=5,Nmr=2.5,curvedepth=6}
    \drawedge(A2,A1){$(-6,-3)$}
    \end{gpicture}
	 \end{center}
  \caption{Game graph $G_2$.}
\label{fig:G2}
\end{minipage}
\end{figure}

\begin{figure}
\begin{minipage}[t]{0.45\linewidth}	 
\setlength\unitlength{1cm}
\begin{center} 
\begin{pspicture}(1,1)
{\color{gray}\polygon*(0.5,0.5)(0.75,0)(0,0)(0,0.75)}
\psline[linewidth=0.5pt](0.5,0.5)(1.25,0.5)
\psline[linewidth=0.5pt](0.5,0.5)(-0.25,0.5)
\psline[linewidth=0.5pt](0.5,0.5)(0.5,-0.25)
\psline[linewidth=0.5pt](0.5,0.5)(0.5,1.25)

\psline{->}(0.5,0.5)(0,0)
\psline{->}(0.5,0.5)(0.75,0)
\psline{->}(0.5,0.5)(0,0.75)
\end{pspicture}
\end{center} 
  \caption{Feasible vectors for $G_1$.}
\label{fig:G1Conv}
\end{minipage}
\hfill
\begin{minipage}[t]{0.45\linewidth}	 
\setlength\unitlength{0.25cm}
\begin{center} 
\begin{pspicture}(1,1)
{\color{gray}\polygon*(0,0.25)(1,0.25)(0.25,1)}

\psline[linewidth=0.5pt](0.5,0.5)(1.25,0.5)
\psline[linewidth=0.5pt](0.5,0.5)(-0.25,0.5)
\psline[linewidth=0.5pt](0.5,0.5)(0.5,-0.25)
\psline[linewidth=0.5pt](0.5,0.5)(0.5,1.25)

\psline{->}(0.5,0.5)(0,0.25)
\psline{->}(0.5,0.5)(1,0.25)
\psline{->}(0.5,0.5)(0.25,1)
\end{pspicture}
\end{center} 
  \caption{Feasible vectors for $G_2$.}
\label{fig:G2Conv}
\end{minipage}
\end{figure}

\begin{figure}
	 \begin{center} 
    \begin{picture}(25, 10)(0,-5)
    \thinlines

    \node[Nw=4.0,Nh=4.0](A1)(0,0){$v_0$}
    \drawloop[loopangle=180,loopdiam=4](A1){$-1$}
    \node[Nw=4.0,Nh=4.0](A2)(25,0){$v_2$}
    \drawloop[loopangle=0,loopdiam=4](A2){$-1$}

    \node[Nw=4.0,Nh=4.0](A3)(12.5,0){$v_1$}

    \drawedge(A1,A3){$0$}
    \drawedge(A3,A2){$0$}
    \gasset{Nw=5,Nh=5,Nmr=2.5,curvedepth=6}
    \drawedge(A2,A1){$-6$}
    \end{picture}
	 \end{center}
  \caption{Game graph $G_1$ with weight function $\vec{\lambda}\cdot w_1$ for $\vec{\lambda} = (1,1)$.}
\label{fig:LambdaG1}
\end{figure}

\smallskip\noindent{\em Basic lemmas and assumptions.}
We now prove two lemmas to formalize the intuition related to reduction 
to one-dimensional mean-payoff games.
Lemma~\ref{lem:ReductionToOneDimEasyDirection} requires two assumptions, 
which we later show (in Lemma~\ref{lem:WithoutAssump}) how to deal with.
The assumptions are as follows:
(1)~The first assumption (we refer as Assumption~1) is that every outgoing edge 
of player-2 vertices is to a player-1 vertex; 
formally, $E \cap (V_2 \times V) \subseteq E \cap (V_2 \times V_1)$.
(2)~The second assumption (we refer as Assumption~2) is that every player-1 vertex has $k$ self-loop edges 
$e_1,\dots,e_k$ such that $w_i(e_j) = 0$ if $i\neq j$ and $w_i(e_i) = -1$.
Let us denote by $\Win^2$ the player-2 winning region in the multidimensional
mean-payoff game with weight function $w$, and 
by $\Win^2_{\vec{\lambda}}$ the player-2 winning region in the one-dimensional mean-payoff game 
with the weight function $w\cdot \vec{\lambda}$. 
The next lemma shows that if $\Win^2_{\vec{\lambda}} \neq \emptyset$, then $\Win^2\neq \emptyset$;
i.e., presents a sufficient condition for the non-emptiness of $\Win^2$.

\begin{lem}\label{lem:ReductionToOneDimEasyDirection}
Given a game graph $G$ that satisfies Assumption~1 and Assumption~2, and a 
multidimensional mean-payoff objective with weight function $w$,
for every $\vec{\lambda}\in\R^k$ we have $\Win^2_{\vec{\lambda}} \subseteq \Win^2$; 
(hence, if $\Win^2_{\vec{\lambda}} \neq \emptyset$, then $\Win^2\neq \emptyset$).
\end{lem}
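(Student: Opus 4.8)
The plan is to show that player~2 can win the multidimensional game from any vertex from which she wins the one-dimensional game with weight function $w\cdot\vec\lambda$, by lifting her one-dimensional winning strategy verbatim. First I would fix $\vec\lambda\in\R^k$ and a vertex $v\in\Win^2_{\vec\lambda}$, and let $\strab$ be a memoryless winning strategy for player~2 in the one-dimensional mean-payoff game with weights $w\cdot\vec\lambda$ (memoryless strategies suffice for one-dimensional mean-payoff games). I claim that $\strab$, viewed as a strategy in the multidimensional game, is winning for player~2 from $v$, i.e., against every player-1 strategy $\straa$ the play $\pat=\pat(v,\straa,\strab)$ fails $\LimAvg(\pat)\ge\vec 0$. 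Since $\strab$ is winning in the one-dimensional game, we have $\LimInfAvg(\pat)$ with respect to $w\cdot\vec\lambda$ strictly below $0$ — more precisely, there is a uniform $\epsilon>0$ (depending only on $G$, $w$, $\vec\lambda$, obtained from the pseudo-polynomial bounds on mean-payoff values) such that $\liminf_{t\to\infty}\Avg(\rho_t)^T\cdot\vec\lambda \le -\epsilon$, where $\rho_t$ is the length-$t$ prefix of $\pat$.

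The key step is then the linear-algebra observation: if for a sequence of average vectors $\vec a_t=\Avg(\rho_t)$ we had $\LimAvg(\pat)=\liminf_t \vec a_t \ge \vec 0$ componentwise, I want to derive a contradiction with $\liminf_t \vec a_t^T\cdot\vec\lambda\le -\epsilon$. This is where Assumption~2 enters: the $k$ self-loops with weight $-\vec e_i$ at every player-1 vertex guarantee that the component-wise liminf is actually \emph{achieved} in a strong sense — player~1, if he wanted to, could pump any dimension down, but the point is that the set of limit points of $\{\vec a_t\}$ is "downward closed enough" that $\LimAvg(\pat)\ge\vec 0$ forces the existence of prefixes whose average vector is simultaneously close to nonnegative in every coordinate; for such a vector $\vec a$, $\vec a^T\cdot\vec\lambda$ can be made $\ge -\delta$ for any $\delta$ if $\vec\lambda\ge\vec 0$. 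Here I must be careful: $\vec\lambda$ need not be nonnegative. The cleaner route is to argue the contrapositive directly at the level of strategies: Assumption~2 lets player~1, from \emph{any} reachable player-1 vertex, take self-loops to drive each coordinate's running average arbitrarily close to (but not above) its liminf; combined with Assumption~1 (so player~2's moves always land on a player-1 vertex where these self-loops are available), if $\LimAvg(\pat)\ge\vec 0$ held for the actual play, player~1 could modify his strategy by interleaving self-loop segments to produce a play $\pat'$ (still consistent with $\strab$) whose average vector visits points arbitrarily close to $\vec 0$ from below in every coordinate infinitely often, hence points $\vec a$ with $\vec a^T\cdot\vec\lambda$ arbitrarily close to $0$; but that contradicts $\strab$ being winning (with uniform margin $\epsilon$) in the one-dimensional game.

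Concretely, the steps in order are: (i) record that memoryless strategies and a uniform negative margin $\epsilon>0$ exist for player~2 in the one-dimensional game on $v\in\Win^2_{\vec\lambda}$; (ii) fix such a $\strab$ and suppose for contradiction it is not multidimensionally winning from $v$, witnessed by $\straa$ with $\LimAvg(\pat(v,\straa,\strab))\ge\vec 0$; (iii) using Assumptions~1 and~2, build from $\straa$ a new player-1 strategy $\straa'$ that, whenever the running average in some dimension $i$ is more than $\epsilon/(k\|\vec\lambda\|_1)$ above $0$, first spends enough self-loop-$e_i$ steps at the current player-1 vertex to bring that coordinate's average back down close to $0$, then resumes imitating $\straa$ — this is well-defined because after any player-2 move the token is at a player-1 vertex (Assumption~1) which has the self-loops (Assumption~2), and it keeps $\LimInfAvg(\pat')\ge\vec 0$ while forcing $\LimSupAvg_i(\pat')$ near $0$ for each $i$; (iv) conclude that along $\pat'=\pat(v,\straa',\strab)$ there are infinitely many prefixes whose average vector $\vec a$ satisfies $\vec a^T\cdot\vec\lambda > -\epsilon$, contradicting the uniform margin from (i); (v) hence $\strab$ is multidimensionally winning from $v$, so $v\in\Win^2$, giving $\Win^2_{\vec\lambda}\subseteq\Win^2$, and the parenthetical non-emptiness statement follows.

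The main obstacle I expect is step~(iii): making the "interleave self-loop corrections" construction precise while simultaneously preserving $\LimAvg\ge\vec 0$ (a $\liminf$ condition, so finite corrections are harmless) and achieving the $\limsup$-near-$0$ bound in every coordinate, and doing the bookkeeping so that the resulting average vectors indeed have inner product with $\vec\lambda$ bounded below by something exceeding $-\epsilon$ infinitely often. An alternative that sidesteps the $\limsup$ juggling is to phrase it via the set $L$ of accumulation points of $\{\Avg(\rho_t)\}_t$: this set is convex-ish/connected in the relevant sense, it contains a point $\ge\vec 0$ by assumption, and Assumption~2 forces $\vec 0$ itself (or a point arbitrarily close from below) to be an accumulation point after the self-loop modification — then $\vec 0^T\vec\lambda=0>-\epsilon$ is the contradiction. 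Either way the heart is the same downward-pumping argument enabled by the two assumptions, which is exactly why Lemma~\ref{lem:WithoutAssump} is later needed to remove them.
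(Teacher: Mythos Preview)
You have missed the central simplification that the two assumptions buy, and as a result your argument is both far more elaborate than necessary and has a genuine gap in step~(iv).

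The paper uses Assumptions~1 and~2 in a single stroke: they force $\vec\lambda\in(0,\infty)^k$ whenever $\Win^2_{\vec\lambda}\neq\emptyset$. Indeed, if some $\lambda_i\le 0$, then the $i$-th self-loop at any player-1 vertex (Assumption~2) has one-dimensional weight $-\lambda_i\ge 0$; since every play reaches a player-1 vertex within one step (Assumption~1), player~1 wins the one-dimensional game from every vertex by looping on that self-loop forever, so $\Win^2_{\vec\lambda}=\emptyset$ and the inclusion is vacuous. Once $\vec\lambda>0$ is established, the same strategy $\strab$ works directly in the multidimensional game via pigeonhole: along any play consistent with $\strab$ there are infinitely many prefixes with $\sum_i\lambda_i\cdot\Avg_i\le -c<0$, hence for each such prefix some coordinate satisfies $\Avg_i\le -c/(k\cdot\max_j\lambda_j)$, and since there are only $k$ coordinates one fixed index $i$ witnesses this infinitely often, giving $\LimInfAvg_i<0$. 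No strategy modification is needed at all. You explicitly wrote ``$\vec\lambda$ need not be nonnegative'' and then built an elaborate work-around; this is exactly the observation you were missing.

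Separately, your step~(iv) does not yield a contradiction as stated. The uniform margin gives $\liminf_t \Avg(\rho_t)^T\vec\lambda\le-\epsilon$, which means there are infinitely many prefixes with inner product $\le -\epsilon+\delta$ for every $\delta>0$. You only claim to produce infinitely many prefixes with inner product $>-\epsilon$; these two statements are perfectly compatible (the sequence could alternate). To contradict the margin you would need \emph{eventually all} prefixes to have inner product $>-\epsilon$, i.e., $\liminf>-\epsilon$, and your self-loop-interleaving construction as sketched does not obviously deliver this: each correction phase brings one prefix close to $\vec 0$, but the intervening segments following $\straa$ can drag the inner product back below $-\epsilon$ before the next correction. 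Making this work would require controlling the entire tail, which is substantially harder than what you wrote and, in view of the paper's two-line argument, unnecessary.
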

\begin{proof}
Let $\strab$ be a player-2 winning strategy in $G$ from an initial vertex 
$v_0$ (i.e., winning strategy from the set $\set{v_0}$) for the one-dimensional 
mean-payoff objective with weight function $w\cdot\vec{\lambda}$.
We first observe that we must have $\vec{\lambda} \in (0,\infty)^k$; otherwise 
if $\lambda_i \in (-\infty,0]$ then by Assumption~1 the weight of the $i$-th self-loop of a 
player-1 vertex would be non-negative, and player~1 can ensure the 
mean-payoff objective from all vertices (by Assumption~2 all plays arrive to a player-1 vertex
within one step), contradicting $v_0$ is winning for player~2.
We claim that $\strab$ is also a player-2 winning strategy with respect to the multidimensional mean-payoff 
objective.
Indeed, let $\rho$ be a play that is consistent with $\strab$.
Since $\strab$ is a player-2 winning strategy for the mean-payoff objective with weight 
function $w\cdot\vec{\lambda}$, it follows that there exists a constant $c > 0$ such 
that there are infinitely many prefixes of $\rho$ with average weight (according to $w\cdot\vec{\lambda}$) 
at most $-c$.
Let $\lambda_{\min}= \min \set{\vec{\lambda}_i \mid 1 \leq i \leq k}$ be the minimum 
value of $\vec{\lambda}$ among its dimension.
Since $\vec{\lambda} \in (0,\infty)^k$, it follows that $\lambda_{\min} >0$.
Since there are finitely many dimensions 
there must be a dimension $i$ for which there are infinitely many prefixes of $\rho$ 
with average weight at most $-\frac{c \cdot \lambda_{\min}}{k} <0$ 
in dimension $i$. 
Hence, the mean-payoff value of dimension $i$ is negative, and 
thus the multidimensional mean-payoff objective is violated.
Hence $\strab$ is a player-2 winning strategy from $v_0$ against 
the multidimensional mean-payoff objective.
\end{proof}

\begin{figure}[!tb]
\begin{center} 
  \begin{picture}(132,61)(-46,-18)
	\node[Nw=1,Nh=1](v1)(0,0){}
	\node[Nw=1,Nh=1](v2)(10,0){}
	\node[Nw=1,Nh=1](v3)(20,0){}
	\node[Nw=1,Nh=1](v4)(30,0){}
	\node[Nw=1,Nh=1](v5)(40,0){}
	\node[Nw=1,Nh=1](v6)(50,0){}
	\node[Nw=1,Nh=1](v7)(60,0){}

	\drawedge[ELside=r](v1,v2){$e_1$}
	\drawedge[ELside=r](v2,v3){$e_2$}
	\drawedge[ELside=r](v3,v4){$e_3$}
	\drawedge[ELside=r](v4,v5){$e_4$}
	\drawedge[ELside=r](v5,v6){$e_5$}
	\drawedge[ELside=r](v6,v7){$e_6$}

	\drawloop[loopdiam=6,loopangle=90,AHnb=3,AHdist=4.71](v2){}
	\drawloop[loopdiam=8,loopangle=90,AHnb=3,AHdist=6.28](v2){$C_1$}

	\drawloop[loopdiam=4,loopangle=90,AHnb=3,AHdist=3](v4){}
	\drawloop[loopdiam=6,loopangle=90,AHnb=3,AHdist=4.71](v4){}
	\drawloop[loopdiam=8,loopangle=90,AHnb=3,AHdist=6.28](v4){$C_2$}

	\drawloop[loopdiam=8,loopangle=90,AHnb=3,AHdist=6.28](v5){$C_3$}

  \end{picture}
\caption{The path segment $\rho_i$ is decomposed into the cycles 
(possibly repeated) as $\rho_i^2 = C_1 \cdot C_1 \cdot C_2\cdot C_2\cdot 
C_2\cdot C_3$; and the acyclic part $\rho_i^1 = e_1 \cdot e_2 \cdot e_3\cdot e_4\cdot e_5\cdot e_6$.}
\end{center}\label{fig:hard-dir}
\end{figure}

We now present a lemma that will complement Lemma~\ref{lem:ReductionToOneDimEasyDirection},
and the following lemma does not require Assumption~1 or Assumption~2.

\begin{lem}\label{lem:ReductionToOneDimHardDirection}
Given a game graph $G$ and a multidimensional mean-payoff objective with weight function $w$,
if for all $\vec{\lambda} \in \R^k$ we have $\Win^2_{\vec{\lambda}} = \emptyset$, 
then we have $\Win^2=\emptyset$.
\end{lem}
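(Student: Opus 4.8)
The plan is to prove the contrapositive: assuming $\Win^2 \neq \emptyset$ we exhibit a single $\vec{\lambda} \in \R^k$ with $\Win^2_{\vec{\lambda}} \neq \emptyset$. So fix a vertex $v_0 \in \Win^2$ and a player-2 winning strategy $\strab$ from $v_0$ for the multidimensional objective. Since player-2 wins, every play consistent with $\strab$ fails $\LimAvg \geq \vec{0}$, i.e.\ has some coordinate with strictly negative limit-inf average. Intuitively player~2 can ``force'' the average to dip below a fixed negative level, and we want to find one fixed hyperplane that all these enforced average vectors ultimately fall below.

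First I would argue that memoryless player-2 strategies suffice, or at least reduce to a finite combinatorial object: restrict attention to $G^\strab$ when $\strab$ is memoryless (this is the standard determinacy/finite-memory fact for $\LimInfAvg$ games; alternatively one works with the finite game directly and uses a memoryless player-2 winning strategy, which exists for the ``player-2 wins $\LimInfAvg \geq \vec 0$'' objective since this is a Boolean combination that admits memoryless opponent strategies — I would cite the relevant earlier literature referenced in the paper, e.g.\ the facts behind \cite{CDHR10}). Then in the player-1 graph $G^\strab$, player~1 is still unable to achieve $\LimAvg \geq \vec 0$ from $v_0$; equivalently, in every strongly connected component reachable from $v_0$ under $\strab$ that player~1 can ``trap'' a play in, the set of average vectors of cycles has no nonnegative convex combination, i.e.\ $\vec{0} \notin \mathrm{conv}\{\Avg(C) : C \text{ a cycle in that SCC}\}$. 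This is where the figure with the cycle decomposition $\rho_i^2 = C_1 C_1 C_2 C_2 C_2 C_3$ comes in: any long play segment decomposes into an acyclic part of bounded length plus a multiset of simple cycles, so its average is (up to a vanishing $O(1/\text{length})$ term) a convex combination of simple-cycle average vectors.

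The key step is then a separation argument. Let $S$ be the (finite) set of average vectors of simple cycles lying in the ``player-1-trappable'' region of $G^\strab$ (there are finitely many simple cycles, so $|S| < \infty$ and $\mathrm{conv}(S)$ is a polytope). Since player~1 cannot win, $\vec{0} \notin \mathrm{conv}(S)$. By the separating hyperplane theorem there is a $\vec{\lambda} \in \R^k$ with $\vec{x}^T \vec{\lambda} \leq -\delta < 0$ for all $\vec{x} \in \mathrm{conv}(S)$, hence for all simple-cycle averages in that region. Now I claim this $\vec{\lambda}$ witnesses $v_0 \in \Win^2_{\vec{\lambda}}$: play $\strab$ in the one-dimensional game with weights $w \cdot \vec{\lambda}$. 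Any play consistent with $\strab$ eventually stays in some such SCC, and by the cycle-decomposition observation its prefix averages under $w\cdot\vec\lambda$ are, up to $O(1/t)$, convex combinations of the quantities $\Avg(C)^T \vec\lambda \leq -\delta$, hence $\limsup$ of the prefix averages is $\leq -\delta < 0$. (One must be slightly careful: it is $\limsup$, not $\liminf$, that is controlled at the level of arbitrary prefixes; but a $\limsup \leq -\delta$ of prefix averages is exactly what player~2 needs to defeat a $\LimInfAvg \geq 0$ objective in one dimension, since it forces $\LimInfAvg \leq -\delta$.) Therefore $\Win^2_{\vec\lambda} \ni v_0$, completing the contrapositive.

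The main obstacle I anticipate is the passage from ``player~1 cannot win the multidimensional game in $G^\strab$'' to the clean combinatorial statement $\vec{0} \notin \mathrm{conv}(S)$ for an appropriate region $S$. Player~1's inability to win is a statement about infinite plays and a $\liminf$ objective, and translating it to a finite convex-geometry fact requires (i) identifying the subgraph where player~1 is confined (reachable-and-closed-under-player-2-choices components) and (ii) the elementary but slightly technical cycle-decomposition bound that makes prefix averages approximate convex combinations of simple-cycle averages. Handling the order of quantifiers correctly — a fixed $\vec\lambda$ that works against \emph{all} player-1 responses, obtained from a fixed player-2 strategy — is exactly what using the memoryless $\strab$ and working inside the fixed graph $G^\strab$ buys us, so establishing sufficiency of memoryless player-2 strategies is the load-bearing preliminary.
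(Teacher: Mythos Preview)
Your approach is genuinely different from the paper's: you argue the contrapositive via a memoryless player-2 strategy and a single separating hyperplane, whereas the paper proves the statement directly by constructing an explicit player-1 winning strategy that repeatedly switches between one-dimensional memoryless strategies $\tau_{\vec{\lambda}}$, updating $\vec{\lambda}$ adaptively to be the negative of the accumulated weight vector, and then shows by a Euclidean-norm calculation that the average weight vector tends to $\vec{0}$. The paper's argument is self-contained and does not rely on any determinacy or memorylessness facts for player~2 in the multidimensional game.

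Your argument has a real gap at the separation step. Consider the player-1 graph $G^\strab$ after fixing a memoryless player-2 winning strategy $\strab$. It may have several bottom SCCs reachable from $v_0$, and while in \emph{each} such SCC the convex hull of cycle averages misses $\R^k_{\geq 0}$, the convex hull of \emph{all} simple-cycle averages taken together can very well contain $\vec{0}$. Concretely, take $v_0$ with edges to three sinks $v_1,v_2,v_3$ carrying self-loops of average $(2,-1)$, $(-1,2)$, $(-1,-1)$: player~1 loses the two-dimensional objective from $v_0$ (he must commit to one sink), yet $\vec{0}=\tfrac13(2,-1)+\tfrac13(-1,2)+\tfrac13(-1,-1)$ lies in $\mathrm{conv}(S)$, so no hyperplane separates $\vec 0$ from $S$. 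Correspondingly your claim ``$v_0\in\Win^2_{\vec\lambda}$'' fails: for every $\vec\lambda$ at least one of the three sinks has nonnegative $\vec\lambda$-value, and player~1, who chooses the successor of $v_0$, goes there. So with $\strab$ fixed, player~2 does \emph{not} win the one-dimensional game from $v_0$ for any $\vec\lambda$.

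The route can be salvaged, but not as you wrote it: you must restrict attention to a \emph{single} bottom SCC $C$ of $G^\strab$ reachable from $v_0$, argue that $\mathrm{conv}(\text{cycle averages in }C)$ is disjoint from $\R^k_{\geq 0}$ (not merely that it misses $\vec 0$), separate these two convex sets, and then conclude that the vertices of $C$ (not $v_0$) lie in $\Win^2_{\vec\lambda}$. That still yields $\Win^2_{\vec\lambda}\neq\emptyset$, which is all the contrapositive needs. Note also that your ``load-bearing preliminary''---memoryless player-2 strategies suffice for the $\LimInfAvg\geq\vec 0$ objective---is a nontrivial external fact (it is true, via results behind the coNP upper bound), whereas the paper's direct construction avoids importing it.
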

\begin{proof}
Since $\Win^2_{\vec{\lambda}}=\emptyset$ for every $\vec{\lambda} \in \R^k$, it follows by the determinacy 
of one-dimensional mean-payoff games~\cite{EM79} that for all $\vec{\lambda}\in \R^k$, player~1 can 
ensure the one-dimensional mean-payoff objective with weight function $w\cdot \vec{\lambda}$ in $G$ 
(from all initial vertices). 
We now present an explicit construction of a player-1 winning strategy for the multidimensional mean-payoff
objective in $G$.
For a vector $\vec{\lambda}\in \R^k$, let $\straa_{\vec{\lambda}}$ be a 
memoryless player-1 winning strategy in $G$ from all vertices for the 
one-dimensional mean-payoff game with weight function $w\cdot\vec{\lambda}$ 
(note that uniform memoryless winning strategies that ensure winning from all
vertices in the winning region exist in one-dimensional mean-payoff games by 
the results of~\cite{EM79}).
We construct a player-1 winning strategy $\straa$ for the multidimensional objective 
in the following way:
\begin{itemize}
\item Initially, set $\VEC{b}_0 := (1,1,\dots,1)$.
\item For $i=1,2,\dots,\infty$, in iteration $i$ play as follows:
\begin{itemize}
\item Set $\vec{\lambda}_{b_i} := -\VEC{b}_{i-1}$. 
In $\straa$, player~1 plays according to $\straa_{\vec{\lambda}_{b_i}}$ for $i$ rounds.
\item Let $\rho_i$ be the play suffix that was formed in the last $i$ rounds (or steps) of the play.
From $\rho_i$ we obtain the part of $\rho_i$ that consists of cycles (that are possibly repeated)
and denote the part as $\rho_i^2$; and an acyclic part $\rho_i^1$ of length at most $n$.
Informally, $\rho_i^2$ consists of cycles $C$ that appear in $\rho_i$, and if cycle 
$C$ is repeated $j$ times in $\rho_i$ then it is included $j$ times in $\rho_i^2$;
see Figure~\ref{fig:hard-dir} for an illustration.
\item Set $\VEC{b}_i := \VEC{b}_{i-1} + w(\rho_i^2)$; and proceed to the next iteration.
\end{itemize}
\end{itemize}
In order to prove that $\straa$ is a winning strategy, it is enough to prove that for every play $\rho$ 
that is consistent with $\straa$, the Euclidean norm of the average weight vector tends to zero 
as the length of the play tends to infinity.

We first compute the Euclidean norm of $\VEC{b}_i$.
For this purpose we observe that $\straa_{\vec{\lambda}_{b_i}}$ is a memoryless winning strategy 
for the one-dimensional mean-payoff game with weight function $w\cdot\vec{\lambda}_{b_i}$;
and hence it follows that for every cycle $C$ in the graph $G^{\straa_{\vec{\lambda}_{b_i}}}$ 
the sum of the weights of $C$ according to $w \cdot  \vec{\lambda}_{b_i}$ is non-negative.
Since $\rho_i^2$ is composed of cyclic paths, we must have 
$w(\rho_i^2)^T \cdot \vec{\lambda}_{b_i} \geq 0$; and hence, 
we have $w(\rho_i^2)^T \cdot \VEC{b}_{i-1} \leq 0$.
Thus we get that
\[|\VEC{b}_{i}|=|\VEC{b}_{i-1} + w(\rho_i^2)| = \sqrt{|\VEC{b}_{i-1}|^2+2  \cdot w(\rho_i^2)^T \cdot \VEC{b}_{i-1} + |w(\rho_i^2)|^2} 
\leq \sqrt{|\VEC{b}_{i-1}|^2+ |w(\rho_i^2)|^2}\]
Since $W$ is the maximal absolute value of the weights, it follows that 
$W \cdot \sqrt{k}$ is a bound on the Euclidean norm of any average weight vector.
Since the length of $\rho_i^2$ is at most $i$ (it was a part of the suffix of last 
$i$ rounds)
we get that
\[|\VEC{b}_{i}|\leq \sqrt{|\VEC{b}_{i-1}|^2 + k \cdot W^2 \cdot i^2}.\]
By a simple induction we obtain that 
$|\VEC{b}_i| \leq  \sqrt{k \cdot W^2 \cdot \sum_{j=1}^i j^2}$.
Thus we have 
\[|\VEC{b}_i| \leq \sqrt{k \cdot W^2 \cdot \sum_{j=1}^i j^2} \leq \sqrt{k \cdot W^2 \cdot i^3}\enspace.\]

We are now ready to compute the the Euclidean norm of the play after the $i$-th iteration.
We denote the weight vector after the $i$-th iteration by $\VEC{x}_i$ and observe that
\[ \VEC{x}_i = \VEC{b}_i + \sum_{j=1}^i w(\rho_j^1) \enspace.\]
By the Triangle inequality we get that 
\[ |\VEC{x}_i| \leq |\VEC{b}_i| + \sum_{j=1}^i |w(\rho_j^1)| \enspace.\]
Since the length of $\rho_i^1$ is at most $n$ and by the bound we obtained over $\VEC{b}_i$ we get that 
\[ 
|\VEC{x}_i| \leq \sqrt{k\cdot W^2 \cdot i^3} + i\cdot n\cdot W \cdot \sqrt{k}
\]
For a position $j$ of the play between iteration $i$ and iteration $i+1$, 
let us denote by $\vec{y}_j$ the weight vector after the play prefix
at position $j$.
Since there are  $i$ steps played in iteration $i$ 
we have $|\vec{y}_j| \leq |\vec{x}_i| + i \cdot W \cdot \sqrt{k}$.
Finally, since after the $(i-1)$-th iteration $\sum_{t=1}^{i-1} t= i\cdot(i-1)/2$ rounds were played, 
we get that the Euclidean norm of the average weight vector, namely,
$|\frac{\VEC{y}_j}{j}| \leq \frac{|\VEC{y}_j|}{i\cdot(i-1)/2}$, 
tends to zero as $i$ tends to infinity.
Formally we have 
\[
\lim_{j \to \infty} \frac{|\vec{y}_j|}{j} \leq 
\lim_{i \to \infty} 
\frac{\sqrt{k\cdot W^2 \cdot i^3} + i\cdot n \cdot W \cdot \sqrt{k} + i \cdot W \cdot \sqrt{k}}{i\cdot(i-1)/2} =0
\]
It follows that the limit average of the weight vectors is zero
and hence the desired result follows. 
\end{proof}

Lemma~\ref{lem:ReductionToOneDimEasyDirection} and Lemma~\ref{lem:ReductionToOneDimHardDirection}
suggest that in order to check if player-2 winning region is non-empty in a 
multidimensional mean-payoff game it suffices to go over all (uncountably many) $\vec{\lambda}\in\R^k$ 
and check whether player-2 winning region is non-empty in the one-dimensional mean-payoff game 
with weight function $w\cdot \vec{\lambda}$. 
The next lemma shows that we need to consider only finitely many vectors; and 
we first introduce some notations that we will use.

\smallskip\noindent{\em Notations.}
For the rest of this section, we denote $M = (k\cdot n\cdot W)^{k+1}$, where $W$ is 
the maximal absolute value of the weight function.
For a positive integer $\ell$, we will denote by $\Z^{\pm}_\ell=\set{i \mid -\ell \leq i \leq \ell}$
(resp.  $\Z^+_\ell =\set{i \mid 1 \leq i \leq \ell}$) the set of integers (resp. positive integers) 
from $-\ell$ to $\ell$.

\begin{lem}\label{lem:FinitelyManyLambda}
Let $G$ be a game graph with a multidimensional mean-payoff objective with a weight function $w$.
There exists $\vec{\lambda}_0\in \R^k$ for which player-2 winning region is non-empty in $G$ 
for the one-dimensional mean-payoff objective with weight function $w\cdot \vec{\lambda}_0$ 
if and only if there exists $\vec{\lambda}\in (\Z^{\pm}_M)^k$ 
such that the player-2 winning region is non-empty in $G$ for the one-dimensional mean-payoff 
objective with weight function $w\cdot \vec{\lambda}$.
\end{lem}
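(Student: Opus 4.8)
The $(\Leftarrow)$ direction is immediate since $(\Z^{\pm}_M)^k\subseteq\R^k$, so the work is in $(\Rightarrow)$. The plan is to use memoryless determinacy to replace ``player~2 wins'' by a finite system of linear inequalities on $\vec{\lambda}$, and then to round a real solution to a small integer one. First I would record the cycle characterization of one-dimensional mean-payoff games: for any real-valued weight function $w'$, player~2 has a winning strategy from a vertex $v$ iff there is a \emph{memoryless} player-2 strategy $\strab$ such that every simple cycle of $G^{\strab}$ reachable from $v$ has strictly negative $w'$-weight. The ``only if'' uses memoryless determinacy~\cite{EM79} (if some reachable simple cycle were non-negative, player~1 could reach it and loop forever, achieving $\LimAvg\ge 0$); the ``if'' holds because, with $\strab$ fixed, any play prefix from $v$ in $G^{\strab}$ decomposes into simple cycles (each of $w'$-weight at most $-\delta$, where $\delta$ is the minimum over the finitely many such cycles of $-w'(C)>0$) plus an acyclic remainder of length $<n$, which forces the averages to have $\limsup\le-\delta/n<0$.

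Now assume $\Win^2_{\vec{\lambda}_0}\neq\emptyset$, fix $v_0\in\Win^2_{\vec{\lambda}_0}$ and a memoryless player-2 strategy $\strab$ witnessing this for $w'=w\cdot\vec{\lambda}_0$ as above, and set
\[
\mathcal{C}=\bigl\{\,w(C)\in\Z^k \;:\; C \text{ is a simple cycle of } G^{\strab} \text{ reachable from } v_0\,\bigr\}.
\]
Then $\mathcal{C}$ is finite and, since a simple cycle has at most $n$ edges and each edge weight has entries of absolute value at most $W$, we have $\mathcal{C}\subseteq(\Z^{\pm}_{nW})^k$. By the cycle characterization, $\vec{c}^{T}\vec{\lambda}_0<0$ for every $\vec{c}\in\mathcal{C}$; letting $\delta:=\min_{\vec{c}\in\mathcal{C}}(-\vec{c}^{T}\vec{\lambda}_0)>0$ and replacing $\vec{\lambda}_0$ by $\vec{\lambda}_0/\delta$, we see that the linear system $\{\vec{c}^{T}\vec{\lambda}\le -1 : \vec{c}\in\mathcal{C}\}$ in the unknowns $\lambda_1,\dots,\lambda_k$, all coefficients in $\Z^{\pm}_{nW}$, is feasible over $\R$. (It suffices to control \emph{simple} cycles, since any closed walk of $G^{\strab}$ reachable from $v_0$ decomposes into reachable simple cycles and its weight vector is their sum, so negativity on $\mathcal{C}$ propagates to all cycles of $G^{\strab}$ reachable from $v_0$.)

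The heart of the argument is to extract a small integer solution of this feasible integral system. I would use the standard fact that a nonempty polyhedron $P=\{\vec{\lambda}:A\vec{\lambda}\le\vec{b}\}$ with integral $A,\vec{b}$ has a point on a minimal face; after restricting to $\mathrm{span}(\mathcal{C})$ (any line contained in $P$ is orthogonal to every $\vec{c}\in\mathcal{C}$, so nothing is lost), such a point is determined by at most $k$ tight constraints, and by Cramer's rule and Hadamard's inequality its coordinates are rationals $p_i/q$ with $q\ge 1$ and $|p_i|,q\le k!\,(nW)^k\le (k\cdot n\cdot W)^k$. Clearing denominators, $\vec{\lambda}:=q\vec{\lambda}^{*}\in\Z^k$ satisfies $\|\vec{\lambda}\|_\infty\le (k\cdot n\cdot W)^k\le (k\cdot n\cdot W)^{k+1}=M$, i.e.\ $\vec{\lambda}\in(\Z^{\pm}_M)^k$, and $\vec{c}^{T}\vec{\lambda}=q\,\vec{c}^{T}\vec{\lambda}^{*}\le-q<0$ for every $\vec{c}\in\mathcal{C}$. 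Hence every simple cycle of $G^{\strab}$ reachable from $v_0$ is strictly negative for $w\cdot\vec{\lambda}$, so by the cycle characterization $\strab$ remains a player-2 winning strategy from $v_0$ in the one-dimensional game with weight function $w\cdot\vec{\lambda}$; thus $v_0\in\Win^2_{\vec{\lambda}}$ with $\vec{\lambda}\in(\Z^{\pm}_M)^k$, as required.

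I expect the main obstacles to be the two points that make the rounding delicate: (i) preserving \emph{strict} negativity, handled by first scaling the real witness so that $\vec{c}^{T}\vec{\lambda}\le-1$ (a positive integer multiple then still lands strictly below $0$); and (ii) the feasible polyhedron possibly having no vertex, handled by passing to $\mathrm{span}(\mathcal{C})$, where the lineality space collapses, so a minimal face is bounded-dimensional and Cramer's rule applies. Everything else is a routine use of memoryless determinacy of mean-payoff games and standard size bounds for solutions of integral linear systems.
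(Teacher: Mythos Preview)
Your proof is correct and follows essentially the same route as the paper: fix a memoryless player-2 winning strategy, collect the weight vectors of the reachable simple cycles in $G^{\strab}$ (all in $(\Z^{\pm}_{nW})^k$), observe that $\vec{\lambda}_0$ makes them all strictly negative, and then extract a small integer $\vec{\lambda}\in(\Z^{\pm}_M)^k$ that does the same, so the very same strategy $\strab$ witnesses $\Win^2_{\vec{\lambda}}\neq\emptyset$. The only difference is that the paper outsources the ``small integer solution'' step to \cite[Lemma~2, items c and d]{Papa81}, whereas you re-derive it via minimal faces and Cramer/Hadamard; your handling of the lineality space by passing to $\mathrm{span}(\mathcal{C})$ is a bit terse (one must check that the resulting tight system still has integer data of size $\le nW$ in suitable coordinates, e.g.\ by setting the free coordinates to $0$ rather than literally changing basis), but this is exactly the standard proof of Papadimitriou's bound and the conclusion is the same.
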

\begin{proof}
Suppose that player~2 has a memoryless winning strategy $\strab$ in $G$ from an 
initial vertex $v_0$ for the one-dimensional mean-payoff objective with weight 
function $w\cdot \vec{\lambda}_0$.
Let $C_1,\dots,C_m$ be the simple cycles that are reachable from $v_0$ in the graph $G^\strab$.
Since $\strab$ is a player-2 winning strategy it follows that $w(C_i)^T \cdot \vec{\lambda}_0  < 0$ 
for every $i\in\{1,\dots,m\}$.
We note that for all $1 \leq i \leq m$ we have $w(C_i)\in (\Z^{\pm}_{n \cdot W})^k$
(since $C_i$ is a simple cycle, in every dimension the sum of the weights is 
between $-n \cdot W$ and $n \cdot W$).
Then by~\cite[Lemma~2, items c and d]{Papa81} it follows that there is a vector of integers 
$\vec{\lambda}$ such that $w(C_i)^T\cdot \vec{\lambda} \leq -1 < 0$, for all $1 \leq i \leq m$; 
and $\vec{\lambda} \in (\Z^{\pm}_M)^{k}$.
Since all the reachable cycles from $v_0$ in $G^{\strab}$ are negative according to $w\cdot \vec{\lambda}$, 
we get that $\strab$ is a winning strategy for the one-dimensional mean-payoff game with weight function 
$w\cdot \vec{\lambda}$; and 
hence the proof for the direction from left to right follows.
The proof for the converse direction is trivial.
\end{proof}

The next lemma removes the two assumptions of 
Lemma~\ref{lem:ReductionToOneDimEasyDirection}.

\begin{lem}\label{lem:WithoutAssump}
Let $G$ be a game graph with a multidimensional mean-payoff objective with a weight function $w$.
The following assertions hold:
(1)~$
\bigcup_{\vec{\lambda}\in (\Z_M^{+})^k} \Win^2_{\vec{\lambda}} 
\subseteq \Win^2$.
(2)~If 
$
\bigcup_{\vec{\lambda}\in (\Z_M^{+})^k} \Win^2_{\vec{\lambda}}=\emptyset$, 
then $\Win^2 = \emptyset$.
\end{lem}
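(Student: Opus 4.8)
The plan is to reduce to the two preceding lemmas by a gadget construction that forces both Assumption~1 and Assumption~2 to hold, while preserving the player-2 winning region up to the obvious correspondence. Concretely, from $G$ with weight function $w$ I would build a new game graph $\widehat{G}$ with weight function $\widehat{w}$ as follows. First, to enforce Assumption~1, subdivide every edge $(u,v)$ with $v\in V_2$ by inserting a fresh player-1 vertex $x_{(u,v)}$ with a single outgoing edge to $v$; put the original weight $w(u,v)$ on the edge $(u,x_{(u,v)})$ and weight $\vec 0$ on $(x_{(u,v)},v)$. (Edges into player-1 vertices are left untouched.) This at most doubles the number of vertices and does not change $W$. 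Second, to enforce Assumption~2, attach to every player-1 vertex $v$ (old or newly added) the $k$ self-loops $e_1,\dots,e_k$ with $\widehat w_i(e_j)=0$ for $i\ne j$ and $\widehat w_i(e_i)=-1$. These self-loops only help player~1 (they can always be ignored, and when used they only decrease the payoff in the chosen dimension), so adding them cannot shrink the player-2 winning region; and subdividing edges through degree-one vertices clearly does not change who wins from the original vertices. Hence the player-2 winning region of $\widehat G$ for the multidimensional objective, restricted to $V$, equals $\Win^2$, and the analogous statement holds dimension-wise for each $w\cdot\vec\lambda$ versus $\widehat w\cdot\vec\lambda$.

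Now I apply the earlier lemmas to $\widehat G$. For part~(1): fix $\vec\lambda\in(\Z_M^{+})^k$ and suppose $\Win^2_{\vec\lambda}\ne\emptyset$ in $G$; then the corresponding player-2 winning region is non-empty in $\widehat G$ for $\widehat w\cdot\vec\lambda$, so by Lemma~\ref{lem:ReductionToOneDimEasyDirection} (applicable since $\widehat G$ satisfies both assumptions) the player-2 winning region for the multidimensional objective in $\widehat G$ is non-empty, hence so is $\Win^2$ in $G$. Taking the union over all such $\vec\lambda$ gives $\bigcup_{\vec\lambda\in(\Z_M^{+})^k}\Win^2_{\vec\lambda}\subseteq\Win^2$. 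For part~(2): suppose $\bigcup_{\vec\lambda\in(\Z_M^{+})^k}\Win^2_{\vec\lambda}=\emptyset$. I need to conclude $\Win^2=\emptyset$. By Lemma~\ref{lem:ReductionToOneDimHardDirection} (which needs no assumptions) it suffices to show $\Win^2_{\vec\lambda}=\emptyset$ for \emph{all} $\vec\lambda\in\R^k$. By Lemma~\ref{lem:FinitelyManyLambda}, if some real $\vec\lambda_0$ had $\Win^2_{\vec\lambda_0}\ne\emptyset$, then some integer $\vec\lambda\in(\Z_M^\pm)^k$ would too. So it remains to rule out $\vec\lambda$ with some non-positive coordinate. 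This is exactly the observation used at the start of the proof of Lemma~\ref{lem:ReductionToOneDimEasyDirection}: passing to $\widehat G$ (which has Assumption~1 and the $k$ negative self-loops of Assumption~2 at every player-1 vertex), if $\lambda_i\le 0$ then the $i$-th self-loop has weight $-\lambda_i\cdot$(sign)\,$\ge 0$ under $\widehat w\cdot\vec\lambda$, i.e.\ weight $\lambda_i\cdot(-1)\ge 0$, so player~1 wins the one-dimensional game everywhere and $\Win^2_{\vec\lambda}=\emptyset$. Hence the only integer vectors that could give a non-empty player-2 winning region lie in $(\Z_M^{+})^k$, and by hypothesis none of them do; therefore $\Win^2_{\vec\lambda}=\emptyset$ for every real $\vec\lambda$, and Lemma~\ref{lem:ReductionToOneDimHardDirection} yields $\Win^2=\emptyset$.

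The main obstacle, and the only place that needs genuine care, is verifying that the gadget construction $\widehat G$ preserves the player-2 winning regions \emph{exactly} on $V$ — both for the multidimensional objective and simultaneously for every scalarization $w\cdot\vec\lambda$ — so that the finitely-many-$\vec\lambda$ bound $M=(k\cdot n\cdot W)^{k+1}$ computed for $G$ is still the right bound after the construction. Since the subdivision vertices are deterministic and zero-weight and the extra self-loops are at player-1 vertices, a short strategy-translation argument (a play in $\widehat G$ projects to a play in $G$ with the same limit-average vector once the finitely-used subdivision steps and the monotone self-loop steps are accounted for, and conversely) closes this gap; the vertex count of $\widehat G$ is still $O(n+m)$, but one should check the bound $M$ is stated in terms of the original $n$, which it is, and that enlarging the graph only enlarges the set of simple-cycle weight vectors within $(\Z_\ell^\pm)^k$ for the same $\ell=n\cdot W$ up to a harmless constant — alternatively, and more cleanly, one applies Lemma~\ref{lem:FinitelyManyLambda} directly to $\widehat G$ and notes its conclusion is insensitive to these polynomial blow-ups since $M$ as used here is already a valid (possibly loose) bound. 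Everything else is a direct invocation of Lemmas~\ref{lem:ReductionToOneDimEasyDirection}, \ref{lem:ReductionToOneDimHardDirection}, and~\ref{lem:FinitelyManyLambda}.
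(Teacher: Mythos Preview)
Your overall strategy---build $\widehat{G}$ satisfying Assumptions~1 and~2, transfer between $G$ and $\widehat{G}$, and invoke Lemmas~\ref{lem:ReductionToOneDimEasyDirection}--\ref{lem:FinitelyManyLambda}---is exactly the paper's. But there is a genuine error in your setup that breaks part~(2).

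You assert that the winning regions of $G$ and $\widehat{G}$ agree on $V$ ``simultaneously for every scalarization $w\cdot\vec\lambda$.'' This is \emph{false} whenever $\vec\lambda$ has a non-positive coordinate. Take $k=1$, $\vec\lambda=(-1)$, and $G$ a single player-1 vertex with a self-loop of weight $+1$: under $w\cdot\vec\lambda$ that edge has weight $-1$, so player~2 wins everywhere in $G$; but in $\widehat{G}$ the added Assumption-2 self-loop has weight $+1$ under $\widehat{w}\cdot\vec\lambda$, so player~1 wins. The one-dimensional correspondence holds \emph{only} when all coordinates of $\vec\lambda$ are strictly positive (this is exactly the content of the paper's Claim~(ii)). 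Consequently your step ``it remains to rule out $\vec\lambda$ with some non-positive coordinate'' fails: passing to $\widehat G$ shows $\Win^2_{\vec\lambda}=\emptyset$ in $\widehat{G}$ for such $\vec\lambda$, but you applied Lemmas~\ref{lem:ReductionToOneDimHardDirection} and~\ref{lem:FinitelyManyLambda} to $G$, so you need $\Win^2_{\vec\lambda}=\emptyset$ in $G$, and that is in general not true.

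The fix, which is what the paper does, is to apply Lemmas~\ref{lem:ReductionToOneDimHardDirection} and~\ref{lem:FinitelyManyLambda} to $\widehat{G}$ rather than to $G$: from $\Win^2\ne\emptyset$ in $G$ deduce the same in $\widehat{G}$ (multidimensional correspondence, which does hold), obtain some $\vec\lambda\in(\Z_M^{\pm})^k$ with $\Win^2_{\vec\lambda}\ne\emptyset$ in $\widehat{G}$, use the self-loops of $\widehat{G}$ to force $\vec\lambda\in(\Z_M^{+})^k$, and \emph{only then} transfer back to $G$ via the positive-$\vec\lambda$ correspondence. Your worry about the bound $M$ is then harmless: simple-cycle weight vectors in $\widehat{G}$ are either Assumption-2 loops (entries in $\{-1,0\}$) or correspond to simple cycles of $G$ with unchanged weight, so they still lie in $(\Z_{nW}^{\pm})^k$ and the bound $M=(knW)^{k+1}$ from the proof of Lemma~\ref{lem:FinitelyManyLambda} applies to $\widehat{G}$ as stated. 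One minor additional point: your part-(1) argument as written only shows $\Win^2_{\vec\lambda}\ne\emptyset\Rightarrow\Win^2\ne\emptyset$, not the inclusion; track a single vertex $v$ through the chain, using that Lemma~\ref{lem:ReductionToOneDimEasyDirection} already gives $\Win^2_{\vec\lambda}\subseteq\Win^2$.
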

\begin{proof}
We first show how to construct a game graph $\wh{G}$ from $G$ that satisfies the 
two assumptions (Assumption~1 and Assumption~2) and has the same winning regions 
(for the multidimensional objective) as in $G$.
\begin{enumerate}
\item \emph{(Assumption~1).}
Given any game graph $G$ there exists a linear transformation to satisfy 
Assumption~1 by simply adding a dummy vertex for every outgoing edge
of a player~2 vertex (i.e., for every edge $e=(u,v)$ with $u, v \in V_2$,
we add a vertex $e$, edges $(u,e)$ with weight $w(e)$ and $(e,v)$ with 
weight $\VEC{0}$, and $e$ is a player-1 vertex with a single outgoing 
edge).

\item \emph{(Assumption~2).}
First, note that adding several self-loop edges creates a multi-graph, but 
a dummy player-2 vertex can be put for every such edge to ensure that 
we do not have a multi-graph. 
Second we observe that adding the self-loop edges of Assumption~2 
do not affect winning 
for player~1, as if there is a winning strategy for player~1, then 
there is one that never chooses the self-loop edges of Assumption~2 
because the self-loop edges are non-positive in every dimension and negative
in one dimension. 
\end{enumerate}
For a game graph $G$ we denote by $\wh{G}$ the graph that is formed by the 
transformations above.
We now establish the following claim:

\smallskip\noindent{\em Claim.} The following two properties hold for the game 
graph $\wh{G}$:
(i)~if a vector $\vec{\lambda}$ is non-positive in (at least) one dimension, then player-2 winning region 
in $\wh{G}$ for the one-dimensional mean-payoff objective with weight function $w\cdot\vec{\lambda}$ is empty; and
(ii)~if a vector $\vec{\lambda}$ is positive in all dimensions, then player-2 winning region in $G$ and in $\wh{G}$ 
is the same for the one-dimensional mean-payoff objective with weight function $w\cdot\vec{\lambda}$.
The first item of the claim holds due to the self-loops of Assumption~2, and Assumption~1 ensures that 
a player-1 vertex is reached within two steps (the same reasoning as used in Lemma~\ref{lem:ReductionToOneDimEasyDirection}).
The second item of the claim holds because the weight of any simple cycle in $G$ is the same as in $\wh{G}$, 
and the weight of Assumption~2 self-loops are non-positive in every dimension and negative in one dimension 
(since $\vec{\lambda}$ is positive in all dimensions). 
Hence, a memoryless winning strategy in $G$ is also winning in $\wh{G}$ and vice-versa.

We now prove the two assertions of the lemma.
\begin{enumerate}
\item \emph{(First assertion).}
Consider that in $G$ we have $v\in\Win^2_{\vec{\lambda}}$, for some vertex $v$ 
and a vector $\vec{\lambda}\in(\Z_M^{+})^k$.
Then by the second item of the claim we get that $v\in\Win^2_{\vec{\lambda}}$ 
also in $\wh{G}$, and then by Lemma~\ref{lem:ReductionToOneDimEasyDirection} 
we get that $v\in\Win^2$ (in $\wh{G}$).
Finally, by the definition of the transformations, we get that player~2 is 
winning from $v$ for the multidimensional mean-payoff objective in $\wh{G}$ 
if and only if player~2 is winning from $v$ for the multidimensional 
mean-payoff objective in $G$.
Thus $v\in\Win^2$ in $G$ and the first assertion follows.

\item \emph{(Second assertion).}
For the second assertion consider that $\Win^2\neq \emptyset$ (in $G$) 
and we show that for some $\vec{\lambda}\in (\Z_M^{+})^k$ we have 
$\Win^2_{\vec{\lambda}} \neq \emptyset$ (in $G$).
Suppose that $v\in\Win^2$ for some vertex $v$ in $G$.
Then by the definition of the transformation we have that $v \in\Win^2$ 
also in $\wh{G}$.
By Lemma~\ref{lem:ReductionToOneDimHardDirection} and 
Lemma~\ref{lem:FinitelyManyLambda} it follows that there is 
$\vec{\lambda}\in(\Z_M^{\pm})^k$ such that $v\in\Win^2_{\vec{\lambda}}$ 
(in $\wh{G})$. 
By the first item of the claim we get that $\vec{\lambda}\in(\Z_M^{+})^k$.
Finally, by the second item of the claim, and since 
$\vec{\lambda}\in(\Z_M^{+})^k$, we get that $v\in\Win^2_{\vec{\lambda}}$ also 
in  $G$, and thus the second assertion follows.
\end{enumerate}
The desired result follows.
\end{proof}

To use the result of Lemma~\ref{lem:WithoutAssump} iteratively
to solve finite-state games with multidimensional mean-payoff objectives,
we need the notion of \emph{attractors}.
For a set $U$ of vertices,  $\AttrTwo{U}$ is defined inductively as follows:
$U_0=U$ and for all $i\geq 0$ we have 
$U_{i+1} = U_i \cup \Set{v \in V_1 \mid \Out(v) \subseteq U_i} \cup 
\Set{v \in V_2 \mid \Out(v) \cap U_i \neq \emptyset}$, 
and  $\AttrTwo{U}= \bigcup_{i\geq 0} U_i$.
Intuitively, from $U_{i+1}$ player~2 can ensure to reach $U_i$ in one step 
against all strategies of player~1, and thus  $\AttrTwo{U}$ is the set of 
vertices such that player~2 can ensure to reach $U$ against all strategies of 
player~1 in finitely many steps.
The set $\AttrTwo{U}$ can be computed in linear time~\cite{Immerman81,Bee80}.
Observe that if $G$ is a game graph, then for all $U$, 
the game graph induced by the set $V \setminus  \AttrTwo{U}$ 
is also a game graph (i.e., all vertices in $V \setminus \AttrTwo{U}$ 
have outgoing edges in $V \setminus \AttrTwo{U}$).
The following lemma shows that in multidimensional mean-payoff games,
if $U$ is a set of vertices such that player~2 has a winning strategy
from every vertex in $U$, then player~2 has a winning strategy 
from all vertices in $\AttrTwo{U}$, and we can recurse in the game 
graph after removal of $\AttrTwo{U}$.

\begin{lem}\label{lem:iter}
Consider a multidimensional mean-payoff game $G$ with weight function $w$.
Let $U$ be a set of vertices such that from all vertices in $U$ there is a 
winning strategy for player~2.
Then the following assertions hold:
(1)~From all vertices in  $\AttrTwo{U}$ there is a winning strategy for 
player~2.
(2)~Let $Z$ be the set of vertices in the game graph induced after removal of 
$\AttrTwo{U}$ such that from all vertices in $Z$ player~2 has a winning 
strategy in the remaining game graph. 
Then  from all vertices in $Z$, player~2 has a winning strategy in the original 
game graph.
\end{lem}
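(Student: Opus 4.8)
The plan is to prove the two assertions in sequence, with assertion~(1) being the technical heart and assertion~(2) being a short bootstrapping argument on top of it.

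For assertion~(1), I would argue that player~2 can play in two phases: an \emph{attractor phase} and a \emph{winning phase}. From any vertex $v \in \AttrTwo{U}$, the standard attractor strategy lets player~2 force the play into $U$ within at most $|V|$ steps, regardless of player~1's choices; once the token reaches some $u \in U$, player~2 switches to a winning strategy from $u$ for the multidimensional mean-payoff objective (which exists by hypothesis). The key observation is that a mean-payoff objective is a \emph{prefix-independent} (tail) objective: adding a bounded-length prefix of weight at most $|V| \cdot W$ in absolute value in each dimension does not change $\LimAvg$ of the infinite play, since $\frac{1}{t}\cdot(\text{bounded quantity}) \to 0$. Hence the finite attractor prefix is irrelevant, and the resulting combined strategy is winning from $v$. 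I would state this prefix-independence explicitly (it follows directly from the definition of $\LimInfAvg$ in the excerpt) as it is the crux; everything else is bookkeeping about switching strategies.

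For assertion~(2), let $G'$ be the game graph induced on $V \setminus \AttrTwo{U}$ (which is a genuine game graph, as noted just before the lemma), and let $Z \subseteq V \setminus \AttrTwo{U}$ be a set from which player~2 wins in $G'$. I would define a player-2 strategy in the original $G$ as follows: from a vertex in $Z$, player~2 plays the $G'$-winning strategy \emph{as long as the play stays inside $V \setminus \AttrTwo{U}$}; note that player~1 cannot force the play to leave $V \setminus \AttrTwo{U}$, because $V \setminus \AttrTwo{U}$ is closed under player~1's moves from player~1 vertices and player~2 simply never chooses an edge leaving it — indeed such edges would only go into $\AttrTwo{U}$, and the $G'$-strategy by construction uses only edges within $G'$. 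Thus every play consistent with this strategy starting in $Z$ is entirely a play of $G'$ consistent with the $G'$-winning strategy, hence violates the multidimensional mean-payoff objective, so player~2 wins from $Z$ in $G$ as well.

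The main obstacle I anticipate is purely one of care rather than difficulty: making sure the strategy-switching in assertion~(1) is well-defined (the switch point depends on the history, which is fine since strategies may use memory) and that one correctly invokes prefix-independence of $\LimInfAvg$ in every dimension simultaneously. A secondary subtlety is confirming that the $G'$ of assertion~(2) really is a valid game graph with no dead ends, but that has already been recorded in the text immediately preceding the lemma, so it can be cited rather than re-proved.
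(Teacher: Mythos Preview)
Your argument for assertion~(1) is correct and matches the paper's approach exactly; your explicit appeal to prefix-independence of $\LimInfAvg$ is a welcome clarification that the paper leaves implicit.

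For assertion~(2), however, there is a small but genuine error. You claim that ``$V \setminus \AttrTwo{U}$ is closed under player~1's moves from player~1 vertices.'' This is false: a player-1 vertex $v \in V_1 \setminus \AttrTwo{U}$ is only guaranteed to have \emph{some} successor outside $\AttrTwo{U}$ (otherwise $v$ would already belong to $\AttrTwo{U}$ by the attractor definition), but it may well have other successors \emph{inside} $\AttrTwo{U}$. So player~1 can voluntarily steer the play into $\AttrTwo{U}$, and your $G'$-strategy is undefined there; consequently the claim that ``every play consistent with this strategy starting in $Z$ is entirely a play of $G'$'' does not hold.

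The fix is easy and is precisely what the paper does: combine the $G'$-winning strategy with the strategy from assertion~(1). Player~2 follows the $G'$-strategy while the play stays in $V \setminus \AttrTwo{U}$; if player~1 ever moves into $\AttrTwo{U}$, player~2 switches to the winning strategy guaranteed by part~(1). Either the play stays forever in $G'$ and the $G'$-strategy wins, or it enters $\AttrTwo{U}$ and part~(1) applies. This case split is the missing piece in your plan.
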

\begin{proof}
The proof of the first item is as follows: from vertices in  $\AttrTwo{U}$ 
first consider a strategy to ensure to reach $U$ (within finitely many 
steps), and once $U$ is reached switch to a winning strategy from vertices
in $U$.
The proof of second item is as follows: fix a winning strategy in 
the remaining game graph for vertices in $Z$ and a winning strategy from  $\AttrTwo{U}$
for player~2. Consider any counter strategy for player~1. 
If  $\AttrTwo{U}$ is ever reached, then the winning strategy from  $\AttrTwo{U}$
ensures winning for player~2, and otherwise the winning strategy of 
the remaining game graph ensures winning.
\end{proof}

\smallskip\noindent{\bf Algorithm.} We now present our iterative algorithm 
that is based on Lemma~\ref{lem:WithoutAssump} and Lemma~\ref{lem:iter}.
In the current iteration $i$ of the game graph execute the following steps:
sequentially iterate over vectors $\vec{\lambda}\in(\Z^{+}_M)^k$;
and if for some $\vec{\lambda}$ we obtain a non-empty set $U$ of winning vertices 
for player~2 for the one-dimensional mean-payoff objective with weight function 
$w \cdot \vec{\lambda}$
in the current game graph, remove $\AttrTwo{U}$ from the current game graph and 
proceed to iteration $i+1$.
Otherwise if for all $\vec{\lambda}\in(\Z^{+}_M)^k$,
player~1 wins from all vertices for the one-dimensional mean-payoff objective
with weight function $w\cdot \vec{\lambda}$, then the set of current vertices is the set of 
winning vertices for player~1.
The correctness of the algorithm follows from Lemma~\ref{lem:WithoutAssump}
and Lemma~\ref{lem:iter}.

\smallskip\noindent{\bf Complexity.}
The algorithm has at most $n$ iterations, and each iteration solves at most 
$O(M^{k})$ one-dimensional mean-payoff games.
Thus the iterative algorithm requires to solve 
$O(n \cdot M^{k})$ one-dimensional mean-payoff 
games with $m$ edges, $n$ vertices, and the maximal weight is at most 
$k \cdot W\cdot M$.
Since one-dimensional mean-payoff games with $n$ vertices, $m$ edges, and 
maximal weight $W$ can be solved in time $O(n \cdot m \cdot W)$~\cite{BCDGR11},
we obtain the following result.


\begin{thm}\label{thm1}
The set of winning vertices for player~1 in a multidimensional mean-payoff 
game with $n$ vertices, $m$ edges, $k$-dimensions, and maximal absolute weight 
$W$ can be computed in time 
$
O(n^2 \cdot m \cdot k \cdot W \cdot  (k \cdot n \cdot W)^{k^2 + 2\cdot k +1})$.
\end{thm}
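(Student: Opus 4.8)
The plan is that there is essentially nothing left to prove beyond careful bookkeeping: correctness of the iterative algorithm is already delivered by Lemma~\ref{lem:WithoutAssump} and Lemma~\ref{lem:iter}, so the task is only to bound the running time. I would structure the argument around three ingredients: (i) the number of outer iterations (stages) is at most $n$; (ii) each stage reduces to solving $O(M^k)$ one-dimensional mean-payoff games of bounded weight, where $M=(k\cdot n\cdot W)^{k+1}$; and (iii) multiplying these bounds and substituting the value of $M$ yields the stated expression.

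First I would bound the number of stages. In each stage the algorithm either halts or finds a vector $\vec{\lambda}\in(\Z^{+}_M)^k$ with a non-empty player-2 winning set $U$ (in the current sub-game, for the one-dimensional objective $w\cdot\vec{\lambda}$), and then removes $\AttrTwo{U}$ and recurses. Since $U\subseteq\AttrTwo{U}$ and $U\neq\emptyset$, the removed set is non-empty, and by the observation preceding Lemma~\ref{lem:iter} the induced sub-game on the surviving vertices is again a well-formed game graph. Hence the vertex set strictly shrinks at every non-terminal stage, so there are at most $n$ stages; the output partition into the two winning regions is then justified by Lemma~\ref{lem:iter} (removed vertices are player-2 winning in the original game) together with Lemma~\ref{lem:WithoutAssump}(2) applied to the final sub-game (where, for every $\vec{\lambda}\in(\Z^{+}_M)^k$, player~1 wins everywhere, so $\Win^2=\emptyset$ there).

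Next I would bound the cost of one stage. The set $(\Z^{+}_M)^k$ has $M^k$ elements, so a stage iterates over $M^k$ vectors $\vec{\lambda}$. For a fixed such $\vec{\lambda}$, each edge $e$ gets weight $w(e)^T\cdot\vec{\lambda}$ of absolute value at most $\sum_{i=1}^k|w_i(e)|\cdot|\lambda_i|\le k\cdot W\cdot M$, so we solve a one-dimensional mean-payoff game with $n$ vertices, $m$ edges, and maximal absolute weight $O(k\cdot W\cdot M)$; by the pseudo-polynomial algorithm of~\cite{BCDGR11} this takes time $O(n\cdot m\cdot k\cdot W\cdot M)$. Thus a stage costs $O\!\left(M^k\cdot n\cdot m\cdot k\cdot W\cdot M\right)=O\!\left(n\cdot m\cdot k\cdot W\cdot M^{k+1}\right)$.

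Finally, multiplying by the bound of $n$ on the number of stages gives total time $O\!\left(n^2\cdot m\cdot k\cdot W\cdot M^{k+1}\right)$, and substituting $M=(k\cdot n\cdot W)^{k+1}$ gives $M^{k+1}=(k\cdot n\cdot W)^{(k+1)^2}=(k\cdot n\cdot W)^{k^2+2k+1}$, which is exactly the claimed bound. I do not expect any real obstacle here; the only slightly delicate points are verifying the weight bound $k\cdot W\cdot M$ in the reduced one-dimensional games (which controls the pseudo-polynomial factor) and checking the exponent arithmetic $(k+1)^2=k^2+2k+1$ — everything else is a direct assembly of the preceding lemmas.
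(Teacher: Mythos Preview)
Your proposal is correct and follows essentially the same approach as the paper: bound the number of outer iterations by $n$, bound the per-iteration work by $M^k$ one-dimensional mean-payoff games each with maximal absolute weight $k\cdot W\cdot M$ solved in $O(n\cdot m\cdot k\cdot W\cdot M)$ via~\cite{BCDGR11}, and then substitute $M=(k\cdot n\cdot W)^{k+1}$ to obtain the exponent $(k+1)^2=k^2+2k+1$. If anything, you spell out the correctness justification (via Lemmas~\ref{lem:WithoutAssump} and~\ref{lem:iter}) in slightly more detail than the paper does.
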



\subsection{Hardness for fixed parameter tractability}
In this subsection we will reduce finite-state parity games to 
finite-state multidimensional mean-payoff games with weights bounded 
linearly by the number of vertices.
Note that our reduction is different from the standard reduction of parity games 
to one-dimensional mean-payoff games where exponential weights
are necessary~\cite{Jur98}.
We start with the definition of parity games.

\smallskip\noindent{\bf Parity games.}
A parity game consists of a finite-state game graph $G$ along with a priority 
function $p:E\to\RangeSet{1}{k}$ that maps every edge to a natural number 
(the priority).
The objective of player 1 is to ensure that the \emph{minimal} priority 
that occurs infinitely often in a play is \emph{even}, and 
the goal of player~2 is the complement.
The memoryless determinacy of parity games shows that for both 
players if there is a winning strategy, then there is a memoryless
winning strategy~\cite{EJ91}.

\Heading{The reduction.}
Given a game graph $G$ with priority function $p$ we construct a multidimensional 
mean-payoff objective with weight function $w$ of $k$ dimensions on $G$ as
follows:  for every $i\in\RangeSet{1}{k}$ we assign $w_i(e)$ as follows:
\begin{itemize}
\item $0$ if $p(e) > i$;
\item $-1$ if $p(e) \leq i$ and $p(e)$ is odd; and 
\item $n$ if $p(e) \leq i$ and $p(e)$ is even.
\end{itemize}

\begin{lem}\label{lem:Player1WinsParityThenWinsMP}
From a vertex $v$, if player~1 wins the parity game, then she also wins the multidimensional mean-payoff game.
\end{lem}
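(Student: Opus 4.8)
The plan is to exploit the memoryless determinacy of parity games~\cite{EJ91}. If player~1 wins the parity game from $v$, fix a \emph{memoryless} winning strategy $\straa$ for her from $v$, and consider the player-2 graph $G^\straa$ obtained by keeping at each player-1 vertex only the edge selected by $\straa$ (the player-1 analogue of $G^\strab$). The whole argument then rests on a single structural fact about the simple cycles of $G^\straa$ that are reachable from $v$, after which winning in the mean-payoff game is routine.

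First I would show: every simple cycle $C$ of $G^\straa$ reachable from $v$ satisfies $w_i(C) \ge 0$ for all $i$. The starting observation is that the minimal priority on such a $C$ must be \emph{even}: otherwise, since $C$ lies in $G^\straa$ (so player~1's moves along $C$ are exactly those forced by $\straa$), player~2 can first drive the token from $v$ into $C$ and then loop around $C$ forever, yielding a play consistent with $\straa$ whose least priority seen infinitely often is odd --- contradicting that $\straa$ wins the parity game from $v$. Now let $j = \min\{p(e) \mid e\in C\}$, which is even, and fix a dimension $i$. If $i < j$, then every edge $e$ of $C$ has $p(e) \ge j > i$, so $w_i(e)=0$ and $w_i(C)=0$. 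If $i \ge j$, then $C$ contains an edge $e_0$ with $p(e_0)=j$, contributing $w_i(e_0)=n$ (as $j$ is even and $j\le i$), while each of the remaining at most $|C|-1$ edges contributes at least $-1$; since $C$ is simple, $|C|\le n$, so $w_i(C) \ge n-(|C|-1)\ge 1 > 0$.

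Having this fact, I would finish as follows. Take any play $\rho$ consistent with $\straa$ from $v$, and its prefix $\rho_t$ of length $t$. Using the standard decomposition of a path into simple cycles plus a simple acyclic remainder of length at most $n$ (cf.\ Figure~\ref{fig:hard-dir}), write $w(\rho_t)$ as a sum of weights of simple cycles of $G^\straa$ reachable from $v$, plus the weight of an acyclic path $\pi_t$ with $|\pi_t|\le n$. By the structural fact each such cycle contributes $\ge 0$ in every dimension, and since all weights of $w$ lie in $\{-1,0,n\}$ we get $w_i(\rho_t) \ge w_i(\pi_t) \ge -n\cdot|\pi_t| \ge -n^2$ for every $i$. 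Hence $\Avg_i(\rho_t) \ge -n^2/t$, so $\LimAvg_i(\rho) \ge 0$ for all $i$, i.e.\ $\LimAvg(\rho)\ge\VEC{0}$; as $\rho$ was an arbitrary play consistent with $\straa$, the strategy $\straa$ also wins the multidimensional mean-payoff game from $v$.

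I expect the only delicate point to be the bookkeeping in the cycle claim: one must check that the guaranteed single $+n$ coming from a minimum-priority edge outweighs the at most $n-1$ many $-1$'s from the other edges of the cycle, which hinges precisely on the facts that a simple cycle has at most $n$ edges and that the weight component $i$ records \emph{every} priority $\le i$ (and never makes an odd priority worse than $-1$). The remaining ingredients --- memoryless determinacy, the loop-forever reduction to the parity condition, the path/cycle decomposition, and ``uniformly bounded partial sums force a nonnegative $\liminf$ of the averages'' --- are all standard.
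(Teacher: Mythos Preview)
Your proposal is correct and follows essentially the same approach as the paper: fix a memoryless parity-winning strategy $\straa$, observe that every simple cycle of $G^\straa$ reachable from $v$ has even minimum priority, and then verify directly from the definition of $w$ that each such cycle has non-negative weight in every dimension (the single $+n$ edge at the minimum priority outweighs the at most $n-1$ remaining edges). The paper's proof simply asserts both that reachable cycles are even and that non-negative cycle weights yield a winning mean-payoff strategy, whereas you spell out the player-2 ``reach and loop'' argument and the path/cycle decomposition explicitly; these are standard, and your added detail is fine (the bound $w_i(\pi_t)\ge -n\cdot|\pi_t|$ is looser than necessary, since the minimum edge weight is $-1$, but this does not affect the conclusion).
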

\begin{proof}
If player 1 is the winner in the parity game from $v$, then by memoryless determinacy of parity games 
there is memoryless winning strategy $\straa$.
Since $\straa$ is winning in the parity game, then every simple cycle $C$ reachable from $v$ in $G^\straa$ is even
(i.e., the minimum priority of $C$ is even).
Given a cycle $C$ with minimum priority $i$ which is even we have 
(i)~for $j<i$: $w_j(C)=0$; 
and (ii)~for $j\geq i$ there is at least one state with weight $n$, and the sum of all 
other weights is at least $-(n-1)$ (since there are at most $n$ edges of which one has
weight $n$, and in the worst case all the remaining $n-1$ edges have weight $-1$);
and hence $w_j(C) \geq 0$.
Hence by the construction of the weight function it follows that the weight vector of $C$ 
is non-negative (in every dimension).
Thus $\straa$ is a winning strategy for the multidimensional mean-payoff objective.
\end{proof}

\begin{lem}\label{lem:Player2WinsParityThenWinsMP}
From a vertex $v$, if player~2 wins the parity game, then she also wins the multidimensional mean-payoff game.
\end{lem}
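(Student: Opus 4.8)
The goal is the converse of Lemma~\ref{lem:Player1WinsParityThenWinsMP}: if player~2 wins the parity game from $v$, then player~2 wins the multidimensional mean-payoff game from $v$. By memoryless determinacy of parity games~\cite{EJ91}, fix a memoryless player-2 winning strategy $\strab$ from $v$. In the one-player graph $G^\strab$ (restricted to vertices reachable from $v$), every simple cycle $C$ is \emph{odd}, i.e., its minimum priority $i = p(C)$ is odd. The plan is to show that such a cycle is negative in dimension~$i$ with respect to the weight function $w$, and that this suffices to conclude that along every consistent play the limit-average is negative in some dimension.

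\textbf{Key steps.} First I would analyze the weight of an arbitrary simple cycle $C$ in $G^\strab$ with minimum priority $i$ (odd). For dimension $i$: every edge $e$ of $C$ has $p(e) \geq i$, so edges with $p(e) > i$ contribute $0$, and edges with $p(e) = i$ (there is at least one) contribute $-1$ each since $i$ is odd; hence $w_i(C) \leq -1 < 0$. (Dimensions $j < i$ all give $w_j(C) = 0$, and dimensions $j > i$ are irrelevant.) So in $G^\strab$ every simple cycle is negative in the dimension equal to its minimum priority, and in particular $w_{p(C)}(C) \leq -1$.

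Second, I would lift this to infinite plays. Let $\rho$ be any play consistent with $\strab$; after a finite prefix it stays within the reachable part of $G^\strab$, which is a finite graph. Decompose a long prefix $\rho_t$ into simple cycles plus an acyclic remainder of length at most $n$ (as in Figure~\ref{fig:hard-dir}). Since $\rho$ visits finitely many vertices, only finitely many priorities appear; let $i^*$ be the minimum priority that occurs infinitely often along $\rho$. Since $\strab$ is parity-winning, $i^*$ is odd. The cycles extracted from long prefixes eventually all have minimum priority $\geq i^*$, and infinitely often a cycle with minimum priority exactly $i^*$ is closed (because priority $i^*$ recurs infinitely often); by the cycle analysis each such cycle contributes $\leq -1$ to coordinate $i^*$, while every cycle contributes $\le 0$ to coordinate $i^*$ (edges of priority $<i^*$ eventually do not occur, so we may pass to a suffix where all priorities are $\ge i^*$). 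Cycles of priority exactly $i^*$ appear with frequency bounded below (each occurrence of a priority-$i^*$ edge forces one within $n$ steps), so the running sum in coordinate $i^*$ drifts to $-\infty$ linearly, with the bounded acyclic part contributing at most $O(n \cdot W)$ per cycle-extraction, hence sublinear net effect. Therefore $\LimInfAvg_{i^*}(\rho) < 0$, so the multidimensional objective $\LimAvg(\rho) \geq \vec 0$ fails; thus $\strab$ is player-2 winning for the multidimensional mean-payoff game from $v$.

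\textbf{Main obstacle.} The cycle-weight computation is routine; the delicate point is the infinite-play argument — making precise that after restricting to a suffix where only priorities $\geq i^*$ occur, the running sum in coordinate $i^*$ is \emph{monotone non-increasing along cycle completions} and \emph{strictly decreases infinitely often by a bounded amount}, while the acyclic leftovers (length $\le n$) cannot cancel this drift. One clean way is to observe that in $G^\strab$ restricted to the suffix, coordinate $i^*$ of $w$ is $\le 0$ on every edge, and strictly negative on every edge of priority $i^*$, and such edges recur infinitely often; then $w_{i^*}(\rho_t)$ is eventually non-increasing in $t$ and tends to $-\infty$, so $\Avg_{i^*}(\rho_t) \to c$ for some $c < 0$ (or $-\infty$), giving $\LimInfAvg_{i^*}(\rho) < 0$. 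This avoids the explicit cycle decomposition altogether and is the route I would take.
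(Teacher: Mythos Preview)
Your cycle analysis is correct: every simple cycle $C$ in $G^{\strab}$ reachable from $v$ has odd minimum priority $i=p(C)$ and $w_i(C)\le -1$. The gap is in lifting this to infinite plays.

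The ``clean'' route at the end is wrong. From the fact that (after a suffix) $w_{i^*}(\rho_t)$ is non-increasing and tends to $-\infty$ you \emph{cannot} conclude that $\Avg_{i^*}(\rho_t)$ tends to some $c<0$; the partial sums may diverge sublinearly. Concretely: take a single player-1 vertex with two self-loops $e_1,e_2$ of priorities $3$ and $5$ (both cycles are odd, so player~2 trivially wins parity and $\strab$ is the empty strategy). If player~1 plays $e_1$ with vanishing density (e.g.\ one $e_1$, then two $e_2$'s, one $e_1$, four $e_2$'s, \dots), then $i^*=3$, $w_3(\rho_t)\to-\infty$, yet $\Avg_3(\rho_t)\to 0$, so $\LimInfAvg_3(\rho)=0$. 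The multidimensional objective \emph{is} violated for this play, but in dimension~$5$, not in dimension~$i^*=3$. Your cycle-decomposition version has the same defect: the assertion that ``cycles of priority exactly $i^*$ appear with frequency bounded below'' is false, since player~1 controls that frequency and can drive it to zero. The difficulty is precisely that the ``bad'' coordinate is play-dependent and need not be $i^*$.

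The paper avoids picking a single coordinate per play. It uses the hyperplane machinery already developed: it exhibits a fixed positive vector $\vec{\lambda}=(\ell^{k-1},\ell^{k-2},\dots,\ell^{0})$ with $\ell=n^2$ and shows, using the full cycle-weight profile (namely $w_i(C)\le -1$, $w_j(C)\le n^2-1$ for $j>i$, and $w_j(C)=0$ for $j<i$), that $w(C)^T\cdot\vec{\lambda}<0$ for every reachable simple cycle $C$ in $G^{\strab}$. Hence $\strab$ wins the one-dimensional mean-payoff game with weight $w\cdot\vec{\lambda}$, and Lemma~\ref{lem:WithoutAssump} then gives that player~2 wins the multidimensional game from $v$. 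To repair your argument you need either this one-dimensional reduction or some other device that aggregates the dimensions; a per-play choice of a single coordinate does not suffice.
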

\begin{proof}
If player 2 is the winner in the parity game from $v$, then by memoryless determinacy she has a memoryless 
winning strategy $\strab$.
We claim that $\strab$ is a winning strategy for player~2 in the multidimensional mean-payoff game.
For this purpose we first show that $\strab$ is a winning strategy in the one-dimensional mean-payoff game with 
weight function $w\cdot\vec{\lambda}$, where $\ell=n^2$ and  
\[\vec{\lambda} = (\ell^{k-1}, \ \ell^{k-2}, \ \dots, \ \ell^{k-i}, \ \dots, \ \ell^{0})
\]
Let $C$ be a simple cycle reachable from $v$ in the player-1 graph $G^\strab$.
Let $i$ be the minimal priority that occurs in $C$, and since $\strab$ is winning for player~2,
it follows that $i$ is odd.
By the construction of the weight function we get that (i)~$w_i(C)\leq -1$; 
(ii)~for $j>i$: $w_j(C)\leq n^2-1=\ell-1$ (at least one edge has negative 
weight, and all other edges have weight at most $n$); and 
(iii)~for $j<i$: $w_j(C)=0$.
Hence we get that
\[
w(C)^T\cdot \vec{\lambda} \leq -\ell^{k-i} + (\ell-1) \cdot \sum_{j=i+1}^k \ell^{k-j}
\leq \ell^{k-i} + (\ell-1)\cdot \ell^{k-i-1} < 0
\]
Hence, we get that every cycle reachable from $v$ in $G^\strab$ is negative according to $w \cdot \vec{\lambda}$; and hence 
$\strab$ is a winning strategy in the one-dimensional mean-payoff game for weight function $w \cdot \vec{\lambda}$.
By Lemma~\ref{lem:WithoutAssump} it follows that player~2 also wins in the multidimensional mean-payoff game
from $v$.
\end{proof}

\begin{thm}\label{thm:finite-reduction}
Let $G$ be a game graph with a parity objective defined by a priority function 
of $k$-priorities. We can construct in linear time a $k$-dimensional weight 
function $w$, with maximal weight $W$ bounded by $n$, such that a vertex is 
winning for player~1 in the parity game iff the vertex is winning for 
player~1 in the multidimensional mean-payoff game.  
\end{thm}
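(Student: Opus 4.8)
The statement is essentially a packaging of the three preceding lemmas (Lemma~\ref{lem:Player1WinsParityThenWinsMP}, Lemma~\ref{lem:Player2WinsParityThenWinsMP}, and Lemma~\ref{lem:WithoutAssump}) together with a bookkeeping check that the reduction is computable in linear time with the claimed weight bound. So the proof is short and mostly administrative. First I would recall the reduction as already defined: given $(G,p)$ with priorities in $\RangeSet{1}{k}$, build the $k$-dimensional weight function $w$ on the same graph $G$ by setting, for each edge $e$ and each dimension $i$, $w_i(e)=0$ if $p(e)>i$, $w_i(e)=-1$ if $p(e)\le i$ and $p(e)$ odd, and $w_i(e)=n$ if $p(e)\le i$ and $p(e)$ even. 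Since $p$ is part of the input and each $w_i(e)$ is determined by a single comparison of $p(e)$ with $i$ and a parity test, the whole vector-weight function is produced in time $O(m\cdot k)$, which is linear in the size of the (parity-game) input; and by construction every weight lies in $\Set{-1,0,n}$, so the maximal absolute weight $W$ equals $n$ (or is bounded by $n$), as claimed.

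\smallskip\noindent The correctness equivalence then follows by determinacy. By memoryless determinacy of parity games~\cite{EJ91}, from any vertex $v$ exactly one of the two players wins the parity game. If player~1 wins the parity game from $v$, Lemma~\ref{lem:Player1WinsParityThenWinsMP} gives that player~1 wins the multidimensional mean-payoff game from $v$. If instead player~2 wins the parity game from $v$, Lemma~\ref{lem:Player2WinsParityThenWinsMP} gives that player~2 wins the multidimensional mean-payoff game from $v$ — and here the appeal to Lemma~\ref{lem:WithoutAssump} inside that lemma is exactly what converts ``player~2 wins the one-dimensional game with weight $w\cdot\vec\lambda$'' into ``player~2 wins the multidimensional game''. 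Since the multidimensional mean-payoff game is also determined from each vertex into a player-1 winning region and a player-2 winning region (determinacy of these games, as used throughout Section~2), the two implications together show that $v$ is winning for player~1 in the parity game if and only if $v$ is winning for player~1 in the multidimensional mean-payoff game.

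\smallskip\noindent The only point that needs a word of care, and the closest thing to an ``obstacle,'' is the uniformity of the two directions: Lemma~\ref{lem:Player1WinsParityThenWinsMP} and Lemma~\ref{lem:Player2WinsParityThenWinsMP} are each stated from a fixed vertex $v$, so one should note that applying them vertex-by-vertex across the (finite) vertex set, together with the mutual exclusivity of the two parity winning regions and of the two mean-payoff winning regions, yields the full ``iff'' over all vertices simultaneously. No new argument is required beyond invoking determinacy on both sides. I therefore expect the proof to consist of: (i) one sentence on the construction and its linear-time complexity and the weight bound $W\le n$; (ii) one sentence invoking Lemma~\ref{lem:Player1WinsParityThenWinsMP} for the case player~1 wins the parity game; (iii) one sentence invoking Lemma~\ref{lem:Player2WinsParityThenWinsMP} (which internally uses Lemma~\ref{lem:WithoutAssump}) for the case player~2 wins; and (iv) one sentence closing the equivalence by determinacy of both games.
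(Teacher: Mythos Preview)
Your proposal is correct and matches the paper's approach exactly: the paper gives no separate proof of the theorem, as it is meant to follow immediately from the explicit construction together with Lemma~\ref{lem:Player1WinsParityThenWinsMP} and Lemma~\ref{lem:Player2WinsParityThenWinsMP} (the latter already invoking Lemma~\ref{lem:WithoutAssump}), plus determinacy. Your bookkeeping on the linear-time computability and the weight bound $W\le n$ is the only content the paper leaves implicit, and you handle it correctly.
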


\begin{rem}
There exists a deterministic sub-exponential time algorithm for parity 
games~\cite{JPZ08} and also algorithms that run in time $O(n^{k/3}\cdot m)$~\cite{Schewe07}; 
however obtaining a fixed parameter tractable algorithm for 
parity games that runs in time $O(f(k) \cdot \text{poly}(n,m))$ 
for any function $f$ (exponential or double exponential) 
is a long-standing open problem.
Our reduction (Theorem~\ref{thm:finite-reduction}) shows that obtaining 
a fixed parameter tractable algorithm for multidimensional mean-payoff
games that runs in time $O(f(k) \cdot \text{poly}(n,m,W))$ is not possible
without first solving the fixed parameter tractability of parity games. 
We also point out that the hardness result does not hold for multidimensional 
$\LimSupAvg$-objectives, as if the weights are fixed, the problem can be 
solved in polynomial time~\cite{VR11}.
\end{rem}

\newcommand{\PumpSet}{\mathsf{PPS}}
\newcommand{\PumpCone}{\mathsf{PCone}}
\newcommand{\PumpMatrix}{\mathsf{PumpMat}}
\newcommand{\Cone}{\mathsf{cone}}
\newcommand{\Rkplus}{\mathbb{R}^k_{+}}
\newcommand{\Rplus}{\mathbb{R}_{+}}
\newcommand{\VECT}{\mathsf{VECT}}

\section{Pushdown Graphs with Multidimensional Mean-payoff Objectives}
In this section we consider pushdown graphs (or pushdown systems) with multidimensional mean-payoff objectives, and we give an algorithm that determines if there exists a path that satisfies a multidimensional objective.
The algorithm we propose runs in polynomial time even for arbitrary number of dimensions and for arbitrary weight function.
As in the previous section, we use the hyperplane separation technique to reduce the problem into a one-dimensional pushdown graphs, and a polynomial solution for the latter is known~\cite{CV12}.

\smallskip\noindent{\bf Key obstacles and overview of the solution.}
We first describe the key obstacles for the polynomial time algorithm to 
solve pushdown graphs with multidimensional mean-payoff objectives (as 
compared to finite-state graphs and finite-state games).
For pushdown graphs we need to overcome the next three main obstacles:
(a)~The mean-payoff value of a finite-state graph is uniquely determined by 
the weights of the simple cycles of the graph. However, for pushdown graphs 
it is also possible to \emph{pump} special types of acyclic paths. 
Hence, we first need to characterize the \emph{pumpable paths} that uniquely 
determine the possible mean-payoff vectors in a pushdown graph.
(b)~Lemma~\ref{lem:ReductionToOneDimHardDirection} does not hold for 
arbitrary infinite-state graphs and we need to show that it does hold for 
pushdown graphs.
(c)~We require an algorithm to decide whether there is a hyperplane such that 
all the weights of the pumpable paths of a pushdown graph lie below the 
hyperplane (also for arbitrary dimensions).
The overview of our solutions to the above obstacles are as follows:
(a)~In the first part of the section (until 
Proposition~\ref{prop:GoodPathIffGoodMatrix}) we present a characterization of 
the pumpable paths in a pushdown graph.
(b)~We use Gordan's Lemma~\cite{Gordan} (a special case of Farkas' Lemma) and 
in Lemma~\ref{lemm:OneDimAndMultiDim} we prove that 
Lemma~\ref{lem:ReductionToOneDimEasyDirection} and 
Lemma~\ref{lem:ReductionToOneDimHardDirection} hold also for pushdown graphs 
(Lemma~\ref{lem:ReductionToOneDimEasyDirection} holds for 
any infinite-state graph).
(c)~Conceptually, we find the separating hyperplane by constructing a matrix 
$A$, such that every row in $A$ is a weight vector of a pumpable path, and we 
solve the linear inequality $\vec{\lambda} \cdot A <\vec{0}$.
However, in general the matrix $A$ can be of exponential size. 
Thus we need to use advanced linear-programing technique that solves in 
polynomial time linear inequalities with polynomial number of variables and 
exponential number of constraints. 
This technique requires a polynomial-time oracle that for a given $\vec{\lambda}$ returns 
a violated constraint (or says that all constraints are satisfied).
We show that in our case the required oracle is the algorithm for 
pushdown graphs with one-dimensional mean-payoff objective (which we obtain 
from~\cite{CV12}), and thus we establish a polynomial-time hyperplane 
separation technique for pushdown graphs.

\smallskip\noindent{\em Stack alphabet and commands.}
We start with the basic notion of stack alphabet and commands.
Let $\Gamma$ denote a finite set of \emph{stack alphabet}, and 
$\Com(\Gamma) = \Set{\Skip,\Pop} \cup \Set{\Push(z) \mid z\in\Gamma}$ denotes the set of 
\emph{stack commands} over $\Gamma$.
Intuitively, the command $\Skip$ does nothing, $\Pop$ deletes the top element of the stack, 
$\Push(z)$ puts $z$ on the top of the stack.
For a stack command $\com$ and a stack string $\alpha \in \Gamma^+$ 
we denote by $\com(\alpha)$ the stack string obtained by executing the 
command $\com$ on $\alpha$ (in a stack string the top denotes the right end of the string).

\smallskip\noindent{\bf Multi-weighted pushdown systems.}
A \emph{multi-weighted pushdown system (WPS)} (or a multi-weighted pushdown graph) is a tuple:
\[
\wps = \atuple{ Q, \Gamma, q_0\in Q, E \subseteq  (Q\times\Gamma) \times (Q\times \Com(\Gamma)), w:E \to \Z^k},
\]
where $Q$ is a finite set of \emph{states} with $q_0$ as the initial state; 
$\Gamma$ the finite \emph{stack alphabet} and we assume there is a special 
initial stack symbol $\bot \in \Gamma$; $E$ describes the set of 
edges or transitions of the pushdown system; and $w$ is a weight function 
that assigns an integer weight vector to every edge;
we denote by $w_i$ the projection of $w$ to the $i$-th dimension.
We assume that $\bot$ can be neither put on nor removed from the stack.
A \emph{configuration} of a WPS is a pair $(\alpha,q)$ where $\alpha\in \Gamma ^+$ 
is a stack string and $q\in Q$.
For a stack string $\alpha$ we denote by $\Top(\alpha)$ the top symbol of the stack.
The initial configuration of the WPS is $(\bot,q_0)$.
We use $W$ to denote the maximal absolute weight of the edge
weights.

\smallskip\noindent{\em Successor configurations and runs.}
Given a WPS $\wps$, a configuration $c_{i+1}=(\alpha_{i+1},q_{i+1})$
is a \emph{successor} configuration of a configuration $c_{i}=(\alpha_{i},q_{i})$,
if there is an edge $(q_i,\gamma_i,q_{i+1},\com) \in E$ such that 
$\com(\alpha_i)=\alpha_{i+1}$, where $\gamma_i=\Top(\alpha_i)$.
A \emph{path} $\pi$ is a sequence of configurations.
A path $\pi = \atuple{c_1, \dots, c_{n+1}}$ is a \emph{valid path} if for all 
$1 \leq i \leq n$ the configuration $c_{i+1}$ is a successor configuration of $c_i$
(and the notation is similar for infinite paths).
In the sequel we shall refer only to valid paths.
Let $\pi = \atuple{c_1,c_2,\dots,c_i,c_{i+1},\dots}$ be a path.
We denote by $\pi[j] = c_j$ the $j$-th configuration of the path and 
by $\pi[i_1,i_2] = \atuple{c_{i_1}, c_{i_1 + 1}, \dots,c_{i_2}}$ the 
segment of the path from the $i_1$-th to the $i_2$-th configuration.
A path can equivalently be defined as a sequence $\atuple{c_1 e_1 e_2 \dots e_n}$,
where $c_1$ is the initial configuration and $e_i$ are valid transitions.
Our goal is to obtain an algorithm that given a WPS $\wps$ decides
if there exists an infinite path $\pi$ in $\wps$ from $q_0$ such that 
$\LimAvg(\pi) \geq \VEC{0}$.


\smallskip\noindent{\em Notations.}
We shall use (i)~$\gamma$ or $\gamma_i$ for an element of $\Gamma$;
(ii)~$e$ or $e_i$ for a transition (equivalently an edge) from $E$;
(iii)~$\alpha$ or $\alpha_i$ for a string from $\Gamma^*$.
For a path $\pi = \atuple{c_1,  c_2, \dots} = \atuple{c_1 e_1 e_2 \dots}$ we denote by
(i)~$q_i$: the state of configuration $c_i$, and 
(ii)~$\alpha_i$: the stack string of configuration $c_i$.

\smallskip\noindent{\em Stack height and additional stack height of paths.}
For a path $\pi = \atuple {(\alpha_1,q_1),\dots,(\alpha_n,q_n)}$, the \emph{stack height}
of $\pi$ is the maximal height of the stack in the path, i.e., 
$\SH(\pi) = \max\Set{|\alpha_1|,\dots,|\alpha_n|}$.
The \emph{additional stack height} of $\pi$ is the additional height 
of the stack in the segment of the path, i.e., the additional 
stack height $\ASH(\pi)$ is 
$\SH(\pi) - \max\Set{|\alpha_1|,|\alpha_n|}$.

\smallskip\noindent{\em Pumpable pair of paths.}
Let $\pi = \atuple{c_1 e_1 e_2 \dots}$ be a finite or infinite path.
A \emph{pumpable pair of paths} for $\pi$ is a pair of non-empty sequences of edges:
$(p_1,p_2)=(e_{i_1}e_{i_1+1}\dots e_{i_1 + n_1},  e_{i_2}e_{i_2+1}\dots e_{i_2 + n_2})$, 
for $n_1,n_2\geq 0$, $i_1 \geq 0$ and $i_2 > i_1+n_1$ such that for every $j\geq 0$ the path
$\pi_{(p_1,p_2)}^j$ obtained by pumping the pair of paths $p_1$ and $p_2$ for $j$ times each is a valid path, i.e., 
for every $j \geq 0$ we have
\[
\pi_{(p_1,p_2)}^j = \atuple{
c_1 e_1 e_2 \dots e_{i_1 - 1} (e_{i_1} e_{i_1 + 1}\dots e_{i_1 + i_n})^j e_{i_1 + i_n + 1} \dots e_{i_2 - 1} (e_{i_2} e_{i_2 + 1} \dots e_{i_2 + n_2})^j e_{i_2 + n_2} \dots}
\]
is a valid path.
We will show that large additional stack height implies the 
existence of pumpable pair of paths. To prove the results we need the notion
of \emph{local minimum} of paths.

\smallskip\noindent{\em Local minimum of a path.}
Let $\pi = \atuple{c_1, c_2, \dots}$ be a path.
A configuration $c_i = (\alpha_i,q_i)$ is a \emph{local minimum} if for every 
$j\geq i$ we have $\alpha_i \sqsubseteq \alpha_j$ (i.e., the stack string 
$\alpha_i$ is a prefix string of $\alpha_j$).
One basic fact is the every infinite path has infinitely many local minimum.
We discuss the proof of the basic fact and some properties of local minimum.
Consider a path $\pi=\atuple{c_1,c_2,\dots}$. 
If there is a finite integer $j$ such that from some point on 
(say after $i$-th index) the stack height is always at least $j$, 
and the stack height is $j$ infinitely often,
then every configuration after $i$-th index with stack height $j$ is a 
local minimum (and there are infinitely many of them).
Otherwise, for every integer $j$, there exists an index $i$, such that
for every index after $i$ the stack height exceeds $j$, and then 
for every $j$, the last configuration with stack height $j$ is a 
local minimum and we have infinitely many local minimum. 
This shows the basic fact about infinitely many local minimum of a 
path.
We now discuss a property of consecutive local minimum in a path.
If we consider a path and the sequence of local minimum, and let
$c_i$ and $c_j$ be two consecutive local minimum. 
Then either $c_i$ and $c_j$ have the same stack height, or else
$c_j$ is obtained from $c_i$ with one push operation.

\smallskip\noindent{\em Non-decreasing paths and cycles, and proper cycles.}
A path from configuration $(\alpha\gamma,q_1)$ to configuration 
$(\alpha\gamma\alpha_2,q_2)$ is a \emph{non-decreasing $\alpha$-path} 
if $(\alpha\gamma,q_1)$ is a local minimum.
Note that if $\pi$ is a non-decreasing $\alpha$-path for some 
$\alpha\in\Gamma^*$, then the same sequence of transitions leads to a 
non-decreasing $\beta$-path for every $\beta\in\Gamma^*$.
Hence we say that $\pi$ is a non-decreasing path if there exists 
$\alpha \in \Gamma^*$ such that $\pi$ is a non-decreasing $\alpha$-path.
A \emph{non-decreasing cycle} is a non-decreasing path from $(\alpha_1, q)$ to $(\alpha_2, q)$ 
such that the top symbols of $\alpha_1$ and $\alpha_2$ are the same.
A non-decreasing cycle from  $(\alpha_1, q)$ to $(\alpha_2, q)$  
is a \emph{proper cycle} if $\alpha_1=\alpha_2$ (i.e., returns to the same configuration).
By convention, when we say that a path $\pi$ is a non-decreasing path from $(\gamma_1,q_1)$ to $(\gamma_2,q_2)$, 
it means that for some $\alpha_1,\alpha_2\in \Gamma^*$, the path $\pi$ is a non-decreasing path from 
$(\alpha_1\gamma_1,q_1)$ to $(\alpha_1\gamma_1\alpha_2\gamma_2,q_2)$.

\smallskip\noindent{\em Cone of pumpable pairs.}
We denote $\Rplus = [0,+\infty)$.
For a finite non-decreasing path $\pi$ we denote by $\PumpSet(\pi)$ the (finite) set of 
pumpable pairs that occur in $\pi$, that is, $\PumpSet(\pi) = \set{(p_1,p_2)\in (E^*\times E^*)\mid \mbox{$p_1$ and $p_2$ are a pumpable pair in $\pi$}}$.
Let $\PumpSet(\pi)=\set{P_1=(p_1^1,p_2^1), P_2=(p_1^2,p_2^2), \ldots, P_j=(p_1^j,p_2^j)}$,
and we denote by $\PumpMatrix(\pi)$ the matrix that is formed by the weight vectors of the pumpable pairs of $\pi$, that is, 
the matrix has $j$ rows and the $i$-th row of the matrix is $w(p_1^i) + w(p_2^i)$ (every weight vector is a row in the matrix).
We denote by $\PumpCone(\pi)$ the cone of the weight vectors in $\PumpSet(\pi)$, 
formally, $\PumpCone(\pi) = \set{\PumpMatrix(\pi) \cdot \VEC{x}\mid \VEC{x}\in(\Rkplus \backslash \{\VEC{0}\})}$.

\begin{figure}
	 \begin{center} 
    \begin{gpicture}(25, 10)(0,-5)
    \thinlines

    \node[Nw=6.0,Nh=6.0](A1)(-50,0){$q_0$}

    \node[Nw=6.0,Nh=6.0](A2)(-30,0){$q_1$}

    \node[Nw=6.0,Nh=6.0](A3)(0,0){$q_2$}

    \node[Nw=6.0,Nh=6.0](A4)(30,0){$q_3$}

    \node[Nw=6.0,Nh=6.0](A5)(60,0){$q_4$}

    \node[Nw=6.0,Nh=6.0](A6)(80,0){$q_5$}

    \drawloop[loopangle=90,loopdiam=6](A2){$\mathit{push}(\gamma_1),(-2,1)$}
    \drawloop[loopangle=90,loopdiam=6](A3){$\mathit{push}(\gamma_2),(7,2)$}

    \drawloop[loopangle=90,loopdiam=6](A4){$\mathit{pop}(\gamma_2),(5,-9)$}
    \drawloop[loopangle=90,loopdiam=6](A5){$\mathit{pop}(\gamma_1),(2,-2)$}



    \drawedge(A1,A2){$(1,3)$}
    \drawedge(A2,A3){$(4,2)$}
    \drawedge(A3,A4){$(7,1)$}
    \drawedge(A4,A5){$(2,6)$}
    \drawedge(A5,A6){$(2,8)$}

    \end{gpicture}
\end{center}
  \caption{A WPS $\wps$. If an edge is not label with a command, then the command is $\mathit{skip}$. The label $\mathit{pop}(\gamma)$ stands for: if the top symbol is $\gamma$, then a pop transition is possible.
}
\label{fig:wpsG1}
\end{figure}

\smallskip\noindent{\bf Example.}
We illustrate the definitions with the aid of an example.
Consider the WPS shown in Figure~\ref{fig:wpsG1}.
Consider all the possible paths from $(\bot,q_0)$ to $(\bot,q_5)$.
Every such path is of the form
\[q_0\to q_1 \to (q_1\to q_1)^m \to q_2 \to (q_2 \to q_2)^n \to q_3 \to (q_3 \to q_3)^n \to q_4 \to (q_4 \to q_4)^m \to q_5\]
for some non-negative numbers $m$ and $n$.
Hence there are two pumpable pairs, namely, $P_1=(q_1\to q_1,q_4\to q_4)$ and $P_2=(q_2\to q_2,q_3\to q_3)$.
Given the weight function $w$ (as shown in the figure) we have $w(P_1)=(0,-1)$ and $w(P_2)=(12,-7)$.
Therefore we have the following:
\begin{itemize}
\item $\PumpSet((\bot,q_0),(\bot,q_5)) = \{(q_1\to q_1,q_4\to q_4),(q_2\to q_2,q_3\to q_3)\}$; 
\item $\PumpMatrix((\bot,q_0),(\bot,q_5)) = 
\begin{pmatrix}
0 & -1 \\
12 & -7\end{pmatrix}$; and 
\item $\PumpCone((\bot,q_0),(\bot,q_5)) = \{x_1\cdot (0,-1) + x_2\cdot (12,-7)\mid x_1,x_2\geq 0 \wedge x_1+ x_2 > 0\}$ (see Figure~\ref{fig:Pcone}).
\end{itemize}
The example illustrates the various concepts we have introduced.

\begin{figure}
\setlength\unitlength{1cm}
\begin{center}
\begin{pspicture}(1,1)
{\color{gray}\polygon*(0.5,0.5)(0.5,-0.5)(1.5,-0.0833)}
\psline[linewidth=0.5pt](0.5,0.5)(1.25,0.5)
\psline[linewidth=0.5pt](0.5,0.5)(-0.25,0.5)
\psline[linewidth=0.5pt](0.5,0.5)(0.5,-0.25)
\psline[linewidth=0.5pt](0.5,0.5)(0.5,1.25)

\psline{->}(0.5,0.5)(0.5,-0.4)
\psline{->}(0.5,0.5)(1.4,-0.025)
\psline[linestyle=dotted,dotsep=1pt](0.5,-0.4)(0.5,-0.6)
\psline[linestyle=dotted,dotsep=1pt](1.4,-0.025)(1.6,-0.1416)

\psline[linestyle=dotted,dotsep=1pt](1,-0.29165)(1.075,-0.410375)
\end{pspicture}
\end{center}
  \caption{$\PumpCone((\bot,q_0),(\bot,q_5))$}
\label{fig:Pcone}
\end{figure}

\smallskip\noindent{\em Notations and abbreviations.}
Fix $\ell = (|Q|\cdot |\Gamma|)^{(|Q|\cdot |\Gamma|)^2+1}$ for the rest of the section.
For $q_1,q_2\in Q$ and $\gamma_1,\gamma_2\in\Gamma$, by abuse of notation we denote by 
$\PumpSet((\gamma_1,q_1),(\gamma_2,q_2))$ the (finite) set of all pumpable pair of paths, 
not longer than $\ell$, that occur in a non-decreasing path from $(\gamma_1,q_1)$ to $(\gamma_2,q_2)$;
we similarly define $\PumpMatrix((\gamma_1,q_1),(\gamma_2,q_2))$ and 
$\PumpCone((\gamma_1,q_1),(\gamma_2,q_2))$.
If $q_1 = q_2$ and $\gamma_1 = \gamma_2$, then we 
abbreviate $\PumpSet((\gamma_1,q_1),(\gamma_1,q_1))$ by 
$\PumpSet((\gamma_1,q_1))$, and similarly for $\PumpMatrix$ and $\PumpCone$.
The next lemma was proved in~\cite{CV12}.

\begin{lem}[\cite{CV12}]\label{lemm:EveryDeepPathHasAPumpablePair}
Let $\pi$ be a finite path such that $\ASH(\pi) > (|Q|\cdot|\Gamma|)^2$.
Then $\pi$ has a pumpable pair of paths.
\end{lem}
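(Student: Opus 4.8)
The plan is to set up a pigeonhole argument indexed by stack height: a path with large additional stack height must, on its ``way up'' to the deepest point, revisit some control-state/top-symbol pattern at two different heights, and likewise on its ``way down'', and these two repetitions can be chosen compatibly so that they form a pumpable pair of paths.

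Concretely, I would first write $\pi$ as a path from $c_1=(\alpha_1,q_1)$ to $c_N=(\alpha_N,q_N)$, put $j_0=\max\{|\alpha_1|,|\alpha_N|\}$, and pick a configuration $c_m$ of maximal stack height $h=\SH(\pi)$, so that $\ASH(\pi)=h-j_0$. Since consecutive configurations differ in stack height by at most one and $|\alpha_1|,|\alpha_N|\le j_0<h$, for every level $\ell\in\{j_0,\dots,h-1\}$ there is a last index $u_\ell\le m$ and a first index $v_\ell\ge m$ with stack height exactly $\ell$. The structural facts I would isolate are: on the interval $[u_\ell,v_\ell]$ the stack height stays $\ge\ell$ and equals $\ell$ only at the two endpoints, so the bottom $\ell$ symbols of the stack coincide in every configuration of this interval; in particular the symbol $\gamma_\ell$ at stack position $\ell$ (which equals $\Top(\alpha_{u_\ell})=\Top(\alpha_{v_\ell})$) is well defined, and the segments $\pi[u_\ell,m]$ and $\pi[m,v_\ell]$ are non-decreasing paths in the sense of the earlier definitions.

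Next I would apply pigeonhole: associate to each level $\ell\in\{j_0,\dots,h-1\}$ the triple $(q_{u_\ell},q_{v_\ell},\gamma_\ell)\in Q\times Q\times\Gamma$. There are $h-j_0=\ASH(\pi)>(|Q|\cdot|\Gamma|)^2\ge|Q|^2\cdot|\Gamma|$ levels and at most $|Q|^2\cdot|\Gamma|$ possible triples, so some two levels $\ell_1<\ell_2$ receive the same triple. Set $p_1:=\pi[u_{\ell_1},u_{\ell_2}]$ and $p_2:=\pi[v_{\ell_2},v_{\ell_1}]$: both are non-empty, $p_1$ precedes $p_2$ in $\pi$ (since $u_{\ell_2}<m\le v_{\ell_2}$), each carries the same control state and the same top symbol at its two endpoints, and $p_1$ performs a net push of a word $\delta$ with $|\delta|=\ell_2-\ell_1$ and $\Top(\delta)=\gamma_{\ell_2}=\gamma_{\ell_1}$, while $p_2$ pops exactly that same $\delta$ (using the structural fact that the stack content above level $\ell_1$ at configuration $v_{\ell_2}$ is precisely the $\delta$ that $p_1$ pushed). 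Finally, since after its first transition $p_1$ operates strictly above its initial stack level, its run depends only on the single symbol $\gamma_{\ell_1}$ it reads first together with the stack portion it has itself created; hence $p_1$ can be replayed from any configuration with state $q_{u_{\ell_1}}$ and top symbol $\gamma_{\ell_1}$, each replay appending one more copy of $\delta$, and symmetrically each replay of $p_2$ removes one copy of $\delta$. Thus inserting $p_1$ and $p_2$ each $j$ times yields, for every $j\ge0$, a valid path that rejoins $\pi$ at configuration $v_{\ell_1}$, which is exactly the definition of a pumpable pair of paths for $\pi$. The main obstacle in turning this into a rigorous proof is the bookkeeping of the structural facts — carefully proving that stack contents below a fixed level are preserved across the relevant segments and that a replayed cycle is genuinely context-independent; granting these, the pigeonhole count and the arithmetic with the bound $(|Q|\cdot|\Gamma|)^2$ are routine.
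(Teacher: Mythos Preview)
The paper does not supply its own proof of this lemma; it is quoted verbatim from~\cite{CV12}. Your argument is the standard pumping argument for pushdown systems and is essentially what one finds in that reference: locate the deepest point, record at each intermediate height the pair (state at last up-crossing, state at first down-crossing, top symbol), and apply pigeonhole on $Q\times Q\times\Gamma$ to find two compatible levels. Your treatment is correct, including the observation that $(|Q|\cdot|\Gamma|)^2\ge|Q|^2\cdot|\Gamma|$ so the count goes through, and your identification of the stack-content bookkeeping (that $p_1$ always pushes the same word $\delta$ with $\Top(\delta)=\gamma_{\ell_1}$, that the middle segment $\pi[u_{\ell_2},v_{\ell_2}]$ never reads below its initial top, and that $p_2$ reads only symbols in $\delta$) as the part needing care is accurate. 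Once those facts are nailed down, the validity of the pumped path for every $j\ge 0$, including $j=0$, follows exactly as you outline.
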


In the next lemma we show that any sufficiently long non-decreasing path contains a pumpable pair of paths.

\begin{lem}\label{lemm:LongPathHasPumpablePair}
Every non-decreasing path longer than $\ell$ has a pumpable pair of paths.
\end{lem}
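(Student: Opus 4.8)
The plan is to reduce Lemma~\ref{lemm:LongPathHasPumpablePair} to Lemma~\ref{lemm:EveryDeepPathHasAPumpablePair} by showing that a sufficiently long non-decreasing path must have large additional stack height. The number $\ell = (|Q|\cdot|\Gamma|)^{(|Q|\cdot|\Gamma|)^2+1}$ is chosen precisely so that the pigeonhole argument below forces $\ASH$ above the threshold $(|Q|\cdot|\Gamma|)^2$ of Lemma~\ref{lemm:EveryDeepPathHasAPumpablePair}.

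First I would recall the property of consecutive local minima stated earlier in the excerpt: along any path, if $c_i$ and $c_j$ are consecutive local minima, then either they have the same stack height or $c_j$ is obtained from $c_i$ by exactly one push. Now take a non-decreasing path $\pi$ from $(\gamma_1,q_1)$ to $(\gamma_2,q_2)$ of length greater than $\ell$, and consider its sequence of local minima. The key dichotomy is on how many distinct stack heights occur among the local minima of $\pi$.

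Suppose first that the additional stack height $\ASH(\pi)$ already exceeds $(|Q|\cdot|\Gamma|)^2$; then Lemma~\ref{lemm:EveryDeepPathHasAPumpablePair} applies directly and we are done. Otherwise $\ASH(\pi) \leq (|Q|\cdot|\Gamma|)^2$, so the stack height stays within a window of size at most $(|Q|\cdot|\Gamma|)^2$ throughout (using that the path is non-decreasing, the base of the stack never shrinks below $|\gamma_1|$). In this case I would chop $\pi$ into its maximal segments between consecutive local minima; within each such segment the stack returns to (or stays above, touching at the endpoints) the local-minimum level, so each segment is itself a non-decreasing path whose top symbol at start equals the symbol at the local minimum. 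Each segment, when restricted to the part that stays at the critical level together with its state, is characterized by a pair in $Q\times\Gamma$ (the state and top symbol at the local minimum), and consecutive local minima at the \emph{same} height give a non-decreasing cycle. Since there are only $|Q|\cdot|\Gamma|$ such pairs and only $(|Q|\cdot|\Gamma|)^2$ possible heights, a path of length exceeding $(|Q|\cdot|\Gamma|)^{(|Q|\cdot|\Gamma|)^2+1}$ must — by iterated pigeonhole over the $(|Q|\cdot|\Gamma|)^2$ height levels, at each level at most $|Q|\cdot|\Gamma|$ distinct $(q,\gamma)$-signatures before a repeat — contain a repeated $(q,\gamma)$-signature at some fixed height with a genuine (non-empty) stretch in between, i.e. a non-decreasing cycle; and a non-decreasing cycle is itself a pumpable pair (taking $p_1$ to be the cycle and $p_2$ empty, or more carefully a pumpable pair extracted from it as in the pumpable-pair definition).

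The main obstacle I anticipate is making the counting in the second case airtight: one has to argue that the bound $\ell$ really is large enough, which requires bounding, for each of the $(|Q|\cdot|\Gamma|)^2$ levels, how long a segment can avoid repeating a $(q,\gamma)$ signature at that level — and being careful that "local minimum" structure interleaves the levels correctly, so that the total length is at most roughly $(|Q|\cdot|\Gamma|)^{(|Q|\cdot|\Gamma|)^2}$ times a lower-order factor before a pumpable cycle is forced. I would handle this by induction on the number of distinct stack heights: a non-decreasing path that stays within a height window of size $h$ and has no pumpable pair has length at most $(|Q|\cdot|\Gamma|)^{h}$ (roughly), since the bottom level can be visited at most $|Q|\cdot|\Gamma|$ times without repetition, and between consecutive visits to the bottom level the path lives in a window of size $h-1$. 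Plugging $h = (|Q|\cdot|\Gamma|)^2$ gives exactly the bound $\ell$, completing the argument.
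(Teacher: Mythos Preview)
Your two-case split on $\ASH(\pi)$ and the appeal to Lemma~\ref{lemm:EveryDeepPathHasAPumpablePair} in the large-$\ASH$ case match the paper exactly; the paper then dispatches the remaining case in a single sentence, asserting that $\pi$ contains a proper cycle. Your treatment of that case, however, has a genuine gap. You claim that $\ASH(\pi)\le (|Q|\cdot|\Gamma|)^2$ forces the stack height to stay within a window of that size, but this is false: since $\ASH(\pi)=\SH(\pi)-\max\{|\alpha_1|,|\alpha_n|\}$, a non-decreasing path that only performs pushes has $|\alpha_n|=\SH(\pi)$ and hence $\ASH(\pi)=0$, while its height window $\SH(\pi)-|\alpha_1|$ equals the path length and can be arbitrarily large (and such a path has no proper cycle either). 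Consequently, plugging $h=(|Q|\cdot|\Gamma|)^2$ into your height-window induction is not justified for $\pi$ itself.

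Your induction on the height window is sound in isolation; what is missing is the bridge from small $\ASH$ to small height window. One clean repair: first bound the number of local minima of $\pi$ by $|Q|\cdot|\Gamma|$, since a repeated $(q,\gamma)$ among them already yields a pumpable non-decreasing cycle. Between two consecutive local minima at the \emph{same} height, the inner segment is a non-decreasing path whose first and last configurations have equal stack height, so for that sub-path the height window coincides with its $\ASH$, which Lemma~\ref{lemm:EveryDeepPathHasAPumpablePair} (applied to the sub-path) bounds by $(|Q|\cdot|\Gamma|)^2$. Your induction then legitimately bounds each such segment by $(|Q|\cdot|\Gamma|)^{(|Q|\cdot|\Gamma|)^2}$, and summing over at most $|Q|\cdot|\Gamma|$ segments (the remaining transitions between consecutive local minima being single pushes) gives the bound $\ell$.
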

\begin{proof}
Let $\pi$ be a non-decreasing path longer than $\ell$.
If $\ASH(\pi) > (|Q|\cdot|\Gamma|)^2$, then by Lemma~\ref{lemm:EveryDeepPathHasAPumpablePair} we get the desired result;
otherwise, it is an easy observation that $\pi$ contains a proper cycle, which is by definition a pumpable pair of paths (where one path in the pair is empty).
\end{proof}

\begin{cor}\label{cor:LongPathHasShortPumpablePair}
Every non-decreasing path longer than $\ell$ has a pumpable pair of paths with length at most $\ell$.
\end{cor}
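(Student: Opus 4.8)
The plan is to derive the corollary from Lemma~\ref{lemm:LongPathHasPumpablePair} by passing to a short prefix of the given path. Fix a non-decreasing path $\pi$ longer than $\ell$, and let $\pi_0$ be the prefix of $\pi$ formed by its first $\ell+1$ edges; this prefix exists exactly because $\pi$ has more than $\ell$ edges. The first thing I would check is that $\pi_0$ is again a non-decreasing path. Indeed, $\pi$ being a non-decreasing $\alpha$-path means that its initial configuration is a local minimum of $\pi$, i.e.\ the initial stack string is a prefix of the stack string of every later configuration of $\pi$; since the configurations of $\pi_0$ form an initial segment of those of $\pi$, the initial configuration of $\pi_0$ is still a local minimum of $\pi_0$, so $\pi_0$ is a non-decreasing path.

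Next, since $\pi_0$ is a non-decreasing path of length $\ell+1 > \ell$, Lemma~\ref{lemm:LongPathHasPumpablePair} supplies a pumpable pair of paths $(p_1,p_2)$ occurring in $\pi_0$. Writing $(p_1,p_2) = (e_{i_1}\cdots e_{i_1+n_1},\ e_{i_2}\cdots e_{i_2+n_2})$ with $i_2 > i_1+n_1$, the two segments $p_1$ and $p_2$ are non-empty and occupy disjoint blocks of positions among the $\ell+1$ edges of $\pi_0$; hence $(n_1+1)+(n_2+1) \le \ell+1$, and since each term is at least $1$ we obtain $|p_1| = n_1+1 \le \ell$ and $|p_2| = n_2+1 \le \ell$. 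So the pumpable pair has length at most $\ell$.

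Finally I would observe that $(p_1,p_2)$ is also a pumpable pair \emph{for $\pi$}, not merely for $\pi_0$: the pumped path $\pi^j_{(p_1,p_2)}$ obtained from $\pi$ agrees with the one obtained from $\pi_0$ up to the configuration reached immediately after the $p_2$-block, and from then on it continues with the (longer) suffix of $\pi$ instead of that of $\pi_0$. Pumping $p_1$ and $p_2$ the same number $j$ of times leaves the stack restored to exactly the height it has in the unpumped path at that configuration, so the suffix of $\pi$ re-attaches and $\pi^j_{(p_1,p_2)}$ is valid for all $j \ge 0$. This gives the desired pumpable pair of paths for $\pi$ of length at most $\ell$. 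The only mildly delicate point is this last re-attachment step, and the definition of a pumpable pair (together with the proof of Lemma~\ref{lemm:LongPathHasPumpablePair}) already encapsulates the needed fact that pumping a matched pair has no net effect on the surrounding stack context; everything else is bookkeeping.
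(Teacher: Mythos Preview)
Your argument is correct and is exactly the route the paper intends: the corollary is stated immediately after Lemma~\ref{lemm:LongPathHasPumpablePair} without proof, and the implicit step is precisely to pass to a length-$(\ell{+}1)$ non-decreasing prefix, invoke that lemma, and observe that the resulting pumpable pair is short and lifts back to the full path. Your only slight slip is cosmetic: in the paper $|P|$ denotes $|p_1|+|p_2|$ (cf.\ the identity $|\pi|=|\pi_0|+\sum m_i|P_i|$ in Lemma~\ref{lemm:Decomposable}), so your bound actually gives $|P|\le \ell{+}1$ rather than $\le \ell$; the paper itself is loose about this off-by-one (compare ``at most $\ell$'' in the corollary with ``less than $\ell$'' in the proof of Lemma~\ref{lemm:Decomposable}), so this is harmless.
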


The next two lemmas show basic properties of $\PumpSet$.
The first lemma asserts that we can decompose every non-decreasing path to a set of pumpable pairs and a short non-decreasing path.

\begin{lem}\label{lemm:Decomposable}
For every non-decreasing path $\pi$ from $(\gamma_1,q_1)$ to $(\gamma_2,q_2)$ 
there exists a tuple of pumpable pair of paths $P_1 = (p^1_1,p^1_2), P_2 = (p^2_1,p^2_2),\dots, P_j = (p^j_1,p^j_2)\in 
\PumpSet((\gamma_1,q_1),(\gamma_2,q_2))$ each of length at most $\ell$ (i.e., for all $1\leq i \leq j$ we have 
$|P_i|\leq \ell$), a finite non-decreasing path $\pi_0$ from $(\gamma_1,q_1)$ 
to $(\gamma_2,q_2)$ with length at most $\ell$, and non-negative constants $m_1,\dots,m_j$ such that 
$w(\pi) = w(\pi_0) + \sum_{i=1}^j m_i \cdot w(P_i)$ and $|\pi| = |\pi_0| + \sum_{i=1}^j m_i \cdot |P_i|$.
\end{lem}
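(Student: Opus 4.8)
The plan is to prove this by strong induction on the length $|\pi|$ of the non-decreasing path. The base case is when $|\pi| \leq \ell$: here we simply take $\pi_0 = \pi$, no pumpable pairs (the tuple is empty, $j = 0$), and the equations $w(\pi) = w(\pi_0)$ and $|\pi| = |\pi_0|$ hold trivially. For the inductive step, assume $|\pi| > \ell$. By Corollary~\ref{cor:LongPathHasShortPumpablePair}, $\pi$ contains a pumpable pair of paths $(p_1, p_2)$ of length at most $\ell$; since this pumpable pair occurs in a non-decreasing path from $(\gamma_1, q_1)$ to $(\gamma_2, q_2)$, it belongs to $\PumpSet((\gamma_1, q_1), (\gamma_2, q_2))$ by definition.

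The key step is then to \emph{remove} this pumpable pair from $\pi$ — i.e.\ take $j = 0$ in the notation of the definition of pumpable pair, which deletes the occurrences of $p_1$ and $p_2$ — and argue that the resulting path $\pi'$ is again a valid non-decreasing path from $(\gamma_1, q_1)$ to $(\gamma_2, q_2)$. Validity of $\pi'$ follows from the definition of pumpable pair, which guarantees that $\pi^j_{(p_1,p_2)}$ is valid for every $j \geq 0$, in particular for $j = 0$. That $\pi'$ is still non-decreasing and still goes from $(\gamma_1,q_1)$ to $(\gamma_2,q_2)$ needs a short argument: deleting $p_1$ and $p_2$ removes a net-zero-height segment between two configurations at the same stack height with the same top symbol (or a proper cycle), so the local-minimum structure below the splice points is preserved, and the endpoints are unchanged. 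Moreover $w(\pi) = w(\pi') + (w(p_1) + w(p_2)) = w(\pi') + w(P)$ and $|\pi| = |\pi'| + |P|$, where $P = (p_1, p_2)$. Since the pumpable pair is non-empty, $|\pi'| < |\pi|$, so the induction hypothesis applies to $\pi'$, yielding pumpable pairs $P_1, \dots, P_{j'}$ of length at most $\ell$, a short non-decreasing path $\pi_0$ of length at most $\ell$, and non-negative $m_1, \dots, m_{j'}$ with $w(\pi') = w(\pi_0) + \sum m_i w(P_i)$ and $|\pi'| = |\pi_0| + \sum m_i |P_i|$. Adding $P$ to the list (merging multiplicities if $P$ coincides with some $P_i$) completes the step.

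The main obstacle I expect is the bookkeeping around \emph{non-decreasingness after deletion}: one must verify carefully that splicing out a pumpable pair cannot destroy the property that the start configuration is a local minimum, nor create a configuration that dips below the start stack height. This hinges on the precise positions $i_1, i_2$ in the definition of pumpable pair and the fact that the two deleted blocks sit at compatible stack heights; a clean way to handle it is to observe that along a non-decreasing path every pumpable pair's two blocks each return the stack to the height (and top symbol) at which they began, so deletion is a "local surgery" that leaves all stack heights outside the deleted blocks exactly as they were. Once that is pinned down, the weight and length identities are purely additive and the induction closes. A minor side point is ensuring the pairs $P_i$ produced genuinely lie in $\PumpSet((\gamma_1,q_1),(\gamma_2,q_2))$ for the \emph{original} endpoints — this is immediate since $\pi'$ has the same endpoints as $\pi$, and the definition of $\PumpSet((\gamma_1,q_1),(\gamma_2,q_2))$ quantifies over all non-decreasing paths between those endpoints.
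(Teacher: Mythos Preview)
Your proposal is correct and follows essentially the same approach as the paper: strong induction on $|\pi|$, base case $|\pi|\le\ell$ with $\pi_0=\pi$, and inductive step invoking Corollary~\ref{cor:LongPathHasShortPumpablePair} to extract a short pumpable pair, pump it zero times, and recurse on the shorter path. The paper's proof is in fact terser than yours---it simply asserts that the spliced path $\pi^*$ is ``clearly'' a non-decreasing path with the same endpoints---so your explicit discussion of why deletion preserves non-decreasingness and the endpoint data is a welcome elaboration rather than a deviation.
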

\begin{proof}
The proof is by 
induction of the length of $\pi$.
If $|\pi| \leq \ell$, then we are done by choosing $j=0$ and $\pi_0=\pi$.
Otherwise, by Corollary~\ref{cor:LongPathHasShortPumpablePair}, 
the path has a pumpable pair $P=(p_1,p_2)$ with length less than $\ell$ 
(and hence $P\in \PumpSet((q_1,\gamma_1),(q_2,\gamma_2))$).
Let $\pi^*$ be the path that is obtained from $\pi$ by pumping $P$ zero times
(i.e., $\pi^*$ is obtained by omitting $P$ from $\pi$);
clearly $\pi^*$ is a non-decreasing path from $(\gamma_1,q_1)$ to 
$(\gamma_2,q_2)$ and shorter than $\pi$, any by the induction hypothesis we 
get the desired result. 
\end{proof}

The following lemma shows the connection between the average weight of a path and $\PumpSet$.

\begin{lem}\label{lemm:MainProp}
If $\PumpCone((\gamma_1,q_1),(\gamma_2,q_2)) \cap \Rkplus = \emptyset$, then there exist constants 
$\epsilon > 0$ and $m\in\Nat$, such that for every finite non-decreasing path $\pi$ from $(\gamma_1,q_1)$ 
to $(\gamma_2,q_2)$, there exists a dimension $t$ such that $w_t(\pi) \leq m - \epsilon\cdot |\pi|$.
\end{lem}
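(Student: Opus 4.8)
The plan is to use the decomposition from Lemma~\ref{lemm:Decomposable} together with the separation of the cone $\PumpCone((\gamma_1,q_1),(\gamma_2,q_2))$ from the nonnegative orthant $\Rkplus$. First I would observe that since $\PumpCone((\gamma_1,q_1),(\gamma_2,q_2))$ is a finitely generated cone (generated by the finitely many weight vectors $w(P_1),\dots,w(P_j)$ of pumpable pairs of length at most $\ell$) and it is disjoint from $\Rkplus$ except possibly at the origin, a separating-hyperplane argument --- i.e. Gordan's Lemma / Farkas' Lemma, which the paper has already announced it will use --- gives a vector $\vec{\lambda}$ that is strictly positive in every coordinate (so that it is in the interior of the dual of $\Rkplus$) and satisfies $w(P_i)^T\cdot\vec{\lambda} \leq 0$ for every generator, hence $\vec{v}^T\cdot\vec{\lambda}\le 0$ for every $\vec{v}\in\PumpCone((\gamma_1,q_1),(\gamma_2,q_2))$. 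By rescaling we may take $\vec{\lambda}$ with all coordinates at least $1$; let $\lambda_{\max}$ be its largest coordinate.

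Next I would apply Lemma~\ref{lemm:Decomposable} to an arbitrary finite non-decreasing path $\pi$ from $(\gamma_1,q_1)$ to $(\gamma_2,q_2)$: write $w(\pi) = w(\pi_0) + \sum_{i=1}^j m_i\cdot w(P_i)$ and $|\pi| = |\pi_0| + \sum_{i=1}^j m_i\cdot|P_i|$, where $\pi_0$ has length at most $\ell$ and each $P_i$ has length at most $\ell$. Taking the inner product with $\vec{\lambda}$ and using $w(P_i)^T\cdot\vec{\lambda}\le 0$ gives
\[
w(\pi)^T\cdot\vec{\lambda} \;\le\; w(\pi_0)^T\cdot\vec{\lambda} \;\le\; \ell\cdot k\cdot W\cdot\lambda_{\max} \;=:\; m',
\]
a bound independent of $\pi$. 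Now I relate $w(\pi)^T\cdot\vec{\lambda}$ to a single coordinate: if \emph{every} coordinate of $w(\pi)$ were larger than $-\epsilon\cdot|\pi|$ for a suitably small $\epsilon$, then since $\sum_i\lambda_i \geq k$ and each $\lambda_i\ge 1$ we would get a lower bound $w(\pi)^T\cdot\vec{\lambda} > -\epsilon\cdot|\pi|\cdot\sum_i\lambda_i$. Here I need the opposite-direction crude bound too: $|w_t(\pi)| \le W\cdot|\pi|$ in every dimension $t$, so $w(\pi)^T\cdot\vec\lambda \ge -W\cdot|\pi|\cdot\sum_i\lambda_i \ge -W\cdot k\cdot\lambda_{\max}\cdot|\pi|$ always. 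Combining with the upper bound $m'$: pick $\epsilon$ small enough (concretely something like $\epsilon = \frac{1}{2W k \lambda_{\max}}$, or whatever makes the arithmetic work) so that whenever $|\pi|$ is large the quantity $w(\pi)^T\cdot\vec\lambda$ being $\le m'$ forces at least one coordinate $w_t(\pi)$ to be $\le m - \epsilon\cdot|\pi|$ for an appropriate constant $m$; for the finitely many short $\pi$ (length below a threshold) we simply absorb them into $m$.

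The cleanest way to run the last step is: suppose for contradiction that $w_t(\pi) > m - \epsilon\cdot|\pi|$ for \emph{all} $t\in\{1,\dots,k\}$. Multiplying the $t$-th such inequality by $\lambda_t\ge 1$ and summing yields $w(\pi)^T\cdot\vec\lambda > (m-\epsilon\cdot|\pi|)\cdot\sum_t\lambda_t \ge (m-\epsilon\cdot|\pi|)\cdot k$ when $m-\epsilon|\pi|\le 0$ (and a similar bookkeeping when it is positive, which only happens for bounded $|\pi|$). Choosing $\epsilon$ so that $\epsilon\cdot k > $ (the coefficient $Wk\lambda_{\max}$ is not quite right, so instead) --- more carefully, from $w(\pi)^T\cdot\vec\lambda\le m'$ and $w(\pi)^T\cdot\vec\lambda > (m - \epsilon|\pi|)\cdot k$ we get $m - \epsilon|\pi| < m'/k$, i.e. $|\pi| < (m - m'/k)/\epsilon$, which bounds $|\pi|$; choosing $m$ large enough makes this contradict the assumption that $|\pi|$ is large, and small paths are handled by enlarging $m$ further. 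I expect the main obstacle to be purely bookkeeping: getting a single clean pair $(\epsilon, m)$ that works uniformly, in particular handling the short paths ($|\pi|\le \ell$, or below whatever threshold the inequalities produce) by choosing $m$ at least $W\cdot\ell$ so the claim is vacuously/trivially true there, and making sure the rescaling of $\vec\lambda$ to have entries $\ge 1$ does not blow up $\lambda_{\max}$ uncontrollably --- but since $\PumpMatrix$ has bounded entries and bounded dimension, $\vec\lambda$ can be taken with integer entries bounded by a function of $k$, $n$-analogues, and $W$, so $\lambda_{\max}$ is a genuine constant. No deep idea beyond Gordan's Lemma plus Lemma~\ref{lemm:Decomposable} is needed; it is the quantitative version of "the cone misses the orthant, so some fixed positive functional is $\le 0$ on the cone, hence some coordinate must be driven negative at a linear rate."
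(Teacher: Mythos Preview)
Your overall plan (decompose via Lemma~\ref{lemm:Decomposable}, then separate the finitely generated cone from $\Rkplus$) is sound and is in fact the dual of what the paper does, but there is a genuine gap in the separation step that makes the argument as written fail.

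You only claim a \emph{non-strict} separation $w(P_i)^T\cdot\vec\lambda\le 0$. From this you correctly deduce $w(\pi)^T\cdot\vec\lambda\le m'$, a bound \emph{independent of $|\pi|$}. But a constant upper bound on a positive linear functional of $w(\pi)$ cannot force any coordinate $w_t(\pi)$ to decrease \emph{linearly} in $|\pi|$: for instance if $k=2$, $\vec\lambda=(1,1)$ and the only pumpable pair has weight $(1,-1)$, then $w(\pi)^T\vec\lambda$ is constant while $|\pi|\to\infty$, and your contradiction step gives only $|\pi|>(m-m'/k)/\epsilon$ (you flipped this inequality in your write-up), which is a \emph{lower} bound on $|\pi|$, not a contradiction. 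Your closing sentence ``choosing $m$ large enough makes this contradict the assumption that $|\pi|$ is large'' is where the argument collapses.

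What is actually true, and what you need, is \emph{strict} separation: the hypothesis $\PumpCone((\gamma_1,q_1),(\gamma_2,q_2))\cap\Rkplus=\emptyset$ implies there exists $\vec\lambda>0$ with $w(P_i)^T\cdot\vec\lambda<0$ for every generator (this is a Motzkin/Stiemke-type transposition theorem, or equivalently LP duality on the paper's linear program). Since there are finitely many generators you get $w(P_i)^T\cdot\vec\lambda\le-\delta<0$, and then the decomposition gives
\[
w(\pi)^T\cdot\vec\lambda\ \le\ w(\pi_0)^T\cdot\vec\lambda-\delta\sum_i m_i\ \le\ m'-\tfrac{\delta}{\ell}\,(|\pi|-\ell),
\]
which \emph{does} decrease linearly in $|\pi|$; pigeonholing over the $k$ terms $\lambda_t w_t(\pi)$ then yields the desired conclusion.

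For comparison, the paper avoids invoking a separating $\vec\lambda$ altogether: it sets up the LP $\max r$ subject to $\sum_z x_z w_t(z)\ge r$ for all $t$, $\sum_z x_z=1$, $x_z\ge 0$, observes that the optimum $r^*$ must be strictly negative (else $\PumpCone\cap\Rkplus\neq\emptyset$), and then applies the LP directly to the convex combination $x_i=m_i/\sum_j m_j$ coming from Lemma~\ref{lemm:Decomposable} to get a dimension $t$ with $\sum_i m_i w_t(P_i)\le r^*\sum_i m_i$, which is already the needed linear decrease. Your route and the paper's are LP-dual to each other; the paper's is slightly cleaner because the ``strict gap'' $r^*<0$ falls out immediately without having to argue that the separating hyperplane can be taken to satisfy strict inequalities on both sides.
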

\begin{proof}
In order to define $\epsilon$, we consider the following linear programming problem with the variables 
$x_1,x_2,\dots$ and $r$: the objective function is to maximize $r$ subject to the constraints below
\begin{equation}
\sum_{z \in\PumpSet((\gamma_1,q_1),(\gamma_2,q_2))} x_z \cdot w_t(z) \geq r \qquad \mbox{ for $t=1,\dots,k$}
\end{equation}
\begin{equation}
\sum_{z\in\PumpSet((\gamma_1,q_1),(\gamma_2,q_2))} x_z = 1
\end{equation}
\begin{equation}
x_z \geq 0 \qquad \mbox{ for all $z\in\PumpSet((\gamma_1,q_1),(\gamma_2,q_2))$}
\end{equation}
Intuitively, the first constraint specifies that there is a convex combination 
of the weights of the pumpable pairs to ensure at least $r$ in every dimension;
and the following two constraints is to ensure that it is a convex combination.
As the domain of the variables is closed and bounded, there exists a maximum value to the linear program, and
let $r^*$ be the maximum value.
If $r^* \geq 0$, then we get a contradiction to the assumption that $\PumpCone((\gamma_1,q_1),(\gamma_2,q_2)) \cap \Rkplus = \emptyset$.
Hence we have $r^* < 0$.
We define $m = (\ell+1) \cdot W - r^*$, $\epsilon = -\frac{r^*}{\ell}$ and 
we claim that for every non-decreasing path 
$\pi$ from $(\gamma_1,q_1)$ to $(\gamma_2,q_2)$ there is a dimension $t$ such 
that $w_t(\pi)\leq m - \epsilon\cdot |\pi|$.

By Lemma~\ref{lemm:Decomposable}, there exists a path $\pi_0$ with 
length at most $\ell$, a (finite) sequence of pumpable pairs 
$P_1,\dots,P_j \in \PumpSet((\gamma_1,q_1),(\gamma_2,q_2))$ each of length at most 
$\ell$ and constants $m_1,\dots,m_j$ such that 
$w(\pi) = w(\pi_0) + \sum_{i=1}^j m_i \cdot w(P_i)$ 
and $|\pi| = |\pi_0| + \sum_{i=1}^j m_i \cdot |P_i|$.
We define $M = \sum_{i=1}^j m_i$.
As all $|P_i|$ and $|\pi_0|$ are bounded by $\ell$, we get that 
$M\geq \frac{|\pi| - \ell}{\ell}$.
Observe that if we set $x_i = \frac{m_i}{M}$ for $i=1$ to $j$, and let 
$x_z=0$ for all other $z \in \PumpSet((\gamma_1,q_1),(\gamma_2,q_2))$,
then they satisfy the constraints for convex combination.
Hence there must exist a dimension $t$ for which 
$\frac{1}{M} \sum_{i=1}^j m_i \cdot w_t(P_i) \leq r^*$ (since $r^*$ 
is the maximum among the feasible solutions).
Thus $w_t(\pi) \leq w_t(\pi_0) + M \cdot r^*$ and since $r^* < 0$ we have 
\[
w_t(\pi) \leq w_t(\pi_0) + \frac{|\pi|-\ell}{\ell} \cdot r^* =
w_t(\pi_0) -r^* + |\pi| \cdot \frac{r^*}{\ell}.
\]
Therefore, for the choice of $m \geq \ell \cdot W -r^*$ and $\epsilon = -\frac{r^*}{\ell}$,
we obtain the desired result.
\end{proof}

The next proposition gives a sufficient and necessary condition for the existence of a path with non-negative mean-payoff 
values in all the dimensions.

\begin{prop}\label{prop:GoodPathIffGoodMatrix}
There exists an infinite path $\pi$ such that $\LimInfAvg(\pi) \geq \VEC{0}$ if and only if there exists a (reachable) non-decreasing cycle $\pi$ such that $\Rkplus\cap\PumpCone(\pi) \neq \emptyset$.
\end{prop}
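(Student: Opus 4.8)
The plan is to prove the two implications separately; the forward implication, which must extract a single witness cycle from an arbitrary infinite path, is the delicate one, and the backward implication is a direct pumping construction.

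For the ``if'' direction, suppose $D$ is a reachable non-decreasing cycle, say of type $(\gamma,q)$ (its endpoints have state $q$ and top symbol $\gamma$), with $\Rkplus\cap\PumpCone(D)\neq\emptyset$. Since $\PumpMatrix(D)$ is integral, I can pick pumpable pairs $P_1,\dots,P_j$ occurring in $D$ and non-negative integers $m_1,\dots,m_j$, not all zero, with $\vec{v}:=\sum_i m_i\,w(P_i)\geq\vec{0}$. I would then build the required infinite path as follows: first reach a configuration with state $q$ and top symbol $\gamma$; next, for $s=1,2,\dots$ perform ``super-iteration $s$'', which for each $i=1,\dots,j$ traverses $m_i$ consecutive times the path obtained from $D$ by pumping the pair $P_i$ to $s$ copies. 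Pumping a single pumpable pair of a non-decreasing $(\gamma,q)$-cycle again yields a non-decreasing $(\gamma,q)$-cycle (the endpoints still carry state $q$ and top $\gamma$, and by the remark on non-decreasing paths such a transition sequence is executable from any configuration whose top symbol is $\gamma$), so the pieces glue into a valid infinite play. A short bookkeeping shows that super-iteration $s$ contributes weight $M\,w(D)+(s-1)\vec{v}$ and length $M|D|+(s-1)L$, where $M=\sum_i m_i$ and $L=\sum_i m_i|P_i|\geq M\geq 1$; hence after $T$ super-iterations the accumulated weight is $\tfrac{T(T-1)}{2}\vec{v}+O(T)$ over length $\tfrac{T(T-1)}{2}L+O(T)$, while the fluctuation inside a super-iteration is only $O(T)\cdot W$ per dimension. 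Thus every sufficiently long prefix has average within $\epsilon$ of $\vec{v}/L\geq\vec{0}$ in every dimension, so $\LimInfAvg(\pi)\geq\vec{0}$.

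For the ``only if'' direction, let $\pi$ be an infinite path with $\LimInfAvg(\pi)\geq\vec{0}$. Since $\pi$ has infinitely many local minima, some type $(\gamma,q)\in\Gamma\times Q$ occurs as the (top symbol, state) of infinitely many of them; fix such a type and let $c_{j_1},c_{j_2},\dots$ ($j_1<j_2<\cdots$) be the local minima of $\pi$ of this type. For $s<s'$ the segment $\pi[j_s,j_{s'}]$ starts at a local minimum and ends in state $q$ with top $\gamma$, hence is a non-decreasing $(\gamma,q)$-cycle of length at least $s'-s$; in particular $(\gamma,q)$ is reachable via $\pi[0,j_1]$. I claim $\Rkplus\cap\PumpCone((\gamma,q))\neq\emptyset$. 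If not, Lemma~\ref{lemm:MainProp} (with both endpoints equal to $(\gamma,q)$) gives $\epsilon>0$ and $m\in\Nat$ such that every non-decreasing $(\gamma,q)$-cycle has some dimension $t$ with $w_t\leq m-\epsilon\cdot(\text{length})$. Applying this to the cycles $\pi[j_1,j_{N+1}]$, whose lengths tend to infinity, and using that there are finitely many dimensions, I obtain one fixed dimension $t^{*}$ and infinitely many prefixes $\rho$ of $\pi$ with $w_{t^{*}}(\rho)/|\rho|\to-\epsilon$, contradicting $\LimInfAvg_{t^{*}}(\pi)\geq 0$. Hence $\Rkplus\cap\PumpCone((\gamma,q))\neq\emptyset$, so there are pumpable pairs $P_1,\dots,P_j$, each occurring in some non-decreasing $(\gamma,q)$-cycle $D^{(i)}$, and coefficients $x_i\geq 0$, not all zero, with $\sum_i x_i\,w(P_i)\geq\vec{0}$. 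The concatenation $D^{*}=D^{(1)}D^{(2)}\cdots D^{(j)}$ is again a reachable non-decreasing $(\gamma,q)$-cycle and every $P_i$ occurs inside it, so $\sum_i x_i\,w(P_i)\in\Rkplus\cap\PumpCone(D^{*})$, which finishes this direction.

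The main obstacle is in the ``only if'' direction, in how Lemma~\ref{lemm:MainProp} is exploited: one must repackage $\pi$ as an infinite concatenation of non-decreasing cycles of a single type $(\gamma,q)$ whose lengths grow without bound, so that the per-cycle ``deficient dimension'' supplied by the lemma can be upgraded, via a pigeonhole over the finitely many dimensions, to a single dimension $t^{*}$ witnessing $\LimInfAvg_{t^{*}}(\pi)<0$. A smaller but necessary technical point, used in the ``if'' direction (and implicitly when forming $D^{*}$), is verifying that pumping a pumpable pair of a non-decreasing cycle again produces a non-decreasing cycle with the same source and target state and top symbol, so that the pumped pieces chain into one valid infinite play.
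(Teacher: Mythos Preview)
Your proof is correct and follows essentially the same approach as the paper: the forward direction applies the contrapositive of Lemma~\ref{lemm:MainProp} at a type $(\gamma,q)$ recurring infinitely often among the local minima of $\pi$, and the backward direction is the natural pumping construction. Your write-up is more explicit than the paper's in a few places (the pigeonhole over the finitely many dimensions, the concatenation $D^{*}$ to pass from $\PumpCone((\gamma,q))$ to the cone of a single cycle, and the asymptotic bookkeeping for the pumped path), but there is no genuine difference in strategy.
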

\begin{proof}
We first prove the direction from right to left.
If there exists a path $\pi$ such that $\Rkplus\cap\PumpCone(\pi)\neq\emptyset$,
then by definition there are $j$ pumpable pairs $P_1, P_2,\ldots,P_j$ 
with weight vectors $y_1=w(P_1),\dots,y_j=w(P_j)$ such that there exist $j$ 
positive constants (w.l.o.g natural positive constants) $n_1,\dots,n_j$ such that $\sum_{i=1}^j n_i \cdot y_i \geq \VEC{0}$.
For every $a,b\in\Nat$ we denote by $\pi^{a,b}$ the (finite) path that is formed by pumping the $a$-th pumpable pair $b$ times.
We denote by $\wh{\pi}^b =\pi^{1,b\cdot n_1} \cdot \pi^{2,b\cdot n_2} \cdot \cdots \cdot \pi^{j,b\cdot n_j} \cdots$, where
the $i$-th pumpable pair is pumped $b\cdot n_i$ times, respectively.
We note that $\pi^{a,b}$ is a non-decreasing cycle, and for the infinite path
\[
\pat^* = \wh{\pi}^1 \cdot \wh{\pi}^2 \cdot \wh{\pi}^3 \cdots \wh{\pi}^b \cdots
\]
we get $\LimAvg(\pat^*) \geq \VEC{0}$.
The reason we have $\LimAvg(\pat^*) \geq \VEC{0}$ is as $b$ tends to infinity, 
the average weight is determined only by the weights of the $j$ pumpable pairs 
and their coefficients $n_1,\dots,n_j$, and we have $\sum_{i=1}^j n_i \cdot y_i \geq \VEC{0}$.
This completes the proof for the direction from right to left.

For the converse direction, let $\pi$ be an infinite path such that $\LimAvg(\pi) \geq \VEC{0}$,
and let $(\gamma,q)$ be a top configuration that occurs infinitely often in the local minimum of $\pi$.
Since $\LimAvg(\pi) \geq \VEC{0}$ it follows that for every $\epsilon > 0$ there exists a 
non-decreasing cycle that begins at $(\gamma,q)$ with average weight at least $-\epsilon$ in every dimension.
Hence, by Lemma~\ref{lemm:MainProp} it follows that 
$\PumpCone(\gamma,q) \cap \Rkplus \neq \emptyset$, and hence, there exists a non-decreasing 
cycle $\pi$ that starts in $(\gamma,q)$ for which 
$\PumpCone(\pi) \cap \Rkplus \neq \emptyset$. 
\end{proof}

By Proposition~\ref{prop:GoodPathIffGoodMatrix}, we can decide whether there is 
an infinite path $\pi$ for which $\LimAvg(\pi)\geq \VEC{0}$ by checking if 
there exist a tuple $(\gamma,q) \in \Gamma \times Q$ for which there is a 
non-negative (and non-trivial) solution for the equation 
$\PumpMatrix((\gamma,q)) \cdot \VEC{x} \geq 0$.
As in Lemma~\ref{lem:ReductionToOneDimEasyDirection} by adding $k$ self-loop transitions 
with weights, where the weight of transition $i$ is $-1$ in the $i$-th 
dimension and $0$ in the other dimensions, we reduce the problem to 
finding $q$ and $\gamma$ such that there is a non-negative solution for 
$\PumpMatrix((\gamma,q)) \cdot \VEC{x} = 0$.
Inspired by the techniques of~\cite{CM93}, we present an algorithm 
that solves the problem by a reduction to a corresponding one-dimensional 
problem.
As before given a $k$-dimensional weight function $w$ and a $k$-dimensional 
vector $\vec{\lambda}$ we denote by $w\cdot \vec{\lambda}$ the one-dimensional weight function obtained by multiplying the weight vectors by 
$\vec{\lambda}$.
The reduction to one-dimensional objective requires the use of Gordan's 
lemma.

\begin{lem}[Gordan's Lemma~\cite{Gordan} (see also Lemma~2 in~\cite{Papa81})]
For a matrix $A$, either $A \cdot \vec{x} = \vec{0}$ has a non-trivial non-negative solution for 
$\vec{x}$, or there exists a vector $\vec{y}$ such that $\vec{y}\cdot A^{T}$ is negative in every 
dimension.
\end{lem}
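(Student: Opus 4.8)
The plan is to recognize this as the classical Gordan alternative and prove it in one stroke via the separating hyperplane theorem. Write $A$ as an $m\times n$ matrix with columns $\VEC a_1,\dots,\VEC a_n\in\R^m$. First I would reformulate both alternatives geometrically. A non-trivial non-negative $\VEC x$ with $A\VEC x=\VEC 0$ is the same as $\sum_j x_j\VEC a_j=\VEC 0$ with $\VEC x\ge\VEC 0$, $\VEC x\neq\VEC 0$; normalizing by $\sum_j x_j>0$ makes this equivalent to $\VEC 0\in C$, where $C:=\mathrm{conv}\{\VEC a_1,\dots,\VEC a_n\}$. On the other side, "$\VEC y\cdot A^T$ is negative in every dimension" says exactly that $\langle\VEC y,\VEC a_j\rangle<0$ for every column $\VEC a_j$, i.e. all columns of $A$ lie strictly on the negative side of the hyperplane through the origin with normal $\VEC y$.

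The two alternatives are clearly mutually exclusive: if $A\VEC x=\VEC 0$ with $\VEC x\ge\VEC 0$, $\VEC x\neq\VEC 0$, and also $\langle\VEC y,\VEC a_j\rangle<0$ for all $j$, then $0=\langle\VEC y,A\VEC x\rangle=\sum_j x_j\langle\VEC y,\VEC a_j\rangle$, a non-negative combination with at least one positive coefficient of strictly negative numbers, hence strictly negative --- impossible. (This half is not needed for the ``either/or'' statement, but it clarifies the picture.)

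The content is in the remaining case. Assume $A\VEC x=\VEC 0$ has no non-trivial non-negative solution, so by the first step $\VEC 0\notin C$. The set $C$ is convex and compact --- it is the image of the compact simplex $\{\VEC x\ge\VEC 0:\sum_j x_j=1\}$ under the linear map $\VEC x\mapsto A\VEC x$ (the degenerate case $n=0$, where $C$ is empty and the second alternative holds vacuously, is trivial). Since $\VEC 0$ is a point outside the closed convex set $C$, the strict separating hyperplane theorem yields $\VEC y\in\R^m$ and $\delta\in\R$ with $\langle\VEC y,\VEC c\rangle<\delta<\langle\VEC y,\VEC 0\rangle=0$ for all $\VEC c\in C$; in particular $\langle\VEC y,\VEC a_j\rangle\le\delta<0$ for every column. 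By the reformulation above, this $\VEC y$ witnesses that $\VEC y\cdot A^T$ is negative in every dimension, which completes the proof.

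I do not expect a real obstacle here: the only ingredient beyond routine manipulation is the separating hyperplane theorem for a point and a disjoint closed convex set, which I would simply cite. Should one prefer to stay within purely algebraic tools, the same statement follows from Farkas' lemma, or from strong LP duality applied to the feasibility program $\{\VEC x\ge\VEC 0,\ A\VEC x=\VEC 0,\ \sum_j x_j=1\}$; but the direct convex-separation argument above is the shortest and most self-contained.
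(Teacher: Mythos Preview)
Your proof is correct and is the standard convex-separation argument for Gordan's alternative. Note, however, that the paper does not actually prove this lemma: it is stated as a citation to \cite{Gordan} and \cite{Papa81} and used as a black box. So there is no ``paper's own proof'' to compare against; you have simply supplied a (perfectly good) proof where the paper chose to cite one.
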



The next lemma suggests that we can reduce the multidimensional problem to a 
corresponding one-dimensional problem.

\begin{lem}\label{lemm:OneDimAndMultiDim}
Given a WPS $\wps$ with a $k$-dimensional weight function $w$, and 
$(\gamma,q)\in \Gamma \times Q$, there exists a non-trivial non-negative solution for 
$\PumpMatrix((\gamma,q)) \cdot \VEC{x} = \VEC{0}$ if and only if for every 
$\vec{\lambda} \in \R^k$ there is a non-decreasing path from $(\gamma,q)$ 
to $(\gamma,q)$ that contains a pumpable pair $P=(p_1,p_2)$ such that 
$(w \cdot \vec{\lambda})(P)\geq 0$ (i.e., the weight of the path for 
one-dimensional weight function $w\cdot \vec{\lambda}$ is non-negative).
\end{lem}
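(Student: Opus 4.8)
The statement is exactly a packaging of Gordan's Lemma applied to the matrix $A = \PumpMatrix((\gamma,q))$, combined with the combinatorial fact (Corollary~\ref{cor:LongPathHasShortPumpablePair}, Lemma~\ref{lemm:Decomposable}, Lemma~\ref{lemm:MainProp}) that the rows of this fixed finite matrix faithfully capture which one‑dimensional objectives $w\cdot\vec\lambda$ are achievable by pumpable pairs along non‑decreasing $(\gamma,q)\to(\gamma,q)$ paths. So the plan is to prove each of the two implications by invoking Gordan's Lemma in the appropriate direction, using $\PumpCone$ and its relation to long pumpable paths to translate between "$\vec\lambda\cdot A^T$ negative in every dimension" and "no pumpable pair along any non‑decreasing path is $(w\cdot\vec\lambda)$-non‑negative."

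\emph{Direction ($\Rightarrow$).} Suppose $\PumpMatrix((\gamma,q))\cdot\vec x=\vec 0$ has a non‑trivial non‑negative solution $\vec x$. Fix any $\vec\lambda\in\R^k$. Then $\vec\lambda^T\cdot(A\cdot\vec x)=0$, so $(\vec\lambda^T A)\cdot\vec x=0$ with $\vec x\ge\vec 0$, $\vec x\ne\vec 0$; hence not all coordinates of the row vector $\vec\lambda^T A$ can be strictly negative — at least one coordinate, say the one indexed by the pumpable pair $P=(p_1,p_2)\in\PumpSet((\gamma,q))$ with $x_P>0$, must satisfy $(\vec\lambda^T A)_P = (w(p_1)+w(p_2))^T\vec\lambda \ge 0$, i.e. $(w\cdot\vec\lambda)(P)\ge 0$. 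By definition of $\PumpSet((\gamma,q))$ this $P$ occurs in some non‑decreasing path from $(\gamma,q)$ to $(\gamma,q)$ of length at most $\ell$, which is exactly the required witness.

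\emph{Direction ($\Leftarrow$).} Suppose $\PumpMatrix((\gamma,q))\cdot\vec x=\vec 0$ has no non‑trivial non‑negative solution; equivalently (Gordan's Lemma) there is a vector $\vec\lambda$ with $\vec\lambda^T A$ negative in every coordinate, i.e. $(w\cdot\vec\lambda)(P)<0$ for every $P\in\PumpSet((\gamma,q))$. I must show this $\vec\lambda$ is a counterexample to the right‑hand side, namely that \emph{no} non‑decreasing path from $(\gamma,q)$ to $(\gamma,q)$ contains a pumpable pair $P'$ with $(w\cdot\vec\lambda)(P')\ge 0$. The subtle point is that a non‑decreasing path may contain pumpable pairs \emph{longer} than $\ell$ that are not literally rows of $A$. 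This is where the earlier decomposition machinery is used: any pumpable pair $P'=(p_1',p_2')$ occurring in a non‑decreasing path is itself a pair of (non‑decreasing, suitably nested) path segments, and by Corollary~\ref{cor:LongPathHasShortPumpablePair} / Lemma~\ref{lemm:Decomposable} applied to $p_1'$ and to $p_2'$, one decomposes $w(P')=w(p_1')+w(p_2')$ as a non‑negative integer combination of weights of short pumpable pairs from $\PumpSet((\gamma,q))$ plus bounded leftover terms; pumping $P'$ many times makes the $w\cdot\vec\lambda$ value dominated by these short‑pair contributions, each strictly negative, so for $P'$ pumped enough times $(w\cdot\vec\lambda)$ is negative — and since $(w\cdot\vec\lambda)(P'^{\,t}) = t\cdot(w\cdot\vec\lambda)(P')$, this already forces $(w\cdot\vec\lambda)(P')<0$ for $P'$ itself. (Alternatively, and more cleanly, one applies Lemma~\ref{lemm:MainProp} directly: the hypothesis $(w\cdot\vec\lambda)(P)<0$ for all $P\in\PumpSet((\gamma,q))$ says $\PumpCone((\gamma,q))$ lies strictly in the open half‑space $\{\vec z : \vec z^T\vec\lambda<0\}$, hence $\PumpCone((\gamma,q))\cap\Rkplus=\emptyset$ after noting the self‑loop structure, and then Lemma~\ref{lemm:MainProp} yields $\epsilon>0$, $m$ with $w_t(\pi)\le m-\epsilon|\pi|$ in some dimension $t$ for every such $\pi$, which in particular rules out any non‑decreasing cycle — equivalently any pumpable pair — that is $(w\cdot\vec\lambda)$-non‑negative.)

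\emph{Main obstacle.} The routine part is the two applications of Gordan's Lemma; the real work is the ($\Leftarrow$) direction's handling of \emph{long} pumpable pairs — pumpable pairs that appear in non‑decreasing paths but are not rows of the finite matrix $\PumpMatrix((\gamma,q))$. The bridge is precisely Lemma~\ref{lemm:Decomposable} (every non‑decreasing path, hence every component of a long pumpable pair, decomposes into short pumpable pairs plus a bounded remainder) together with Lemma~\ref{lemm:MainProp}; making this decomposition argument apply to \emph{pairs} of nested segments rather than to a single path, while keeping the validity (stack‑nesting) conditions of the pumping intact, is the delicate point I expect to have to argue carefully.
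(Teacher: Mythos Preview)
Your core approach---applying Gordan's Lemma to $A=\PumpMatrix((\gamma,q))$---is exactly what the paper does; its entire proof is the single sentence ``straightforward application of Gordan's Lemma to the matrix $\PumpMatrix((\gamma,q))$.'' Your $(\Rightarrow)$ direction matches this precisely.

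Where you diverge is in the $(\Leftarrow)$ direction: you read the right-hand side of the lemma as quantifying over \emph{arbitrary} pumpable pairs occurring in non-decreasing $(\gamma,q)$-cycles, and then invoke Lemmas~\ref{lemm:Decomposable} and~\ref{lemm:MainProp} to reduce long pumpable pairs to short ones. The paper does not intend this reading. By the paper's conventions, $\PumpSet((\gamma,q))$ (and hence the rows of $\PumpMatrix((\gamma,q))$) already \emph{is} the collection of pumpable pairs of length at most $\ell$ occurring in such paths, and the lemma is stated so that ``a pumpable pair $P$'' on the right-hand side refers to an element of this set---i.e., a row of the matrix. This is confirmed by how the lemma is used downstream: the separation oracle of Proposition~\ref{prop:WeCanFindOneDimensionalPump} explicitly searches for a pumpable pair with $|P|\le\ell$, and the constraints in the linear program of Proposition~\ref{prop:PolyTimeForAxEqual0} are exactly the rows of $A$. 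With that reading, $(\Leftarrow)$ is also immediate from Gordan: if $A\vec x=\vec 0$ has no non-trivial non-negative solution, Gordan supplies $\vec\lambda$ with $\vec\lambda^T A<\vec 0$ componentwise, i.e., $(w\cdot\vec\lambda)(P)<0$ for every $P\in\PumpSet((\gamma,q))$, and that is the negation of the right-hand side.

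So your extra machinery is not wrong in spirit, but it is unnecessary here, and your sketch of it (particularly the passage from ``$\PumpCone\cap\Rkplus=\emptyset$ plus Lemma~\ref{lemm:MainProp}'' to ``no pumpable pair is $(w\cdot\vec\lambda)$-non-negative'') would itself need more care to make rigorous. The lemma really is a one-liner once the right-hand side is understood as indexed by the rows of $\PumpMatrix((\gamma,q))$.
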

\begin{proof}
The proof is straightforward application of Gordan's Lemma to the matrix 
$\PumpMatrix((\gamma,q))$.
\end{proof}

\begin{prop}\label{prop:WeCanFindOneDimensionalPump}
There is a polynomial time algorithm that given WPS $\wps$ with $k$-dimensional
weight function $w$, $(\gamma,q)\in \Gamma \times Q$, a vector 
$\vec{\lambda} \in \Q^k$, and a rational number $r\in\Q$ decides 
if there exists a pumpable pair of paths 
$P$ in a non-decreasing cyclic path that begins at $(\gamma,q)$ in $\wps$, 
with $\frac{(w \cdot\vec{\lambda})(P)}{|P|} > r$ and $|P|\leq \ell$, and if such 
pair exists, it returns $\frac{w(P)}{|P|}$.
\end{prop}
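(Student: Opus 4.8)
The plan is to reduce this to a one-dimensional pushdown mean-payoff question that is already known to be solvable in polynomial time by~\cite{CV12}, exploiting the bound $|P| \leq \ell$ to keep everything polynomial-size. First I would observe that a pumpable pair $P = (p_1,p_2)$ occurring in a non-decreasing cyclic path from $(\gamma,q)$ to $(\gamma,q)$ is, by definition, a pair of finite edge-sequences that can be repeated together arbitrarily often; concatenating $p_1$ and $p_2$ and asking for $\frac{(w\cdot\vec\lambda)(P)}{|P|} > r$ is equivalent to asking for a certain ``weighted cycle'' of high mean value in an auxiliary finite structure that records, in its state, the portion of the stack relevant to validity of pumping. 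Concretely, because $|P| \leq \ell$ and $\ell = (|Q|\cdot|\Gamma|)^{(|Q|\cdot|\Gamma|)^2+1}$ is a fixed polynomial-time-computable quantity, a pumpable pair of this length corresponds to a non-decreasing path of additional stack height at most $(|Q|\cdot|\Gamma|)^2$ together with a bounded amount of extra structure; this lets me build a finite weighted graph $\wh G$ of size polynomial in $\wps$ (for fixed bound $\ell$ treated as the given parameter, the construction is of the stated complexity) whose simple cycles correspond exactly to candidate pumpable pairs, with edge weights given by $(w\cdot\vec\lambda)/1$ appropriately distributed so that the mean weight of a cycle equals $\frac{(w\cdot\vec\lambda)(P)}{|P|}$.

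Next I would note that deciding whether $\wh G$ has a reachable cycle of mean weight strictly greater than $r$ is exactly a one-dimensional mean-payoff threshold question on a finite graph (equivalently, after shifting weights by $r$ as in the Remark, whether there is a reachable cycle of positive mean weight), and this is solvable in polynomial time; moreover the algorithm for one-dimensional pushdown graphs of~\cite{CV12}, invoked on the pushdown system with the one-dimensional weight function $w\cdot\vec\lambda$ (again shifted by $r$), already detects the existence of such a pumpable cyclic behaviour and can be made to report the witnessing cycle. Since $\vec\lambda \in \Q^k$ and $r\in\Q$, I can clear denominators so that $w\cdot\vec\lambda$ becomes integral with weights polynomially bounded, and then the polynomial-time guarantee of~\cite{CV12} applies directly. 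Once the witness pumpable pair $P$ is recovered, returning $\frac{w(P)}{|P|}$ is a trivial arithmetic step since $|P|\leq\ell$ is bounded.

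The step I expect to be the main obstacle is the bookkeeping that shows a pumpable pair of bounded length $\leq\ell$ can be detected within a polynomial-size finite reachability/cycle structure and that the one-dimensional pushdown algorithm of~\cite{CV12} (designed to find paths of good average weight, not explicitly ``pumpable pairs'') actually certifies the pumpable pair we want — i.e., that good average weight of an infinite path in the one-dimensional instance, localised at the top configuration $(\gamma,q)$, forces the existence of a short pumpable pair with $\frac{(w\cdot\vec\lambda)(P)}{|P|} > r$. This is essentially the one-dimensional analogue of Lemma~\ref{lemm:MainProp} and Proposition~\ref{prop:GoodPathIffGoodMatrix}: in one dimension $\PumpCone(\gamma,q)\cap\Rplus\neq\emptyset$ iff some single pumpable pair has non-negative (here, $>r$ after shifting) normalised weight, so the ``cone'' collapses to a maximum over finitely many short pumpable pairs and the one-dimensional pushdown algorithm exactly tests this. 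The remaining parts — clearing denominators, shifting by $r$, bounding weights, and extracting and normalising the witness — are routine.
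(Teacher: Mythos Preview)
Your high-level direction---reduce to a one-dimensional question via $w\cdot\vec\lambda$ (after clearing denominators and shifting by $r$) and lean on the machinery of~\cite{CV12}---is the right one, and it is what the paper does. But two of your steps contain real gaps.

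First, the quantity $\ell=(|Q|\cdot|\Gamma|)^{(|Q|\cdot|\Gamma|)^2+1}$ is \emph{exponential} in the input size, not polynomial; ``polynomial-time computable'' as an integer is not the same as polynomially bounded. So you cannot build a polynomial-size finite graph $\wh G$ whose simple cycles are the pumpable pairs of length $\leq\ell$, and your claim that such a pair has additional stack height at most $(|Q|\cdot|\Gamma|)^2$ is not correct (the stack height inside a pumpable pair can be as large as its length). What the paper actually does is compute the stack-height-bounded summary functions $s_d$ only up to $d=(|Q|\cdot|\Gamma|)^2+1$, which \emph{is} polynomial, and then works with the summary graph: positive cycles there, or an $\omega$-edge arising from $s_{d+1}>s_d$, witness the existence of a positive-weight pumpable pair. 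That is the polynomial-size finite object, but it is not the object you describe.

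Second, and more importantly, your last step---``returning $\frac{w(P)}{|P|}$ is a trivial arithmetic step since $|P|\leq\ell$''---fails for the same reason: $|P|$ can be exponential, so you cannot write $P$ out edge by edge and sum. The whole technical content of the proposition is how to compute the $k$-dimensional vector $w(P)$ and the length $|P|$ \emph{without} materialising $P$. The paper does this by augmenting each summary value $s_i(q_1,\gamma,q_2)$ with the pair $\VECT(s_i(q_1,\gamma,q_2))=(w(P),|P|)$ of the witness path, maintained inductively alongside the one-dimensional summary; then, when $s_{d+1}>s_d$ certifies a positive pumpable pair, Lemma~\ref{lemm:FindingPumpInSameLevel} reconstructs a depth-$O(d)$ decomposition $\rho^*$ of the witness into push/pop transitions and nested summary calls, and reads off $w(P)$ and $|P|$ as a polynomial-length sum of stored $\VECT$ values. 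This bookkeeping is exactly the ``main obstacle'' you flagged, and it is not routine; your proposal does not supply it.
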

Intuitively, the algorithm for Proposition~\ref{prop:WeCanFindOneDimensionalPump} is based on the algorithm for solving WPSs with 
one-dimensional mean-payoff objectives. 
We postpone the technically detailed proof to Section~\ref{sec:OracleForOneDim}. 
We first show how to use the result of the proposition and a result
from linear programming to solve the problem.
We first state the result for linear programming.

\smallskip\noindent{\bf Linear program with exponential constraints and polynomial-time separating oracle.}
Consider a linear program over $n$ variables and exponentially many constraints in $n$.
Given a polynomial time \emph{separating oracle} that for every point in space returns in 
polynomial time whether the point is feasible, and if infeasible returns a violated
constraint, the linear program can be solved in polynomial time using the ellipsoid method~\cite{GLS81}.
We use the result to show the following result.

\begin{prop}\label{prop:PolyTimeForAxEqual0}
There exists a polynomial time algorithm that decides whether for a given state 
$q$ and a stack alphabet symbol $\gamma$ there exists a non-trivial non-negative solution for 
$\PumpMatrix((\gamma,q)) \cdot \VEC{x} = \VEC{0}$.
\end{prop}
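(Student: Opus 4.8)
The plan is to set up the feasibility question for $\PumpMatrix((\gamma,q)) \cdot \VEC{x} = \VEC{0}$ as a linear program with polynomially many variables but exponentially many constraints, and then invoke the ellipsoid method with a polynomial-time separating oracle, where the oracle is exactly the algorithm furnished by Proposition~\ref{prop:WeCanFindOneDimensionalPump}. First I would reformulate the problem via Gordan's Lemma (as already recorded in Lemma~\ref{lemm:OneDimAndMultiDim}): a non-trivial non-negative solution for $\PumpMatrix((\gamma,q)) \cdot \VEC{x} = \VEC{0}$ fails to exist if and only if there is a vector $\vec{\lambda} \in \R^k$ with $\vec{\lambda} \cdot \PumpMatrix((\gamma,q))^T < \vec{0}$, i.e.\ $(w\cdot\vec{\lambda})(P) < 0$ for every pumpable pair $P$ (of length at most $\ell$) occurring in a non-decreasing cyclic path at $(\gamma,q)$. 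Thus deciding our feasibility question amounts to deciding whether the system of linear inequalities $\{ (w\cdot\vec{\lambda})(P) < 0 : P \in \PumpSet((\gamma,q)) \}$ is \emph{infeasible} in $\vec{\lambda}$. There is one variable per dimension (so $k$ variables) and one constraint per element of $\PumpSet((\gamma,q))$; since $\ell = (|Q|\cdot|\Gamma|)^{(|Q|\cdot|\Gamma|)^2+1}$, the number of constraints is at most exponential, but the number of variables is polynomial (in fact $k$).

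Next I would massage the strict inequalities into a form amenable to the ellipsoid method: replace $(w\cdot\vec{\lambda})(P) < 0$ by $(w\cdot\vec{\lambda})(P) \le -1$ after noting that the $\vec{\lambda}$ witnessing infeasibility can be taken with bounded integer (or bounded-denominator rational) entries --- this is the same bounding argument used in Lemma~\ref{lem:FinitelyManyLambda} via \cite[Lemma~2]{Papa81}, applied here to the rows of $\PumpMatrix((\gamma,q))$, each of which lies in $(\Z^{\pm}_{\ell\cdot W})^k$. One also adds a bounding box $\|\vec{\lambda}\|_\infty \le M'$ for a suitable polynomially-sized bound $M'$, so the feasible region is bounded, and one works with a small $\delta$-perturbation so that feasibility of the relaxed system is decidable by the ellipsoid method in time polynomial in $k$, $M'$, and the running time of the separating oracle. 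The separating oracle works as follows: given a candidate rational vector $\vec{\lambda}$, run the algorithm of Proposition~\ref{prop:WeCanFindOneDimensionalPump} with this $\vec{\lambda}$ and threshold $r = -1/\ell$ (or the appropriate perturbed threshold); if it reports a pumpable pair $P$ with $\frac{(w\cdot\vec{\lambda})(P)}{|P|} > r$, this $P$ yields a violated constraint (the inequality $(w\cdot\vec{\lambda})(P) \le -1$ fails), which we return together with its weight vector $w(P)$ obtained from the oracle; if it reports no such pair, then all constraints are satisfied and $\vec{\lambda}$ is (essentially) feasible. Since Proposition~\ref{prop:WeCanFindOneDimensionalPump} runs in polynomial time, the oracle runs in polynomial time, and hence by the cited result~\cite{GLS81} the whole linear feasibility problem --- and therefore, by Gordan's Lemma, the original question about $\PumpMatrix((\gamma,q)) \cdot \VEC{x} = \VEC{0}$ --- is decided in polynomial time. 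Finally one iterates over all $(\gamma,q) \in \Gamma \times Q$, of which there are polynomially many.

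The main obstacle, and the step that requires genuine care rather than routine bookkeeping, is ensuring that the separating oracle returns a constraint that is actually \emph{a constraint of our program}, i.e.\ that the pumpable pair $P$ produced by Proposition~\ref{prop:WeCanFindOneDimensionalPump} is always short enough ($|P| \le \ell$) and genuinely lies in a non-decreasing \emph{cyclic} path at $(\gamma,q)$, so that its weight vector is indeed a row of $\PumpMatrix((\gamma,q))$ as we defined it (with the $\ell$-length truncation). This is exactly what the statement of Proposition~\ref{prop:WeCanFindOneDimensionalPump} guarantees, so the content is really in that proposition (proved in Section~\ref{sec:OracleForOneDim}); here the work is to match its interface to the linear program and to verify that the thresholds, perturbations, and rounding of $\vec{\lambda}$ are chosen consistently so that ``no violated constraint found by the oracle'' is equivalent to ``the point is feasible'', which in turn is equivalent (via the bounded-coefficient version of Gordan's Lemma) to nonexistence of the desired $\vec{x}$.
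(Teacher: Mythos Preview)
Your proposal is correct and follows essentially the same approach as the paper: reformulate via Gordan's Lemma (Lemma~\ref{lemm:OneDimAndMultiDim}), set up a linear program in the $k$ variables $\vec{\lambda}$ with one constraint per element of $\PumpSet((\gamma,q))$, and solve it with the ellipsoid method using Proposition~\ref{prop:WeCanFindOneDimensionalPump} as the separating oracle.

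The one difference worth noting is in the LP formulation. The paper works with \emph{average} weight vectors (it takes as rows of $A$ the vectors $w(P)/|P|$ rather than $w(P)$), adds a normalization $\sum_i \lambda_i = 1$, and casts the problem as \emph{minimizing} a slack variable $r$ subject to $\vec{\lambda}\cdot A^T \le \vec{r}$; then the answer is read off from the sign of the optimal $r$. This has the advantage that the oracle interface matches exactly: Proposition~\ref{prop:WeCanFindOneDimensionalPump} tests whether some pumpable pair has \emph{average} exceeding a given threshold, so a candidate $(\vec{\lambda},r)$ is feasible iff the oracle with that $\vec{\lambda}$ and that $r$ finds nothing. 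Your feasibility formulation with total weights $(w\cdot\vec{\lambda})(P)\le -1$ is also sound, but the translation between ``total weight $>-1$'' and ``average weight $>r$'' is not a single fixed threshold (it depends on $|P|$), so your choice $r=-1/\ell$ needs a little more care to make the two directions of the equivalence go through; the paper's averaging trick sidesteps this bookkeeping entirely. Either way the substance is the same.
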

\begin{proof}
Conceptually, given $q$ and $\gamma$, we compute a matrix $A$, such that each row in $A$ 
corresponds to the average weight vector of a row in $\PumpMatrix((\gamma,q))$ 
(that is, the weight of a pumpable pair divided by its length), and solves the following 
linear programming problem:
For variables $r$ and $\vec{\lambda} = (\lambda_1,\dots,\lambda_k)$, the objective 
function is to minimize $r$ subject to the constraints below:
\begin{equation}
\vec{\lambda}\cdot A^T\leq \VEC{r} \quad \mbox{ where $\VEC{r} = (r,r,\dots,r)^T$}
\end{equation}
\begin{equation}
\sum_{i=1}^k \lambda_i = 1
\end{equation}
Once the minimal $r$ is computed, by Lemma~\ref{lemm:OneDimAndMultiDim}, 
there exists a solution for $\PumpMatrix((\gamma,q)) \cdot \VEC{x} = 0$ 
if and only if $r \geq 0$.

The number of rows of $A$ in the worst case is exponential (to be precise at most $\ell\cdot (2\cdot W \cdot \ell)^k$,
since the length of the path is at most $\ell$, the sum of weights is between $-W\cdot \ell$ and 
$W\cdot \ell$ and there are $k$ dimensions).
However, we do not enumerate the constraints of the linear programming problem explicitly but 
use the result of linear programs with polynomial time separating oracle.
By Proposition~\ref{prop:WeCanFindOneDimensionalPump} we have an algorithm that verifies 
the feasibility of a solution (that is, an assignment for $\vec{\lambda}$ and $r$) 
and if the solution is infeasible it returns a constraint that is not satisfied by the solution.
Thus the result of Proposition~\ref{prop:WeCanFindOneDimensionalPump} provides the 
desired polynomial-time separating oracle and we have the desired result.
\end{proof}

Hence, we get the following theorem.
\begin{thm}\label{thm:LimInfInP}
Given a WPS $\wps$ with multidimensional weight function $w$, we can decide in 
polynomial time whether there exists a path $\pi$ such that $\LimAvg(\pi) \geq \VEC{0}$.
\end{thm}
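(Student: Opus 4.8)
The plan is to assemble the machinery built in this section into a polynomial-time decision procedure. By Proposition~\ref{prop:GoodPathIffGoodMatrix}, the question of whether there exists an infinite path $\pi$ from the initial configuration $(\bot,q_0)$ with $\LimAvg(\pi) \geq \VEC{0}$ is equivalent to asking whether there exists a \emph{reachable} non-decreasing cycle $\pi$ with $\Rkplus \cap \PumpCone(\pi) \neq \emptyset$. First I would argue that this, in turn, is equivalent to asking whether there is a pair $(\gamma,q) \in \Gamma \times Q$ that is reachable as a top configuration from $(\bot,q_0)$ and for which $\Rkplus \cap \PumpCone((\gamma,q)) \neq \emptyset$: indeed, by Lemma~\ref{lemm:Decomposable} every non-decreasing cycle from $(\gamma,q)$ to $(\gamma,q)$ decomposes into short pumpable pairs drawn from $\PumpSet((\gamma,q))$ plus a short non-decreasing path, so the cone of some such cycle meets $\Rkplus$ exactly when $\PumpCone((\gamma,q))$ does. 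As observed immediately after Proposition~\ref{prop:GoodPathIffGoodMatrix}, by adding $k$ auxiliary self-loop transitions at every state (the $i$-th having weight $-1$ in dimension $i$ and $0$ elsewhere), the condition $\PumpMatrix((\gamma,q)) \cdot \VEC{x} \geq \VEC{0}$ having a nontrivial non-negative solution is replaced, without loss of generality, by $\PumpMatrix((\gamma,q)) \cdot \VEC{x} = \VEC{0}$ having one.

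The algorithm then iterates over all pairs $(\gamma,q) \in \Gamma \times Q$ — there are only $|\Gamma| \cdot |Q|$ of them. For each pair I would first check, using the standard polynomial-time saturation/reachability algorithm for pushdown systems, whether $(\gamma,q)$ occurs as the top configuration of some configuration reachable from $(\bot,q_0)$; and if so, apply Proposition~\ref{prop:PolyTimeForAxEqual0} to decide in polynomial time whether $\PumpMatrix((\gamma,q)) \cdot \VEC{x} = \VEC{0}$ has a nontrivial non-negative solution. The procedure answers ``yes'' iff this holds for at least one reachable $(\gamma,q)$. Correctness is immediate from Proposition~\ref{prop:GoodPathIffGoodMatrix} together with the reduction above (note that the infinite path constructed in the proof of Proposition~\ref{prop:GoodPathIffGoodMatrix} is built entirely by pumping non-decreasing cycles at $(\gamma,q)$, so reaching $(\gamma,q)$ once suffices), and running a polynomial subroutine polynomially many times keeps the total running time polynomial.

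The only genuinely nontrivial ingredient is Proposition~\ref{prop:PolyTimeForAxEqual0}, whose proof is already in place: it forms the dual linear program (minimize $r$ subject to $\vec{\lambda} \cdot A^T \leq \VEC{r}$ and $\sum_{i=1}^k \lambda_i = 1$), which has only $k+1$ variables but exponentially many constraints; by Gordan's Lemma via Lemma~\ref{lemm:OneDimAndMultiDim} its optimum is $\geq 0$ precisely when the primal admits the desired non-negative solution; and Proposition~\ref{prop:WeCanFindOneDimensionalPump} supplies a polynomial-time separating oracle, so the ellipsoid method solves the program in polynomial time. The hard part of the whole development — extracting that separating oracle from the known polynomial-time algorithm for \emph{one-dimensional} pushdown mean-payoff objectives of~\cite{CV12} — has therefore been discharged already; what remains for the proof of Theorem~\ref{thm:LimInfInP} itself is just the bookkeeping above: the reduction to per-$(\gamma,q)$ feasibility queries, the reachability test, and the observation that the running time is polynomial.
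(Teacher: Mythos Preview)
Your proposal is correct and follows essentially the same route as the paper: reduce via Proposition~\ref{prop:GoodPathIffGoodMatrix} to checking, for each of the $|Q|\cdot|\Gamma|$ pairs $(\gamma,q)$, whether $\PumpMatrix((\gamma,q))\cdot\VEC{x}=\VEC{0}$ has a nontrivial non-negative solution (after the self-loop trick), and discharge that via Proposition~\ref{prop:PolyTimeForAxEqual0}. One small remark: your passage from ``some non-decreasing cycle $\pi$ at $(\gamma,q)$ has $\PumpCone(\pi)\cap\Rkplus\neq\emptyset$'' to ``$\PumpCone((\gamma,q))\cap\Rkplus\neq\emptyset$'' is more cleanly justified by Lemma~\ref{lemm:MainProp} (contrapositive) than by Lemma~\ref{lemm:Decomposable}, exactly as in the proof of Proposition~\ref{prop:GoodPathIffGoodMatrix}; the decomposition lemma speaks about $w(\pi)$ rather than about the pumpable pairs of $\pi$, so on its own it does not quite give the cone inclusion you assert.
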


\subsection{Technical detailed proof of Proposition~\ref{prop:WeCanFindOneDimensionalPump}}\label{sec:OracleForOneDim}
In this section we prove Proposition~\ref{prop:WeCanFindOneDimensionalPump}.
Throughout this section, we assume WLOG that $\vec{\lambda}$ is a vector of integers and that $r = 0$.
Intuitively the solution is very similar to solving WPS with one-dimensional objectives,
with some technical and tedious modifications.
We will present the relevant details.
Let $\wps$ be a WPS with $k$-dimensional weight function $w$, and $w\cdot \vec{\lambda}$ be the 
one-dimensional weight function.
Let $d = (|Q|\cdot |\Gamma|)^2 + 1$.
We now recall the notion of summary function as defined in~\cite{CV12}.
In the definition of summary function below we consider the weight function 
$w\cdot \vec{\lambda}$.

\smallskip\noindent{\em Summary function.}
Let $\wps$ be a WPS. 
For $\alpha\in \Gamma^*$ we define $s_{\alpha} : Q \times \Gamma \times Q \to \Set{-\infty} \cup \Z \cup \Set{\omega}$ as following.
\begin{enumerate}
\item $s_\alpha(q_1,\gamma,q_2) = \omega$ iff for every $n\in\Nat$ there 
exists a non-decreasing path from $(\alpha \gamma, q_1)$ to $(\alpha \gamma, q_2)$ 
with weight at least $n$.
\item $s_\alpha(q_1,\gamma,q_2) = z\in\Z$ iff the weight of the maximum 
weight non-decreasing path from configuration $(\alpha \gamma, q_1)$ to 
configuration $(\alpha \gamma, q_2)$ is $z$.
\item $s_\alpha(q_1,\gamma,q_2) = -\infty$ iff there is no non-decreasing path 
from $(\alpha \gamma, q_1)$ to $(\alpha \gamma, q_2)$. 
\end{enumerate}

\begin{rem}\label{rem:IndependentOfAlpha}
For every $\alpha_1, \alpha_2\in \Gamma^*$: $s_{\alpha_1} \equiv s_{\alpha_2}$.
\end{rem}
Due to Remark~\ref{rem:IndependentOfAlpha} it is enough to consider only $s \equiv s_{\bot}$.
The computation of the summary function will be achieved by considering 
stack height bounded summary functions defined below.

\smallskip\noindent{\em Stack height bounded summary function.}
For every $d\in\Nat$, the \emph{stack height bounded summary function} 
$s_d : Q \times \Gamma \times Q \to \Set{-\infty} \cup \Z \cup \Set{\omega}$ 
is defined as follows: 
(i)~$s_d(q_1,\gamma,q_2) = \omega$ iff for every $n\in\Nat$ there exists a 
non-decreasing path from $(\bot\gamma, q_1)$ to $(\bot \gamma, q_2)$ with 
weight at least $n$ and additional stack height at most $d$;
(ii)~$s_d(q_1,\gamma,q_2) = z$ iff the weight of the maximum weight 
non-decreasing path from $(\bot\gamma,q_1)$ to $(\bot\gamma,q_2)$ with 
additional stack height at most $d$ is $z$; and
(iii)~$s_d(q_1,\gamma,q_2) = -\infty$ iff there is no non-decreasing path 
with additional stack height at most $d$
from $(\bot \gamma, q_1)$ to $(\bot \gamma, q_2)$.
Before presenting the key lemma we recall the computation of $s_{i+1}$ from 
$s_i$ that will also introduce the relevant notions required for the lemma.

\smallskip\noindent\emph{Computation of $s_{i+1}$ from $s_{i}$ and $\wps$.}
Let $G_{\wps}$ be the finite weighted graph that is formed by all the 
configurations of $\wps$ with stack height either one or two, that is, 
the vertices are of the form $(\alpha, q)$ where $q\in Q$ and $\alpha \in 
\Set{\bot \cdot \gamma, \bot \cdot \gamma_1 \cdot \gamma_2 
\mid \gamma, \gamma_1, \gamma_2 \in \Gamma}$.
The edges (and their weights) are according to the transitions of $\wps$: 
formally, 
(i)~(Skip edges): for vertices $(\bot \cdot \alpha, q)$ we have an edge 
to $(\bot \cdot \alpha,q')$ iff $e=(q, \Top(\alpha), \Skip,q')$ is an edge
in $\wps$ (and the weight of the edge in $G_{\wps}$ is 
$(w\cdot \vec{\lambda})(e)$) where 
$\alpha=\gamma$ or $\alpha=\gamma_1\cdot \gamma_2$ for 
$\gamma,\gamma_1,\gamma_2 \in \Gamma$; 
(ii)~(Push edges): for vertices $(\bot \cdot \gamma, q)$ we have an edge to 
$(\bot \cdot \gamma \cdot \gamma', q')$ iff 
$e=(q,\gamma, \Push(\gamma'),q')$ is an edge in $\wps$ 
(and the weight of the edge in $G_{\wps}$ is $(w\cdot \vec{\lambda})(e)$)  
for $\gamma,\gamma'\in \Gamma$; and
(iii)~(Pop edges): for vertices $(\bot \cdot \gamma \cdot \gamma', q)$
we have an edge to $(\bot \cdot \gamma, q')$ iff 
$e=(q,\gamma', \Pop, q')$ is an edge in $\wps$ 
(and the weight of the edge in $G_{\wps}$ is $(w\cdot \vec{\lambda})(e)$)  
for $\gamma,\gamma'\in \Gamma$.
Intuitively, $G_{\wps}$ allows skips, push pop pairs, and only one 
additional push.
Note that $G_{\wps}$ has at most $2\cdot |Q|\cdot |\Gamma|^2$ vertices, 
and can be constructed in polynomial time.

For every $i \geq 1$, given the function $s_i$, the graph $G_{\wps}^i$ is 
constructed from $G_{\wps}$ as follows: adding edges 
$((\bot \gamma_1 \gamma_2, q_1), (\bot \gamma_1 \gamma_2, q_2))$ (if 
the edge does not exist already) and changing its weight to 
$s_i(q_1,\gamma_2,q_2)$ for every $\gamma_1, \gamma_2 \in \Gamma$ and 
$q_1, q_2 \in Q$.
The value of $s_{i+1}(q_1,\gamma,q_2)$ is exactly the weight of the 
maximum weight path between $(\bot \gamma, q_1)$ and $(\bot \gamma, q_2)$ in 
$G_{\wps}^i$ (with the following convention: $-\infty < z < \omega$, 
$z + \omega = \omega$ and $z + -\infty = \omega + -\infty = -\infty$ 
for every $z\in\Z$). 
If in $G_{\wps}^i$ there is a path from $(\bot \gamma, q_1)$ to 
$(\bot \gamma, q_2)$ that contains a cycle with positive weight,
then we set $s_{i+1}(q_1,\gamma,q_2) = \omega$.
Hence, given $s_i$ and $\wps$, the construction of $G_{\wps}^i$ is 
achieved in polynomial time, and the computation of $s_{i+1}$ 
is achieved using the Bellman-Ford algorithm~\cite{CLRS-Book} in 
polynomial time 
(the maximum weight path is the shortest weight if we define the edge length 
as the negative of the edge weight).
Also note that the Bellman-Ford algorithm reports cycles with positive weight 
(that is, negative length) which is required to set $\omega$ values of 
$s_{i+1}$.
It follows that we can compute $s_{i+1}$ given $s_i$ and $\wps$ in 
polynomial time.
In the computation of the summary function $s_i$ we also store 
along with $s_i(q_1,\gamma,q_2)$ the weight vector $w(P)$ and the 
length $|P|$ of a witness path $P$ that is maximal weight (according
to $w\cdot \vec{\lambda}$) shortest non-decreasing path from $(\gamma,q_1)$ 
to $(\gamma,q_2)$ with additional stack height at most $i$. 
We denote by $\VECT(s_i(q_1,\gamma,q_2))$ the tuple $(w(P),|P|)$.

\begin{lem}\label{lemm:FindingPumpInSameLevel}
Let $q_1,q_2\in Q$, $\gamma\in\Gamma$ and $d > (|Q|\cdot|\Gamma|)^2$, such that $s_{d}(q_1,\gamma,q_2) > 
s_{d-1}(q_1,\gamma,q_2)$, and let $\pi$ be the shortest non-decreasing path from $(\bot\gamma,q_1)$ to $(\bot\gamma,q_2)$ 
with weight $s_{d+1}(q_1,\gamma,q_2)$ and additional stack height $d$, then the following assertions hold:
\begin{enumerate}
\item The path $\pi$ contains a pumpable pair of paths $P = (p_1,p_2)$ with $(w\cdot \vec{\lambda})(P) > 0$ with 
length at most $\ell$. 
\item We can compute $w(P)$, and $\frac{w(P)}{|P|}$ in polynomial time.
\end{enumerate}
\end{lem}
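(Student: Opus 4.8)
The plan is to prove Lemma~\ref{lemm:FindingPumpInSameLevel} by analyzing the structure of the witness path $\pi$ and exploiting the fact that its weight strictly increased when we allowed one more unit of additional stack height. First I would observe that since $d > (|Q|\cdot|\Gamma|)^2$ and $\ASH(\pi) = d$, Lemma~\ref{lemm:EveryDeepPathHasAPumpablePair} guarantees that $\pi$ contains at least one pumpable pair of paths $P = (p_1,p_2)$; by Corollary~\ref{cor:LongPathHasShortPumpablePair} (applied to the relevant non-decreasing sub-segment) such a pair can be taken of length at most $\ell$. The crux is to argue that some such pumpable pair must have \emph{strictly positive} weight under $w\cdot\vec{\lambda}$.

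The key step is the following: suppose, for contradiction, that every pumpable pair of paths occurring in $\pi$ (of length $\le \ell$) has non-positive $w\cdot\vec{\lambda}$-weight. Then I would show that we can ``prune'' $\pi$ down to a non-decreasing path $\pi'$ from $(\bot\gamma,q_1)$ to $(\bot\gamma,q_2)$ whose additional stack height is at most $d-1$ and whose weight is at least $s_d(q_1,\gamma,q_2) = w(\pi)$ --- indeed, the pruning step removes pumpable pairs (set their repetition count to zero, as in the proof of Lemma~\ref{lemm:Decomposable}), which can only increase the weight if each removed pair is non-positive, and repeated pruning eventually drives the additional stack height below $d$ (since any path of additional stack height exceeding $(|Q|\cdot|\Gamma|)^2$ has a pumpable pair to remove, and $d-1 \ge (|Q|\cdot|\Gamma|)^2$). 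But then $s_{d-1}(q_1,\gamma,q_2) \ge w(\pi) = s_d(q_1,\gamma,q_2)$, contradicting the hypothesis $s_d(q_1,\gamma,q_2) > s_{d-1}(q_1,\gamma,q_2)$. (One should be slightly careful: the hypothesis as literally stated compares $s_d$ with $s_{d-1}$, while $\pi$ realizes $s_{d+1}$ with additional stack height exactly $d$; I would reconcile this by noting $s_{d+1} \ge s_d > s_{d-1}$ and that the witness for $s_{d+1}$ having additional stack height exactly $d$ means its pruning is governed by the $s_d$ versus $s_{d-1}$ gap. The monotone tower $s_0 \le s_1 \le \cdots$ stabilizes once no new pumpable pairs can be realized, which is precisely the content here.) Hence some pumpable pair $P$ in $\pi$ of length at most $\ell$ satisfies $(w\cdot\vec{\lambda})(P) > 0$, proving item~1.

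For item~2, I would exploit that the summary functions $s_i$ were computed together with witness data: along with each value $s_i(q_1,\gamma,q_2)$ we stored $\VECT(s_i(q_1,\gamma,q_2)) = (w(P),|P|)$ for a shortest maximal-weight witness, and each $s_i$ is computable in polynomial time from $s_{i-1}$ and $G_{\wps}$ via Bellman-Ford. Thus the gap $s_d > s_{d-1}$ is detected in polynomial time, and from the witness path for $s_{d+1}$ (equivalently, from the $G_{\wps}^{d-1}$ and $G_{\wps}^{d}$ graphs where the extra push appears) one can localize a short pumpable pair $P$: it corresponds to the combination of the ``extra push'' edge, a same-level cycle recorded by $s_{d-1}$ on the pushed symbol, and the matching pop, all of which carry stored weight-vector information. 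Accumulating these stored $w(P)$ vectors (a bounded number of summary-function entries, each a $k$-vector of weights bounded polynomially times $\ell$) yields $w(P)$ and hence $\frac{w(P)}{|P|}$ in polynomial time.

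The main obstacle I anticipate is the careful bookkeeping in the pruning argument --- specifically, making precise that removing a pumpable pair from a non-decreasing path keeps it non-decreasing with the same endpoints (which follows from the definition, since pumping zero times is a valid path), that the weight does not decrease under the non-positivity assumption, and that iterating the removal genuinely reduces additional stack height below $d$ rather than getting stuck. A secondary subtlety is aligning the indices $d-1$, $d$, $d+1$ in the statement with the additional-stack-height bookkeeping of $\ASH(\pi) = d$, and confirming that $d > (|Q|\cdot|\Gamma|)^2$ is exactly the threshold that makes Lemma~\ref{lemm:EveryDeepPathHasAPumpablePair} applicable throughout the pruning. Once these combinatorial points are nailed down, the polynomial-time claim in item~2 is a routine consequence of the already-established polynomial-time computability of the (vector-annotated) summary functions.
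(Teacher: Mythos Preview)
Your argument for item~1 is essentially correct and is indeed the standard argument (the paper simply cites~\cite{CV12} for this item): since $\pi$ is the shortest maximum-weight witness with $\ASH(\pi)=d$, removing any pumpable pair of non-positive $(w\cdot\vec{\lambda})$-weight would either yield a strictly shorter witness of the same weight and $\ASH\le d$ (if the pair has weight~$0$), or a witness of strictly larger weight and $\ASH\le d$ (if the pair has negative weight); the first contradicts minimality of $|\pi|$, the second contradicts maximality of $w(\pi)$ or the hypothesis $s_d>s_{d-1}$. So every pumpable pair in $\pi$ is positive, and one exists because $\ASH(\pi)>(|Q|\cdot|\Gamma|)^2$.

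The gap is in item~2. Your description of the positive pumpable pair as ``the `extra push' edge, a same-level cycle recorded by $s_{d-1}$ on the pushed symbol, and the matching pop'' is not what a pumpable pair is. A pumpable pair $(p_1,p_2)$ consists of a segment $p_1$ that net-pushes some stack string and a later segment $p_2$ that net-pops it, with matching $(\text{state},\text{top})$ at the boundaries; it is \emph{not} in general a single push, a proper cycle one level up, and a single pop. Moreover, a single level of the $G_\wps^{d-1}$/$G_\wps^{d}$ construction does not localize the pair: the pair may span many levels of the stack. What the paper actually does is recursively unfold the witness into a \emph{spine} of length $O(d)$,
\[
\rho_1\ \Push_1\ \rho_2\ \Push_2\ \cdots\ \Push_d\ \rho_{d+1}\ \Pop_1\ \rho_{d+2}\ \cdots\ \Pop_d\ \rho_{2d+1},
\]
where each $\rho_j$ is a same-level segment whose $(w(\cdot),|\cdot|)$ is the stored $\VECT(s_i(\cdot))$ value, and the $\Push_i/\Pop_i$ are single edges. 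Because $d>(|Q|\cdot|\Gamma|)^2$, pigeonhole on the $d$ levels yields indices $i<j$ with matching $(\gamma_i,q^i_{\bullet})=(\gamma_j,q^j_{\bullet})$ on both the ascending and descending sides; the segment from level~$i$ up to level~$j$ together with the segment from level~$j$ down to level~$i$ is the pumpable pair, and its $w$-vector and length are computed by summing the stored $\VECT$ values and the $w$-weights of the intervening $\Push$/$\Pop$ edges. This spine decomposition plus pigeonhole is the missing algorithmic step in your sketch; without it you have not explained how to \emph{find} $P$ (as opposed to merely knowing it exists) in polynomial time.
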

\begin{proof}
The first item was proved in~\cite{CV12}.
For the second item, we consider the graphs $G^i_\wps$ as defined above.
Then for $G^d_\wps$, we compute (based on the summary function $s_d$) the 
maximum weight non-decreasing path $\rho$ 
from $(\bot\gamma,q_1)$ to $(\bot\gamma,q_2)$.
In the path $\rho$, we find a sub-path of the form 
$(\bot\gamma,z),(\bot\gamma\delta,q'),(\bot\gamma\delta,q''),(\bot\gamma,z')$, 
for which 
\begin{itemize}
\item $s_d(z,\gamma,z') > s_{d-1}(z,\gamma,z')$; and
\item $s_{d-1}(q',\delta,q'') > s_{d-2}(q',\delta,q'')$;
\end{itemize}
(note that by definition such sub-path must exist).
We store the value of the maximum weight paths from $(\bot\gamma,q_1)$ to  
$(\bot\gamma,z)$, and from $(\bot\gamma,z')$ to $(\bot\gamma,q_2)$.
We also store the $\Push$ and $\Pop$ transitions and the corresponding vector of the weight 
function $w$, and repeat the process, recursively, 
for the maximum weight non-decreasing path from $(\delta,q')$ to $(\delta,q'')$ with $\ASH(d-1)$.
We end up with a description of length $O(d)$ of the form 
\[
\begin{array}{rcl}
\rho^* & = & (\bot\gamma_1,q^1_1)\stackrel{\rho_1}{\leadsto}(\bot\gamma_1,q^1_2)
\stackrel{\Push_1}{\to} (\bot\gamma_1\gamma_2,q^2_1)\stackrel{\rho_2}{\leadsto}
(\bot\gamma_1\gamma_2,q^2_2)\stackrel{\Push_2}{\to} (\bot\gamma_1\gamma_2\gamma_3,q^3_1) \stackrel{\rho_3}{\leadsto} 
(\bot\gamma_1\gamma_2\gamma_3,q^3_2) \stackrel{\Push_3}{\to} \\[2ex] 
 & & \cdots \stackrel{\Push_d}{\to} (\bot\gamma_1\dots\gamma_d,q^d_1) \stackrel{\rho_{d+1}}{\leadsto} (\bot\gamma_1\dots\gamma_d,q^d_2) \stackrel{\Pop_1}{\to} 
 (\bot\gamma_1\dots\gamma_{d-1},q^{d-1}_3)\stackrel{\rho_{d+2}}{\leadsto}(\bot\gamma_1\dots\gamma_{d-1},q^{d-1}_4) 
\\[2ex] & & 
\stackrel{\Pop_2}{\to}
(\bot\gamma_1\dots\gamma_{d-2},q^{d-2}_3)\stackrel{\rho_{d+3}}{\leadsto}(\bot\gamma_1\dots\gamma_{d-2},q^{d-2}_4)
\stackrel{\Pop_3}{\to}
\dots \stackrel{\Pop_d}{\to} (\bot\gamma_1,q^1_3) \stackrel{\rho_{2\cdot d+1}}{\leadsto} (\bot\gamma_1,q^1_4);
\end{array}
\]
where $q^1_1 = q_1$, $q^1_4 = q_2$ and $\gamma_1 = \gamma$.
Intuitively, the path $\rho^*$ is decomposed as the path 
$\rho_1 \ \Push_1 \ \rho_2 \ \Push_2 \ \cdots \ \Push_d \ \rho_{d+1} \ \Pop_1 \ \rho_{d+2}\ \cdots \ \Pop_d\  \rho_{2\cdot d+1}$,
where the $\rho_1$ realizes the value $s_d(q_1^1,\gamma_1,q_2^1)$, $\rho_2$ realizes the value 
$s_{d-1}(q_1^2,\gamma_2,q_2^2)$ and so on; and similarly $\rho_{d+1}$ realizes the value $s_0(q_1^d,\gamma_d,q_2^d)$, $\rho_{d+2}$ realizes the value $s_1(q_3^{d-1},\gamma_{d-1},q_4^{d-1})$, $\rho_{d+3}$ realizes the value $s_2(q_3^{d-2},\gamma_{d-2},q_4^{d-2})$ and so on;
and finally, $\rho_{2\cdot d + 1}$ realizes $s_d(q_3^1,\gamma_1,q_4^1)$.

Since $d > (|Q|\cdot |\Gamma|)^2$, there must exist $1\leq i<j \leq d$, 
and $h_1,h_2,h_3,h_4 \in \RangeSet{1}{4}$ such that $q^i_{h_1} = q^j_{h_2}$, $q^i_{h_3} = q^j_{h_4}$, 
$\gamma_i = \gamma_j$, and the weight of the path from 
$(\bot\gamma_1\dots\gamma_i,q^i_{h_1})$ to $(\bot\gamma_1\dots\gamma_j,q^j_{h_2})$ 
plus the weight of the path from $(\bot\gamma_1\dots\gamma_j,q^i_{h_3})$ to 
$(\bot\gamma_1\dots\gamma_i,q^j_{h_4})$ is positive.
We sequentially iterate over all such tuples of $i,j,h_1,h_2,h_3$ and $h_4$ in polynomial time,
and a witness path $P$ can be obtained as of the form of $\rho^*$.
The computation of  $w(P)$ and $\frac{w(P)}{|P|}$ is obtained from the vector of the 
summary function, and the $\Push$ and $\Pop$ transitions along with the
vector of weights according to $w$ of such transitions, i.e., 
\[
\begin{array}{rcl}
(w(P),|P|) 
& = & \displaystyle  
\big(\sum_{i=1}^d w(\Push_i) + w(\Pop_i),2\cdot d\big ) +
\sum_{i=1}^{2\cdot d + 1} (w(\rho_i),|\rho_i|) 
\\[3ex]
& = & \displaystyle  
\big(\sum_{i=1}^d w(\Push_i) + w(\Pop_i),2\cdot d\big) +
\sum_{i=1}^{d+1} \VECT(s_{d + 1 - i}(q_1^i,\gamma_i,q_2^i)) +
\sum_{i=d+2}^{2d+1} \VECT(s_{i - d - 1}(q_3^i,\gamma_i,q_4^i)).
\end{array}
\]
Hence it follows that we can compute $w(P)$ and $\frac{w(P)}{|P|}$ in polynomial 
time and the proof follows. 
\end{proof}

Our goal now is the computation of the $\omega$ values of the summary function.
To achieve the computation of $\omega$ values we will define another summary 
function $s^*$ and a new WPS $\wps^*$ such that certain cycles in $\wps^*$ will 
characterize the $\omega$ values of the summary function. 
We now define the summary function $s^*$ and the pushdown system $\wps^*$. 
Let $d= (|Q|\cdot|\Gamma|)^2$. The new summary function $s^*$ is defined as follows:
if the values of $s_d$ and $s_{d+1}$ are the same then it is assigned the value 
of $s_d$, and otherwise the value $\omega$. Formally,
\[ 
s^*(q_1,\gamma,q_2) = 
   \left\{ \begin{array}{ll}
	s_d(q_1,\gamma,q_2) & \mbox{   if $s_d(q_1,\gamma,q_2) = s_{d+1}(q_1,\gamma,q_2)$} \\
	\omega & \mbox{   if $s_d(q_1,\gamma,q_2) < s_{d+1}(q_1,\gamma,q_2)$}.
   \end{array} \right. 
\]
The new WPS $\wps^*$ is constructed from $\wps$ by adding the following set of $\omega$-edges:
$\Set{(q_1,\gamma,q_2,\Skip) \mid s^*(q_1,\gamma,q_2) = \omega}$.

\begin{lem}[\cite{CV12}]\label{lemm:IfOmegaThenOmega}
For all $q_1, q_2 \in Q$ and $\gamma \in \Gamma$,
the following assertion holds:
the original summary function $s(q_1,\gamma,q_2) = \omega$ iff 
there exists a non-decreasing path in $\wps^{*}$ from 
$(\bot\gamma,q_1)$ to $(\bot\gamma,q_2)$ that goes through an $\omega$-edge.
\end{lem}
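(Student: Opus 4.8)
I would prove the two implications separately, the only nontrivial input being the pumping statement of Lemma~\ref{lemm:FindingPumpInSameLevel} together with the stabilization fact behind it (a strict one-step increase of the bounded summary function above the threshold $(|Q|\cdot|\Gamma|)^2$ forces the value $\omega$; equivalently, $s_d = s_{d+1}$ for $d = (|Q|\cdot|\Gamma|)^2$ implies $s_j = s_d$ for all $j\geq d$). Two elementary observations come first. The sequence $s_0 \leq s_1 \leq \cdots$ is monotone, and since every non-decreasing path has a finite additional stack height we have $s(q_1,\gamma,q_2) = \sup_i s_i(q_1,\gamma,q_2)$, with the supremum read as $\omega$ when unbounded. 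Second, directly from the definition of $s^*$: $s^*(q,\gamma,q')=\omega$ if and only if either $s_d(q,\gamma,q')=\omega$, or $s_d(q,\gamma,q')$ is finite and $s_d(q,\gamma,q') < s_{d+1}(q,\gamma,q')$. Combining this with the stabilization fact yields the key consequence: \emph{whenever $s^*(q,\gamma,q')=\omega$ we also have $s(q,\gamma,q')=\omega$ in the original WPS} --- in the first case because $s(q,\gamma,q')\geq s_d(q,\gamma,q')=\omega$, and in the second case because a strict increase $s_d(q,\gamma,q') < s_{d+1}(q,\gamma,q')$ at a level beyond $(|Q|\cdot|\Gamma|)^2$ produces, via Lemma~\ref{lemm:FindingPumpInSameLevel} and the pigeonhole argument behind it, a pumpable pair of positive $w\cdot\vec{\lambda}$-weight and length at most $\ell$; iterating this pair (the non-decreasing property being independent of the stack below, cf.\ Remark~\ref{rem:IndependentOfAlpha}) yields non-decreasing paths of unbounded weight between the relevant configurations, so $s(q,\gamma,q')=\omega$.

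\textbf{The direction ``only if''.} Assume $s(q_1,\gamma,q_2)=\omega$, i.e.\ $\sup_i s_i(q_1,\gamma,q_2)=\omega$. If $s_{d+1}(q_1,\gamma,q_2)=\omega$, then $s^*(q_1,\gamma,q_2)=\omega$ by the characterization above. Otherwise all $s_i(q_1,\gamma,q_2)$ are finite, and then $s_d(q_1,\gamma,q_2) < s_{d+1}(q_1,\gamma,q_2)$: for if equality held, the stabilization fact would make the sequence $s_i(q_1,\gamma,q_2)$ eventually constant, contradicting $\sup_i s_i(q_1,\gamma,q_2)=\omega$; hence again $s^*(q_1,\gamma,q_2)=\omega$. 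In either case $\wps^*$ contains the $\omega$-edge $(q_1,\gamma,q_2,\Skip)$, and the length-one path $(\bot\gamma,q_1)\to(\bot\gamma,q_2)$ taking this edge is a non-decreasing path through an $\omega$-edge.

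\textbf{The direction ``if''.} Assume $\pi$ is a (necessarily finite) non-decreasing path in $\wps^*$ from $(\bot\gamma,q_1)$ to $(\bot\gamma,q_2)$ that uses at least one $\omega$-edge. By the key consequence above, every $\omega$-edge $(q',\gamma',q'',\Skip)$ used by $\pi$ satisfies $s(q',\gamma',q'')=\omega$ in $\wps$, so for any target $N$ there is a non-decreasing $\wps$-path of weight at least $N$ between the corresponding configurations. Fix $N$. In $\pi$ replace one fixed occurrence of an $\omega$-edge by such a $\wps$-path of weight at least $N$, and replace every remaining $\omega$-edge by a non-decreasing $\wps$-path of weight at least $0$. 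Each splice is legal: an $\omega$-edge is a $\Skip$ transition, and the inserted sub-paths start and end at the same configuration and only raise the stack above the current top symbol $\gamma'$ (which itself sits above $\bot\gamma$ because $\pi$ is non-decreasing), so the result is a valid non-decreasing $\wps$-path from $(\bot\gamma,q_1)$ to $(\bot\gamma,q_2)$; its weight is at least $N + C$, where $C$ is the fixed total weight of the finitely many non-$\omega$ transitions of $\pi$. Letting $N\to\infty$ shows $s(q_1,\gamma,q_2)=\omega$.

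\textbf{Main obstacle.} The real content is the ``key consequence'' --- that a strict one-step increase of the bounded summary function just beyond the combinatorial threshold genuinely blows up to $\omega$; this is where Lemma~\ref{lemm:FindingPumpInSameLevel} and the pigeonhole argument behind the stabilization fact are used, together with the fact that a positive-weight pumpable pair can be iterated inside a non-decreasing path. The remaining work --- the monotone-supremum identity, the case analysis, and checking that splicing proper-cycle-type sub-paths in place of $\Skip$-edges preserves validity and the non-decreasing property --- is routine bookkeeping.
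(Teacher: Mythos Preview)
Your ``if'' direction is correct and matches the paper's (commented-out) argument: every $\omega$-edge encodes a triple with $s=\omega$, and splicing in arbitrarily heavy non-decreasing $\wps$-subpaths in place of the $\Skip$ transitions yields non-decreasing $\wps$-paths of unbounded weight.

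Your ``only if'' direction, however, has a real gap. You claim that if $s(q_1,\gamma,q_2)=\omega$ and $s_{d+1}(q_1,\gamma,q_2)$ is finite then necessarily $s_d(q_1,\gamma,q_2)<s_{d+1}(q_1,\gamma,q_2)$, invoking a ``stabilization fact'' \emph{pointwise}. But stabilization is only immediate as a \emph{global} statement: if $s_d=s_{d+1}$ as functions, then $G_\wps^d=G_\wps^{d+1}$ and hence $s_{d+2}=s_{d+1}$. Pointwise it can fail: nothing prevents $s_d(q_1,\gamma,q_2)=s_{d+1}(q_1,\gamma,q_2)$ while some other triple $(q',\delta,q'')$ feeding into $(q_1,\gamma,q_2)$ has $s_{d}(q',\delta,q'')<s_{d+1}(q',\delta,q'')$, causing $s_{d+2}(q_1,\gamma,q_2)>s_{d+1}(q_1,\gamma,q_2)$. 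In that situation your argument would conclude $s^*(q_1,\gamma,q_2)=\omega$, which is false; the $\omega$-edge sits at the \emph{other} triple. Note the lemma does not assert $s^*(q_1,\gamma,q_2)=\omega$ --- only that some $\omega$-edge lies on a non-decreasing $\wps^*$-path between the two configurations.

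The paper's proof handles exactly this case: when $s^*(q_1,\gamma,q_2)\neq\omega$ but $s(q_1,\gamma,q_2)=\omega$, it picks a weight $n>s_d(q_1,\gamma,q_2)$ and the minimal-$\ASH$ non-decreasing path $\pi$ of weight $\geq n$; necessarily $\ASH(\pi)\geq d+1$, so $\pi$ contains a non-decreasing sub-path from some $(\alpha\gamma',q_1')$ to $(\alpha\gamma',q_2')$ of additional stack height exactly $d+1$. If $s_{d+1}(q_1',\gamma',q_2')=s_d(q_1',\gamma',q_2')$ one could replace that sub-path by a depth-$d$ one of the same weight, contradicting minimality of $\ASH(\pi)$; hence $s_{d+1}(q_1',\gamma',q_2')>s_d(q_1',\gamma',q_2')$, so $s^*(q_1',\gamma',q_2')=\omega$, and the portion of $\pi$ outside this sub-path, together with the $\omega$-edge at $(q_1',\gamma',q_2')$, gives the required non-decreasing $\wps^*$-path through an $\omega$-edge. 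To fix your argument you need this extra step locating the $\omega$-edge at a possibly different triple; the pointwise stabilization shortcut does not work.
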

\begin{comment}
\begin{proof}
The direction from right to left is easy: if there is a non-decreasing
path in $\wps^*$ that goes through an $\omega$-edge, it means that 
there exists $(q_1',\gamma',q_2')$ with either $s_d(q_1',\gamma,q_2') =\omega$ 
or $s_d(q_1',\gamma',q_2')< s_{d+1}(q_1',\gamma',q_2')$.
If $s_d(q_1',\gamma,q_2') =\omega$, then clearly $s(q_1',\gamma,q_2') =\omega$.
Otherwise we have $s_d(q_1',\gamma',q_2')< s_{d+1}(q_1',\gamma',q_2')$, 
and then the proof of Lemma~\ref{lemm:AlmostS} shows that 
$s(q_1',\gamma',q_2')=\omega$.
Since there exists finite path from $(\bot\gamma,q_1)$ to $(\bot\gamma,q_2)$ 
with the $\omega$-edge it follows that $s(q_1,\gamma,q_2)=\omega$.

For the converse direction, we consider the case that
$s(q_1,\gamma,q_2) = \omega$.
If $s^*(q_1,\gamma,q_2) = \omega$, then the proof follows immediately.
Otherwise it follows that $s_d(q_1,\gamma,q_2) \in\Z$.
Hence there exists a weight $n\in\Z$ such that the non-decreasing path with the minimal additional 
stack height with weight $n$ has additional stack height $d' \geq d + 1$.
Let $\pi$ be that path.
Then there exists a non-decreasing subpath that starts at 
$(\alpha\gamma',q_1')$ and ends at $(\alpha\gamma',q_2')$ with additional stack 
height exactly $d+1$.
If $s_{d+1}(q_1',\gamma',q_2') = s_{d}(q_1',\gamma',q_2')$, 
then $\pi$ is not the path with the minimal additional stack height.
Hence, as $s_{d+1}(q_1',\gamma',q_2') > s_{d}(q_1',\gamma',q_2')$, 
by definition $s^*(q_1',\gamma',q_2') = \omega$ and the proof follows. 
\hfill\qed
\end{proof}
\end{comment}

We will now present the required polynomial-time algorithm for Proposition~\ref{prop:WeCanFindOneDimensionalPump},
and we present the algorithm for the case with $r=0$ (and this is without loss of generality).
The algorithm is similar to solution of WPS with one-dimensional objective of~\cite{CV12}.
The final ingredient is the notion of summary graph.

\smallskip\noindent{\em Summary graph and positive simple cycles.} Given a WPS 
$\wps=\atuple{Q, \Gamma, q_0\in Q, E \subseteq  (Q\times\Gamma) \times (Q\times \Com(\Gamma)), w\cdot \vec{\lambda}:E \to \Z}$
and the summary function $s$, we construct the \emph{summary graph} $\Gr(\wps)=(\ov{V},\ov{E})$ of $\wps$ with 
a weight function $\ov{w}: \ov{E} \to \Z \cup \Set{\omega}$ as follows: (i)~$\ov{V} = Q\times \Gamma$; and 
(ii)~$\ov{E} = E_{\Skip} \cup E_{\Push}$ where
$E_{\Skip} = \Set{((q_1,\gamma),(q_2,\gamma)) \mid s(q_1,\gamma,q_2) > -\infty}$, and
$E_{\Push} = \Set{((q_1,\gamma_1),(q_2,\gamma_2)) \mid (q_1,\gamma_1,q_2,\Push(\gamma_2)) \in E}$;
and (iii)~for all $e = ((q_1,\gamma),(q_2,\gamma))\in E_{\Skip}$ we have $\ov{w}(e) =  s(q_1,\gamma,q_2)$,
and for all $e\in E_{\Push}$ 
we have $\ov{w}(e)=(w\cdot \vec{\lambda})(e)$ (i.e., according to weight function of $\wps$).
A simple cycle $C$ in $\Gr(\wps)$ is a \emph{positive simple cycle} iff one of
the following conditions hold: (i)~either $C$ contains an $\omega$-edge (i.e.,
edge labeled $\omega$ by $\ov{w}$); or 
(ii)~the sum of the weights of the edges of the cycles according to 
$\ov{w}$ is positive.
The summary functions and the summary graph can be constructed in polynomial time.
The first step of the algorithm is to build the summary graph and to check if there is a path 
from $(\gamma,q)$ to $(\gamma,q)$ with a positive weight.
We consider the following cases of existence of such a positive weight path.
\begin{enumerate}

\item If there is no such path, then there does not exist pumpable pair of paths $P=(p_1,p_2)$ 
with positive weight (i.e., there exists no pumpable pair $P$ with $(w\cdot \vec{\lambda})(P)>0$).

\item We now consider the case when such a positive weight path exists. If such a path exist,
we consider the path with maximum weight that is shortest (i.e., among the ones with maximum 
weight we choose a path that is shortest). We have two distinct cases.
\begin{enumerate}

\item We first consider the case when the path do not go through an $\omega$ edge.
Then the path does not have a pumpable pair for the following reason:
if the pumpable pair is positive, then the weight is not the maximum,
and if the pumpable pair is non-negative, removing it ensures we obtain a 
maximum weight path with shorter length. 
Hence the length of the path is at most $\ell$.
Since we have stored the vector of the summary function (which stores 
the weights according to $w$ and length of the witness paths)
we compute the weight of this path according to $w$ 
(and not according to $w\cdot\vec{\lambda}$), and return the average weight of 
this path.
\item 
Otherwise, the path goes through an $\omega$ edge in the summary graph.
If there is an $\omega$ edge due to a proper cycle with positive weight, 
then we can detect this cycle in the construction of the summary graph and 
compute its average weight according to $w$ (since we have the vector of
the summary function that stores the weight according to $w$ and the length
of the witness paths).
Otherwise, by Lemma~\ref{lemm:IfOmegaThenOmega}, it follows that there is a non-decreasing path 
from $(\gamma,q)$ to $(\gamma,q)$ that has a non-decreasing sub-path from $(\delta,q_1)$ to $(\delta,q_2)$ 
and $s_{d+1}(q_1,\delta,q_2) > s_d(q_1,\delta,q_2)$.
We have already described a polynomial time algorithm for finding such $q_1,q_2$ and $\delta$. 
Once we find  $q_1,q_2$ and $\delta$, by Lemma~\ref{lemm:FindingPumpInSameLevel}, we can compute $w(P)$ and $\frac{w(P)}{|P|}$ in polynomial time.
\end{enumerate}
\end{enumerate}
The proof of Proposition~\ref{prop:WeCanFindOneDimensionalPump} follows.

\section{Recursive Games under Modular Strategies with Mean-payoff Objectives}
In this section we will consider recursive games (which are equivalent to 
pushdown games) with modular strategies.
Note that there is no intuitive interpretation of modular strategies for 
pushdown games and it is standard (as considered in all works in literature) 
to define and consider modular strategies in the context of recursive games.
We start with the definitions and present four results for mean-payoff 
objectives in such games:
(1)~we show undecidability for multidimensional problem, and hence focus
on the one-dimensional case;
(2)~for the one-dimensional case we show a NP-hardness result;
(3)~we present an algorithm that runs in polynomial time when relevant
parameters are fixed; and 
(4)~finally we show a reduction from finite-state parity games to show the 
hardness of fixed parameter tractability.

\smallskip\noindent{\bf Weighted recursive game graphs (WRGs).}
A \emph{recursive game graph} $\wrg$ consists of a tuple 
$\atuple{A_0,A_1,\dots,A_n}$ of \emph{game modules}, where each 
game module $A_i = (N_i,B_i,V^1_i,V_i^2,\En_i,\Ex_i,\delta_i)$ consists of the following components:
\begin{itemize}
\item A finite nonempty set of \emph{nodes} $N_i$.
\item A nonempty set of \emph{entry} nodes $\En_i \subseteq N_i$ and a 
nonempty set of \emph{exit} nodes $\Ex_i \subseteq N_i$.
\item A set of \emph{boxes} $B_i$.
\item Two disjoint sets $V_i^1$ and $V_i^2$ that partition the set of 
nodes and boxes into two sets, i.e., $V_i^1 \cup V_i^2 = N_i \cup B_i$ 
and $V_i^1 \cap V_i^2 = \emptyset$.
The set $V_i^1$ (resp. $V_i^2$) denotes the places where it is the 
turn of player~1 (resp. player~2) to play (i.e., choose transitions).
We denote the union of $V_i^1$ and $V_i^2$ by $V_i$.
\item A labeling $Y_i : B_i \to \RangeSet{1}{n}$ that assigns to 
every box an index of the game modules $A_1 \dots A_n$.
\item Let $\Calls_i = \Set{(b,e) \mid b\in B_i, e \in \En_j, j = Y_i(b)}$ 
denote the set of \emph{calls} of module $A_i$ and let 
$\Returns_i = \Set{(b,x) \mid b\in B_i, x\in \Ex_j, j = Y_i(b)}$ denote the set of \emph{returns} in $A_i$.
Then, $\delta_i \subseteq (N_i \cup \Returns_i)\times(N_i \cup \Calls_i)$ is the 
\emph{transition relation} for module $A_i$.
\end{itemize}
A \emph{weighted recursive game graph} (for short WRG) is a recursive game graph, equipped with a weight function 
$w$ on the transitions.
We also refer the readers to~\cite{AlurReach} for detailed description and illustration with figures 
of recursive game graphs.
WLOG we shall assume that the boxes and nodes of all modules are disjoint.
Let $B = \bigcup_i B_i$ denote the set of all boxes, 
$N=\bigcup_i N_i$ denote the set of all nodes, 
$\En = \bigcup_i \En_i$ denote the set of all entry nodes, 
$\Ex = \bigcup_i \Ex_i$ denote the set of all exit nodes, 
$V^1 = \bigcup_i V^1_i$ (resp. $V^2 = \bigcup_i V^2_i$) denote 
the set of all places under player~1's control (resp. player~2's control), 
and $V=V^1 \cup V^2$ denote the set of all vertices.
We will also consider the special case of one-player WRGs, where 
either $V^2$ is empty (player-1 WRGs) or $V^1$ is empty (player-2 WRGs).
WLOG we will assume that the every module has a unique 
entrance (a polynomial reduction to module with many entrances to one with a single entrance was given in~\cite{AlurReach}).
The module $A_0$ is the initial module, and its entry node the starting 
node of the game.

\smallskip\noindent{\bf Configurations, paths and local history.}
A \emph{configuration} $c$ consists of a sequence $(b_1,\dots,b_r,u)$, 
where $b_1,\dots,b_r \in B$ and $u\in N$.
Intuitively,  $b_1,\dots,b_r$ denote the current stack (of modules), and $u$ is the current node.
A sequence of configurations is \emph{valid} if it does not violate the transition relation.
The \emph{configuration stack height} of $c$ is $r$.
Let us denote by $\mathbb{C}$ the set of all configurations, and let 
$\mathbb{C}_1$ (resp. $\mathbb{C}_2$) denote the set of all configurations
under player~1's control (resp. player~2's control).
A \emph{path} $\pi = \atuple{c_1, c_2, c_3, \dots}$ is a valid sequence of configurations.
Let $\rho = \atuple{c_1, c_2, \dots, c_k}$ be a valid finite sequence of configurations, such that
$c_i = (b^i_1,\dots,b^i_{d_i}, u_i)$, and the stack height of $c_i$ is $d_i$.
Let $c_i$ be the first configuration with stack height $d_i = d_k$, 
such that for every $i\leq j\leq k$, if $c_j$ has stack height $d_i$, 
then $u_j \notin \Ex$ ($u_j$ is not an exit node).
The \emph{local history} of $\rho$, denoted by $\LocalHistory (\rho)$, 
is the sequence $(u_{j_1},\dots,u_{j_m})$ such that $c_{j_1} = c_i$, 
$c_{j_m} = c_k$, $j_1 < j_2 < \dots < j_m$, and the stack height of 
$c_{j_1},\dots,c_{j_m}$ is exactly $d_i$.
Intuitively, the local history is the sequence of nodes in a module. 
Note that by definition, for every $\rho \in \mathbb{C}^*$, there exists 
$i\in \Set{1,\dots,n}$ such that all the nodes that occur in 
$\LocalHistory(\rho)$ belong to $V_i$. 
We say that $\LocalHistory(\rho) \in A_i$ if all the nodes in 
$\LocalHistory(\rho)$ belong to $V_i$.

\smallskip\noindent{\bf Global game graph and isomorphism to pushdown game graphs.}
The \emph{global game graph} corresponding to a WRG $\wrg=\atuple{A_1,\dots,A_n}$ 
is the graph of all valid configurations, with an edge $(c_1,c_2)$ between 
configurations $c_1$ and $c_2$ if there exists a transition from $c_1$ to $c_2$.
It follows from the results of~\cite{AlurReach} that every recursive game graph has 
an isomorphic pushdown game graph that is computable in polynomial time.

\smallskip\noindent{\bf Plays, strategies and modular strategies.}
A play is played in the usual sense over the global game graph 
(which is possibly an infinite graph).
A (finite) play is a (finite) valid sequence of configurations 
$\atuple{c_1, c_2, c_3, \dots}$ (i.e., a path in the global game graph).
A \emph{strategy} for player 1 is a function $\tau : \mathbb{C}^* \times \mathbb{C}_1 \to \mathbb{C}$
respecting the edge relationship of the global game graph, i.e., 
for all $w \in \mathbb{C}^*$ and $c_1 \in \mathbb{C}_1$ we have that 
$(c_1,\tau(w\cdot c_1))$ is an edge in the global game graph.
A \emph{modular strategy} $\tau$ for player~1 is a set of functions 
$\Set{\tau_i}_{i=1}^n$, one for each module, where for every $i$,
we have $\tau_i : (N_i \cup \Returns_i) ^* \to \delta_i$.
The function $\tau$ is defined as follows:
For every play prefix $\rho$ we have 
$\tau(\rho) = \tau_i(\LocalHistory(\rho))$, where $\LocalHistory(\rho) \in A_i$.
The function $\tau_i$ is the \emph{local strategy} of module $A_i$.
Intuitively, a modular strategy only depends on the local history, and 
not on the context of invocation of the module.
A modular strategy $\tau = \Set{\tau_i}_{i=1}^n$ is a \emph{finite-memory} 
modular strategy if $\tau_i$ is a finite-memory strategy for every $i\in\RangeSet{1}{n}$.
A \emph{memoryless} modular strategy is defined in similar way, 
where every component local strategy is memoryless.

\smallskip\noindent{\bf Mean-payoff  objectives and winning modular strategies.}
The \emph{modular winning strategy problem} asks if player~1 has a 
modular strategy $\tau$ such that against every strategy $\strab$ for player~2 the 
play $\pat$ given the starting node and the strategies satisfy $\LimAvg(\pat) \geq \VEC{0}$ 
(note that the counter strategy of player~2 is a general strategy).

\subsection{Undecidability for multidimensional mean-payoff objectives}
In this section we will show that the problem of deciding the existence
of modular winning strategy for player~1 in WRGs with multidimensional 
mean-payoff objectives is undecidable. 
The reduction would be from reachability games over tuples of integers.
We start by introducing these games.

\smallskip\noindent{\bf Reachability games over $\Z^k$.}
A \emph{reachability game over $\Z^k$} consists of a finite-state game graph $G$, a 
$k$ dimensional weight function $w:E \to \Z^k$, and an initial weight vector 
$\vec{\nu} \in \Z^k$.
An infinite play $\pat$ is winning for player~1 if there exists some finite 
prefix $\pat'  \sqsubseteq \pat$ such that 
$w(\pat') + \vec{\nu}=0$ and the last vertex in $\pat'$ is a player-1 vertex.

\begin{figure}[!tb]
\begin{center}
\begin{picture}(100,80)(50,-100)
\node[NLangle=0.0,Nw=32.0,Nh=16.0,Nmr=2.0](n1)(72.0,-64.0){$A_1$}
\node[Nw=32.0,Nh=16.0,Nmr=2.0](n2)(136.0,-64.0){$A_2$}
\node[Nw=4.0,Nh=4.0,Nmr=2.0](n3)(56.0,-64.0){}
\node[Nw=4.0,Nh=4.0,Nmr=2.0](n5)(88.0,-64.0){}
\node[Nw=4.0,Nh=4.0,Nmr=2.0](n6)(120.0,-64.0){}
\node[Nw=4.0,Nh=4.0,Nmr=2.0](n7)(152.0,-64.0){}

\drawedge(n5,n6){$\vec{0}$}
\drawbpedge[ELside=r](n7,34,54.6,n3,-208,55.54){$\vec{0}$}
\node[Nw=138.64,Nh=59.8,Nmr=14.95](n8)(104.91,-64.0){}
\node[Nw=4.0,Nh=4.0,Nmr=2.0](n10)(36.0,-64.0){}
\drawedge(n10,n3){$\vec{0}$}
\end{picture}
\caption{Module $A_0$}\label{fig1}
\end{center}
\end{figure}

\begin{figure}[!tb]
\begin{center}
\begin{picture}(100,80)(50,-100)
\node[Nw=140.0,Nh=60.0,Nmr=15.0](A1)(105,-64.0){}
\node[Nw=4.0,Nh=4.0,Nmr=2.0](En)(35.0,-64.0){}
\node[Nw=4.0,Nh=4.0,Nmr=2.0](Ex)(175.0,-64.0){}
\node[Nw=8.0,Nh=8.0,Nmr=4.0](Loop)(105.0,-64.0){}

\drawloop(Loop){(0,0,0,0,+1,-1)}
\drawedge(En,Loop){$\vec{0}$}
\drawedge(Loop,Ex){$\vec{0}$}

\end{picture}
\caption{Module $A_1$}\label{fig2}
\end{center}
\end{figure}

\begin{figure}[!tb]
\begin{center}
\begin{picture}(100,80)(50,-100)
\node[Nw=140.0,Nh=60.0,Nmr=15.0](A1)(105,-64.0){}
\node[Nw=4.0,Nh=4.0,Nmr=2.0](En)(35.0,-64.0){}
\node[Nw=4.0,Nh=4.0,Nmr=2.0](Ex)(175.0,-64.0){}
\node[Nw=8.0,Nh=8.0,Nmr=4.0](Vstar)(140,-64.0){$v^*$}
\node[Nw=32.0,Nh=32.0,Nmr=0](G)(88,-64.0){$G$}

\drawloop(Vstar){(0,0,0,0,-1,+1)}
\drawedge(En,G){$(+\vec{\nu},-\vec{\nu},0,0)$}
\drawedge(G,Vstar){$\vec{0}$}
\drawedge(Vstar,Ex){$\vec{0}$}

\end{picture}
\caption{Module $A_2$}\label{fig3}
\end{center}
\end{figure}

\begin{lem}\label{lemm:ReachabilityOverZIsUndecidable}
The following problem is undecidable:
Given a reachability game over $\Z^2$ and a starting vertex $v$, decide if there is a 
winning strategy $\straa$ for player~1 to ensure that for all strategies $\strab$ 
for player~2 the play $\pat(\straa,\strab,v)$ is winning for player~1.
\end{lem}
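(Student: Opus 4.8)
The plan is to reduce from a known undecidable problem about two-counter machines (Minsky machines) or, equivalently, from the halting/non-halting problem for such machines, realized as a reachability game over $\Z^2$. The two coordinates of the weight vector track the two counter values, and a move of the machine is simulated by adding the appropriate increment/decrement vector on an edge; a zero-test on counter~$i$ is simulated by letting player~1 \emph{claim} the counter is zero and branching accordingly, while player~2 is given the power to \emph{punish} a false claim. Concretely, if player~1 claims ``counter $1 = 0$'' while in fact it is positive, player~2 can move to a gadget from which player~1 can never bring the accumulated weight vector back to $\vec 0$ with the play ending in a player-1 vertex (for instance, a sink-like gadget that forever adds strictly positive weight in coordinate~$1$, or simply a region with no player-1 exit vertices). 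Conversely, if the claim is correct, punishing is harmless because the branch player~2 enters leads back into the faithful simulation. The initial weight vector $\vec\nu$ is chosen to encode the initial counter configuration (typically $\vec 0$, or the negation of the initial counter contents so that reaching $\vec 0$ corresponds to the counters being at their starting values at the halting instruction).

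The key steps, in order, are: (1) Fix a deterministic two-counter machine $M$ whose halting problem is undecidable. (2) Build a finite-state game graph $G$ whose vertices are (roughly) the control states of $M$, with player-1 vertices for increment/decrement instructions and for the ``I claim zero / I claim nonzero'' choice at a zero-test, and player-2 vertices for the ``accept the claim / punish the claim'' choice. (3) Label edges with the weight vectors: $(\pm1,0)$ or $(0,\pm1)$ for counter operations, $\vec0$ for pure control moves, and route the ``punish'' edges either into a verified-continuation of the simulation (when the claim is sound) or into a \emph{trap} component where no configuration ending at a player-1 vertex ever has accumulated weight $\vec0$ again. (4) Put a single designated player-1 vertex $v_{\mathrm{halt}}$ at the ``halt'' instruction of $M$ with a self-loop of weight $\vec0$, so that the play ends (in the sense of the winning condition: some prefix $\pat'$ with $w(\pat')+\vec\nu = \vec0$ ending at a player-1 vertex) exactly when $M$ reaches the halt instruction with the intended counter values. (5) Argue the two directions: if $M$ halts, player~1 has a winning strategy (play the faithful simulation, make only correct zero-claims; whatever player~2 punishes is either harmless or never triggered); if $M$ does not halt, then along every faithful play the halt vertex is never reached, and any deviation by player~1 (a false zero-claim) lets player~2 send the play into a trap, so player~1 cannot win. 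Finally observe the construction is effective, so a decision procedure for the reachability game would decide halting of $M$.

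The main obstacle I expect is engineering the trap/punishment gadget so that it simultaneously (a) is genuinely losing for player~1 whenever it is entered after a false zero-claim, i.e.\ guarantees the accumulated vector can never again equal $-\vec\nu$ at a player-1 vertex, and (b) is completely inert — never reachable or never usefully usable by player~2 — when the zero-claim was in fact correct, so that honest play by player~1 is not spoiled. The natural device is to use a positive counter value as a ``budget'': when player~1 claims counter~$1=0$ but its true value is $c>0$, the punishment edge subtracts a large amount from coordinate~$1$ (or routes into a loop that drives coordinate~$1$ monotonically away from its target and never lets it return), exploiting that in a reachability-to-$\vec0$ objective a monotone drift in one coordinate is fatal; whereas if $c=0$ the same edge leaves coordinate~$1$ at $0$ and the play re-enters the simulation unharmed. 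One must also handle the subtlety that player~2 is allowed a fully general (history-dependent) strategy, so the trap must be robust to player~1's later attempts to ``repair'' the vector — hence the preference for a strictly monotone drift rather than a one-shot offset. The rest of the argument (effectiveness, the easy direction where $M$ halts) is routine.
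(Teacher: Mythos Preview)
Your plan is sound in outline and is, in essence, the standard argument that the paper merely cites: the paper's entire proof is the one-line observation that the known undecidability of reachability games over $\Nat^2$ (from~\cite{AB03}) extends to~$\Z^2$. So you are reproving from scratch what the paper imports as a black box; that is fine, and arguably more self-contained, but be aware that nothing new is required beyond pointing to the classical result.

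That said, your gadget description has a concrete gap that you should fix if you carry the construction out. The ``monotone drift'' trap you propose does not satisfy your own requirement~(b): a loop that forever adds $(+1,0)$ is just as fatal to player~1 when the zero-claim was \emph{true} (since $c_2$ need not be zero at that moment, so player~1 cannot cash in before the drift starts). The correct punishment gadget for a false ``$c_1=0$'' claim is a player-1 sink vertex with self-loops of weight $(0,+1)$ and $(0,-1)$ only: player~1 can then drive $c_2$ to~$0$ but cannot touch $c_1$, hence wins there iff $c_1$ was already~$0$. More importantly, you never address how to police false ``$c_1\neq 0$'' claims. Over~$\Nat^2$ this is automatic (decrementing a zero counter is blocked), but over~$\Z^2$ it is not, and this is precisely the step the paper's one-liner glosses over. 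The standard fix is to give player~2, at every step, the option to divert to a ``verify $c_i\geq 0$'' gadget: a player-1 vertex with self-loops $(-1,0)$, $(0,+1)$, $(0,-1)$ (for $i=1$), so player~1 wins there iff $c_1\geq 0$. With both gadgets in place, player~1's only non-losing option is faithful simulation, and your steps~(4)--(5) go through.
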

\begin{proof}
We make a simple observation that the undecidability proof for reachability games over $\Nat^2$ 
(e.g., see~\cite{AB03}) is easily extended to games over $\Z^2$. 
\end{proof}

We will present a general reduction from reachability games over $\Z^k$ to WRGs under modular strategies 
with multidimensional mean-payoff objectives of $2\cdot k + 2$ dimensions, with three modules 
(two of them with single exit, and an initial module without any exits).
Given a reachability game over $\Z^k$ with game graph $G$, weight function $w$ and initial vector 
$\vec{\nu}$, we construct a WRG graph $\wrg = \atuple{A_0,A_1,A_2}$ with a weight function of 
$2\cdot k+2$ dimensions in the following way.
\begin{itemize}
\item Module $A_0$: This module repeatedly invokes $A_1$ and $A_2$ (one call to $A_1$ and one call to $A_2$); and 
all the weights of the transitions are $0$.
\item Module $A_1$: This module has three nodes: entrance, exit and an additional one with a self-loop edge with weight $0$ in the first $2\cdot k$ dimensions, 
weight $+1$ in dimension $2\cdot k+1$ and weight $-1$ in dimension $2\cdot k+2$;
the weight of the edges from the entrance node to the additional node and from the additional node to the exit node are $0$ in
every dimension. 
All the nodes are in the control of player~1.

\item Module $A_2$: The nodes of this module are the entrance and exit nodes, 
the nodes $V$ of the reachability game $G$, and an additional node $v^*$.
The entrance node leads to the initial vertex of $G$ with edge weight 
$(\vec{\nu},-\vec{\nu},0,0)$ (i.e., the first $k$ dimensions are according to 
$\vec{\nu}$, dimensions $k+1$ to $2\cdot k$ are according to $-\vec{\nu}$, 
and the last two dimensions are $0$).
For every edge $e=(u,v)$ in $G$, there is such transition in $A_2$ with weight 
$(w(e),-w(e),-1,+1)$.
In addition, from every player-1 vertex in $V$ there is a transition to $v^*$ with weight 
$0$ in every dimension.
In $v^*$ there is a self-loop transition with weight $-1$ in dimension $2k+1$, $+1$ in dimension 
$2k+2$ and $0$ in the rest of the dimensions; and
in addition there is a transition to the exit node with weight $0$ in every dimension.
\end{itemize}
The pictorial descriptions of the modules $A_0,A_1$, and $A_2$ are shown in Figure~\ref{fig1}, Figure~\ref{fig2},
and Figure~\ref{fig3}, respectively.

\begin{Obs}\label{obs:SimpleObs}
The following observations hold:
\begin{enumerate}
\item If player-1 strategy $\tau_1$ for module $A_1$ is to never exit, then it is not a 
winning strategy (since the mean-payoff in dimension $2\cdot k+2$ will be $-1$.)
\item If for a player-1 strategy $\tau_2$ for module $A_2$, there is a play $\pat$ consistent with $\tau_2$ that does not reach $v^*$, 
then $\tau_2$ is not a winning strategy (since the mean-payoff of $\rho$ in dimension $2\cdot k+1$ will be $-1$.)
\end{enumerate}
\end{Obs}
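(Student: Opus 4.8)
The plan is to prove both items with the same template: produce an infinite play $\pi$ consistent with the stated component strategy along which one coordinate of $\LimInfAvg(\pi)$ equals $-1$, which contradicts $\LimInfAvg(\pi)\ge\VEC{0}$.

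For item~(1) I would first observe that $A_0$ is built so that every play calls $A_1$ after a finite prefix, and that this prefix lies entirely in $A_0$ and hence has weight $\VEC{0}$ in every dimension. Once inside $A_1$ the token moves to the middle node and, since $\tau_1$ never exits, is forced to repeat the self-loop there forever; as every node of $A_1$ is owned by player~1, this is independent of the player-2 strategy. That self-loop has weight $-1$ in dimension $2\cdot k+2$, so the average weight in dimension $2\cdot k+2$ of the prefixes of $\pi$ tends to $-1$; thus $\LimInfAvg_{2\cdot k+2}(\pi)=-1<0$ and $\tau_1$ is not winning.

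For item~(2) I would take the play $\pi$ given by the hypothesis and look at the invocation of $A_2$ in which it avoids $v^*$. Inside $A_2$ the only edges leaving the embedded copy of $G$ go to $v^*$, and the only edge leaving $A_2$ leaves from $v^*$; hence the token stays inside $G$ (within this invocation of $A_2$) forever. The prefix before entering $A_2$ contributes $\VEC{0}$, the single entrance edge of $A_2$ contributes $0$ in dimension $2\cdot k+1$, and every $G$-edge of $A_2$ contributes $-1$ in dimension $2\cdot k+1$; so after $t$ steps inside $G$ the accumulated weight in that dimension is exactly $-t$, giving $\LimInfAvg_{2\cdot k+1}(\pi)=-1<0$. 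Hence $\pi$ violates the objective and $\tau_2$ is not winning.

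The only point requiring care is the routine bookkeeping that the finitely many $A_0$-steps and the one $A_2$-entrance edge contribute a bounded total to every coordinate, so that the limit-inf average is governed by the infinitely-repeated self-loop (resp. the infinitely many $G$-edges). Once the three modules are unpacked this is straightforward, and I do not anticipate a genuine obstacle.
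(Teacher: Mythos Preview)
Your argument is correct and matches the paper's justification, which consists only of the parenthetical remarks inside the observation itself. One small inaccuracy: in item~(2) you write that ``the prefix before entering $A_2$ contributes $\VEC{0}$'', but by the structure of $A_0$ this prefix must pass through $A_1$ first, and each iteration of the $A_1$ self-loop contributes $+1$ in dimension $2k+1$ (and $-1$ in dimension $2k+2$). This does not affect your conclusion, since---as you correctly note in your final paragraph---all that matters is that the prefix is finite and hence contributes a bounded total; the limit-inf average in dimension $2k+1$ is then still $-1$.
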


\begin{lem}\label{lemm:IfLoosesInReachabilityThenAlsoInModuler}
If player~1 does not have a winning strategy in the reachability game over $\Z^k$, then 
there is no modular winning strategy for player~1 in $\wrg$.
\end{lem}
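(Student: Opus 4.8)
The plan is to argue the contrapositive: assume player~1 has a modular winning strategy $\tau = \{\tau_0,\tau_1,\tau_2\}$ in $\wrg$, and extract from it a winning strategy in the reachability game over $\Z^k$. First I would apply Observation~\ref{obs:SimpleObs}: since $\tau$ is winning, $\tau_1$ must eventually exit $A_1$ (item~1), and $\tau_2$ must guarantee that every play within $A_2$ reaches $v^*$ (item~2). Because $\tau_2$ is a \emph{modular} (hence context-independent) local strategy for module $A_2$, the play inside $A_2$ — from the entry, through the copy of $G$, to $v^*$ — is a well-defined object that does not depend on how many times $A_2$ has already been invoked. This play inside $G$ is controlled by $\tau_2$ at player-1 vertices and by the (arbitrary) player-2 strategy elsewhere, so $\tau_2$ restricted to the $G$-part induces a genuine strategy $\straa$ for player~1 in the reachability game $G$.

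The key step is to show that $\straa$ is winning in the reachability game, i.e.\ that along every play consistent with $\straa$ some finite prefix $\pat'$ ending in a player-1 vertex satisfies $w(\pat') + \vec\nu = \vec 0$. Here I would use the first $2k$ weight dimensions of module $A_2$, which are designed precisely to track this: the entry edge contributes $(\vec\nu,-\vec\nu)$ and each edge $e$ of $G$ contributes $(w(e),-w(e))$, so after a prefix $\pat'$ of the $G$-play the accumulated weight in dimensions $1,\dots,k$ is $\vec\nu + w(\pat')$ and in dimensions $k+1,\dots,2k$ is $-(\vec\nu + w(\pat'))$. Now suppose, for contradiction, that $\straa$ is \emph{not} winning, so player~2 has a counter-strategy in $G$ under which no such zero-sum prefix ending at a player-1 vertex is ever hit. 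I would then have player~2, in $\wrg$, play that counter-strategy inside every invocation of $A_2$. Under this play, each invocation of $A_2$ accumulates a nonzero weight vector in the block $(1,\dots,k)$ together with its negation in the block $(k+1,\dots,2k)$ at the moment $v^*$ is reached (and $v^*$ must be reached by Observation~\ref{obs:SimpleObs}); the self-loops at $v^*$ and in $A_1$ only affect dimensions $2k+1,2k+2$. Since the contributions in blocks $1$–$k$ and $k+1$–$2k$ are exact negatives of each other, at least one of these two blocks has a coordinate whose partial sums drift to $-\infty$ (or fail to have nonnegative $\liminf$ of the average) — more carefully, because the sign of the nonzero contribution can vary between invocations, I would instead argue that the only way the $\liminf$-average can be $\ge 0$ in \emph{both} block~$1$ and block~$(k+1)$ simultaneously is for the per-invocation contribution to be exactly $\vec 0$, i.e.\ for $\vec\nu + w(\pat') = \vec 0$ to hold at the exit of the $G$-part. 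That forces $w(\pat')+\vec\nu = \vec0$ with $\pat'$ ending at a player-1 vertex (the vertex from which the transition to $v^*$ is taken is a player-1 vertex by construction), contradicting the assumed behaviour of player~2's counter-strategy.

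The main obstacle I anticipate is making the averaging argument of the previous paragraph fully rigorous: a single invocation of $A_2$ may take unboundedly many steps (player~2 can stall in $G$), and the modular strategy $\tau_2$ is only guaranteed to \emph{eventually} reach $v^*$, not to do so within a bounded number of steps, so one cannot simply talk about "the weight contributed per invocation" as a bounded quantity. I would handle this by looking at the sequence of configurations of bounded stack height — specifically the visits to the entry of $A_0$ between successive rounds — and noting that between two such visits the net weight in the mirrored blocks $(1,\dots,k)$ and $(k+1,\dots,2k)$ is some vector $\vec\delta$ and its negation $-\vec\delta$; if $\vec\delta \ne \vec 0$ in some round, then for the $\liminf$-average to be nonnegative in coordinate $i$ of block~$1$ we need the running sum there to not drift down, and simultaneously in coordinate $i$ of block~$2$ we need $-$(that running sum) to not drift down, which together (since a mean-payoff objective is about $\liminf$ of averages, and the two running sums are exact negatives at the bounded-stack-height snapshots) forces the average in coordinate~$i$ to converge to $0$; then one shows this is incompatible with infinitely many rounds contributing a fixed nonzero $\vec\delta$, and a compactness/pigeonhole argument over the finitely many possible behaviours of the finite-memory counter-strategy of player~2 in $G$ gives the contradiction. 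I would cite the determinacy and memoryless-strategy facts for reachability games over $\Z^k$ implicit in Lemma~\ref{lemm:ReachabilityOverZIsUndecidable}'s setup to keep player~2's counter-strategy finite-memory, which is what makes the bounded-stack-height snapshots behave periodically enough for this argument.
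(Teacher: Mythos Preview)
Your overall shape is right but you are working much too hard, and the fix you propose at the end rests on a claim that is not available. The paper's proof is a two-line periodicity argument: take any winning strategy $\strab$ for player~2 in the reachability game and let player~2 play $\strab$ on the \emph{local history} of each invocation of $A_2$. Now both players are playing modular strategies, so every invocation of $A_1$ produces exactly the same finite trace and every invocation of $A_2$ produces exactly the same finite trace (or, if that trace is infinite, Observation~\ref{obs:SimpleObs} already kills $\tau$). Hence the global play is literally periodic. One period contributes $(\vec\nu + w(\pat'),\,-(\vec\nu + w(\pat')),\,\ast,\,\ast)$ in the first $2k$ dimensions, and since $\strab$ prevents $\vec\nu + w(\pat') = \vec 0$ at the (player-1) vertex where $\tau_2$ jumps to $v^*$, some coordinate among the first $2k$ is strictly negative; by periodicity the $\LimInfAvg$ in that coordinate is strictly negative. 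Done.

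Your version misses this periodicity and instead tries to control a genuinely varying sequence of per-invocation contributions $\vec\delta_1,\vec\delta_2,\dots$, which is why you run into the ``unboundedly many steps per invocation'' obstacle. Your proposed resolution --- appealing to ``memoryless-strategy facts for reachability games over $\Z^k$ implicit in Lemma~\ref{lemm:ReachabilityOverZIsUndecidable}'' to make player~2's counter-strategy finite-memory --- is the real gap: no such fact is stated, and for an undecidable class of games one cannot expect winning strategies of bounded description (if player~2 always had finite-memory winning strategies on the finite graph $G$, the problem would be decidable by enumeration). Fortunately you do not need it: once you let player~2 replay the \emph{same} $\strab$ fresh in every invocation (which your sentence ``have player~2 play that counter-strategy inside every invocation of $A_2$'' already says), the per-invocation behaviour is identical regardless of whether $\strab$ has finite memory, and the averaging collapses to arithmetic on a single period.
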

\begin{proof}
If player~1 does not have a winning strategy in the reachability game over $\Z^k$,
then let $\strab$ be a player-2 winning strategy for the reachability game.
We fix player-2 strategy for the modular game to be $\strab$ 
according to the local history of $A_2$ and claim that it is a winning strategy for player~2 in the
WRG against the multidimensional mean-payoff objective for player~1.
Indeed, let $\tau = \Set{\tau_1,\tau_2}$ be a player-1 modular strategy, and 
we consider the path $\pi$ which is formed by playing according to $\tau$ and $\sigma$.
By Observation~\ref{obs:SimpleObs} if $\pi$ never exit $A_1$ or never reach node $v^*$, then player~2 wins.
Otherwise, since $\strab$ is a winning strategy in the reachability game, we get that in the 
first sub-path of $\pi$ that leads from the entrance of $A_2$ to $v^*$, one of the dimensions $1 \leq i \leq 2\cdot k$ 
has a negative weight.
We note that both $\strab$ and $\tau$ are modular strategies, and thus the path $\pi$ is periodic and the mean-payoff of 
$\pi$ in dimension $i$ is negative.
To conclude, if player 2 is the winner in the reachability game, then player~1 does not have a modular
winning strategy in $\wrg$.
\end{proof}

\begin{lem}\label{lemm:IfPlayer1WinsInReachHeWinsInModular}
If player~1 has a winning strategy in the reachability game,
then there is a modular winning strategy for player~1 in $\wrg$.
\end{lem}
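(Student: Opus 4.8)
The plan is to turn a winning strategy of player~1 in the reachability game over $\Z^k$ into a modular winning strategy in $\wrg$, after first extracting from it a winning strategy that reaches its goal in a \emph{bounded} number of steps. Concretely, I would first prove the auxiliary fact: if player~1 wins the reachability game over $\Z^k$ from the initial vertex $v$ with initial vector $\vec{\nu}$, then there is a winning strategy $\theta$ and a constant $N$ such that along every play consistent with $\theta$ the first prefix $\pat'$ that ends at a player-1 vertex and has $w(\pat')+\vec{\nu}=\vec{0}$ has length at most $N$, and every proper prefix of $\pat'$ has accumulated weight of $\ell_\infty$-norm at most $N\cdot W$. This is the standard attractor argument on the extended game graph whose configurations are pairs $(u,\vec{z})\in V\times\Z^k$ and whose target set is $\Set{(u,-\vec{\nu})\mid u\in V_1}$: this graph is finitely branching because $G$ is finite, so the player-1 attractor of the target stabilizes after finitely many levels, i.e.\ every configuration from which player~1 wins has a finite rank; take $N=\rank(v,\vec{\nu})$ and let $\theta$ be a rank-decreasing strategy. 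Along $\theta$ the rank strictly decreases at every step, so the target is reached within $N$ steps; since one step changes $\vec{z}$ by at most $W$ in $\ell_\infty$-norm, every intermediate configuration has $\|\vec{z}\|_\infty\le N\cdot W$.

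I would then define the modular strategy $\tau=\Set{\tau_0,\tau_1,\tau_2}$. In $A_1$, $\tau_1$ takes the self-loop exactly $N$ times and then goes to the exit. In $A_2$, $\tau_2$ plays according to $\theta$ inside the embedded copy of $G$ (the local history of $A_2$ records the traversed $G$-path, hence its accumulated weight) until a player-1 vertex with accumulated $G$-weight $-\vec{\nu}$ is reached; by the auxiliary fact this occurs after some number $L\le N$ of $G$-edges; then $\tau_2$ takes the edge to $v^*$, takes the self-loop at $v^*$ exactly $N-L$ times (well defined since $N-L\ge 0$), and goes to the exit. In $A_0$ player~1 has no genuine choice. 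Both $\tau_1$ and $\tau_2$ depend only on the local history of their module, so $\tau$ is a legal modular strategy, and since $A_1$ always exits and $A_2$ always reaches $v^*$, neither degenerate situation of Observation~\ref{obs:SimpleObs} arises.

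Finally I would verify $\LimAvg(\pat)\ge\vec{0}$ for every play $\pat$ consistent with $\tau$ and every (general) strategy of player~2. The relevant structural remark is that $w_{k+i}(e)=-w_i(e)$ for every edge $e$ of $\wrg$ and every $1\le i\le k$, and $w_{2k+2}(e)=-w_{2k+1}(e)$ for every $e$; so having $\LimAvg\ge\vec{0}$ in two sign-opposite dimensions forces the average of each to tend to $0$, and therefore it suffices to prove that the running partial-sum vector of the weights of $\pat$ stays bounded, since then every coordinate of the average weight vector tends to $0$ and $\LimAvg(\pat)=\vec{0}\ge\vec{0}$. For a dimension $i$ with $1\le i\le k$ the running sum is $0$ at the end of every completed invocation of $A_2$ (there the accumulated $G$-weight is $-\vec{\nu}$, which cancels the $+\vec{\nu}$ on the entry edge of $A_2$), is $0$ throughout $A_0$, throughout $A_1$, and throughout the part of an $A_2$-invocation after the move to $v^*$, and during the $G$-part of an invocation equals $\nu_i$ plus a partial $G$-weight of norm at most $N\cdot W$; hence it never exceeds $\|\vec{\nu}\|_\infty+N\cdot W$ in absolute value. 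For dimension $2k+1$, each $G$-edge contributes $-1$, each $v^*$-loop contributes $-1$, and each $A_1$-loop contributes $+1$, so one complete round (a call to $A_1$ followed by a call to $A_2$) contributes $N-\bigl(L+(N-L)\bigr)=0$ and within a round the partial sum stays in $[0,N]$; hence it is bounded by $N$. The remaining dimensions are sign-opposites of these, so the full running-sum vector is bounded and $\LimAvg(\pat)=\vec{0}$. The step I expect to be the main obstacle is the auxiliary boundedness fact: without it player~2, using a general (non-modular) strategy, could force arbitrarily long and deep excursions inside a single invocation of $A_2$, which would push $\LimInfAvg$ of dimensions $1,\dots,2k$ strictly below $0$; the finitely-branching attractor argument is precisely what prevents this, and the reachability winning condition, which guarantees the return to $-\vec{\nu}$ at a player-1 vertex, is what makes the move to $v^*$ available exactly then.
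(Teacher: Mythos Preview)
Your proposal is correct and follows essentially the same approach as the paper: both extract from the reachability winning strategy a uniform bound $N$ on the number of rounds needed to win (you via a rank/attractor argument on the configuration graph $V\times\Z^k$, the paper via K\"onig's Lemma, which is the same finitely-branching argument), and both then define the identical modular strategy---loop $N$ times in $A_1$, play the reachability strategy in $A_2$ until the target is hit after some $L\le N$ steps, then loop $N-L$ times at $v^*$. The paper simply asserts that ``any play according to the strategy above has mean-payoff value $0$ in every dimension'', whereas you actually carry out the bounded-partial-sums computation; your version is the more complete of the two. One small notational slip: with your stated target set $\Set{(u,-\vec{\nu})\mid u\in V_1}$ the initial configuration should be $(v,\vec{0})$, not $(v,\vec{\nu})$ (or, equivalently, keep the initial configuration and set the target to $\vec{0}$); this does not affect the argument.
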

\begin{proof}
Let $\tau_G$ be a player-1 winning strategy for the reachability game.
By K\"{o}nig's Lemma there exists a fixed constant $n\in\Nat$ such that player~1 
can assure the reachability objective, against every player-2 strategy, 
with at most $n$ rounds.
We now derive a modular winning strategy in $\wrg$ from $\tau_G$:
\begin{itemize}
\item Module $A_1$: Follow the self-loop edge for $n$ rounds and exit.
\item Module $A_2$: Follow strategy $\tau_G$, until the weight in every dimension, according to the reachability game over $G$, 
is $0$ and a player-1 vertex was reached, and then go to $v^*$.
Let $m$ be the number of rounds played according to $\tau_G$ in the current local history of $A_2$, 
then player~1 follows the self-loop in $v^*$ for $n-m$ times and goes to the exit node.
\end{itemize}
It is easy to observe that any play according to the strategy above has a mean-payoff value of $0$ in every dimension. 
\end{proof}

From Lemma~\ref{lemm:ReachabilityOverZIsUndecidable}, 
Lemma~\ref{lemm:IfLoosesInReachabilityThenAlsoInModuler}, and 
Lemma~\ref{lemm:IfPlayer1WinsInReachHeWinsInModular}  
we obtain the following result:
\begin{thm}
\label{thm:undec}
The problem of deciding the existence of a modular winning 
strategy in WRGs with multidimensional mean-payoff objectives 
is undecidable, even for hierarchical games (i.e., games without recursive calls), 
with six dimensions, three modules and with at most single exit for each module.
\end{thm}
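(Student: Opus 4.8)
The plan is to assemble Theorem~\ref{thm:undec} directly from the three lemmas already proved, observing that the reduction is spelled out entirely in the preceding construction of $\wrg = \atuple{A_0,A_1,A_2}$. First I would record that Lemma~\ref{lemm:ReachabilityOverZIsUndecidable} gives undecidability of the reachability game over $\Z^2$; this fixes $k=2$, so the constructed WRG has $2\cdot k + 2 = 6$ dimensions, three modules ($A_0$ has no exit, $A_1$ and $A_2$ have exactly one exit each), and no recursive calls (the call graph is $A_0 \to A_1$, $A_0 \to A_2$, and within $A_2$ the copy of $G$ is a finite-state game graph with no boxes), i.e., the game is hierarchical. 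Then I would note that Lemma~\ref{lemm:IfLoosesInReachabilityThenAlsoInModuler} and Lemma~\ref{lemm:IfPlayer1WinsInReachHeWinsInModular} together state precisely that player~1 has a modular winning strategy in $\wrg$ for the $6$-dimensional mean-payoff objective if and only if player~1 has a winning strategy in the given reachability game over $\Z^2$. Composing the reduction $G \mapsto \wrg$ (which is clearly computable, in fact polynomial) with the undecidability of the source problem yields undecidability of the modular winning strategy problem for WRGs with multidimensional mean-payoff objectives in the stated restricted setting.

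Concretely, the steps in order are: (i) invoke Lemma~\ref{lemm:ReachabilityOverZIsUndecidable} to get an undecidable source problem; (ii) apply the construction preceding Observation~\ref{obs:SimpleObs} with $k=2$ to obtain $\wrg$, and verify by inspection of the modules $A_0, A_1, A_2$ (Figures~\ref{fig1}--\ref{fig3}) that the construction is effective, uses six dimensions, three modules, at most one exit per module, and is hierarchical; (iii) chain the two equivalences from Lemma~\ref{lemm:IfLoosesInReachabilityThenAlsoInModuler} (contrapositive: if player~1 wins the modular game then player~1 wins the reachability game) and Lemma~\ref{lemm:IfPlayer1WinsInReachHeWinsInModular} (the converse direction) to conclude that the modular winning strategy problem for $\wrg$ decides the reachability game over $\Z^2$; (iv) conclude undecidability. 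There is essentially nothing left to compute here — all the quantitative work (tracking the ledger dimensions $1,\dots,2k$ that force the play to return exactly $-\vec\nu$, and the "stalling" dimensions $2k+1, 2k+2$ that via Observation~\ref{obs:SimpleObs} force $A_1$ to eventually exit and force every play in $A_2$ to reach $v^*$) is discharged inside the lemmas.

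The only real subtlety — and hence the step I would be most careful about in the write-up — is making sure the reduction respects the claimed resource bounds exactly: that $A_0$ genuinely has no exit node (so the play is infinite and consists of infinitely many alternating invocations of $A_1$ and $A_2$, which is what makes the mean-payoff well-defined and periodic under modular strategies), that $A_1$ and $A_2$ each have a single exit, that the embedded reachability arena $G$ introduces no boxes (keeping the game hierarchical), and that the dimension count is $2k+2 = 6$ rather than something larger. These are all immediate from the figures and the bullet-list description, so the proof is short. I do not anticipate any genuine obstacle; the substance of the theorem lives entirely in Lemma~\ref{lemm:ReachabilityOverZIsUndecidable} and in the two simulation lemmas, and the theorem statement is just the packaging of these facts together with a bookkeeping check of the parameters.

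\begin{proof}
By Lemma~\ref{lemm:ReachabilityOverZIsUndecidable}, the problem of deciding whether player~1 has a winning strategy in a reachability game over $\Z^2$ is undecidable. Apply the reduction described above with $k=2$: given a reachability game over $\Z^2$ with game graph $G$, weight function $w$, and initial vector $\vec{\nu}\in\Z^2$, we construct the WRG $\wrg=\atuple{A_0,A_1,A_2}$ together with its weight function of $2\cdot k+2 = 6$ dimensions. This construction is effective (indeed polynomial). By inspection of the modules (Figure~\ref{fig1}, Figure~\ref{fig2}, Figure~\ref{fig3}): module $A_0$ has no exit node, modules $A_1$ and $A_2$ have exactly one exit node each; the only boxes are those of $A_0$ that invoke $A_1$ and $A_2$, and the embedded copy of the finite-state arena $G$ inside $A_2$ contains no boxes, so $\wrg$ is hierarchical (no recursive calls); and the weight function has six dimensions. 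By Lemma~\ref{lemm:IfLoosesInReachabilityThenAlsoInModuler}, if player~1 has no winning strategy in the reachability game, then player~1 has no modular winning strategy in $\wrg$; by Lemma~\ref{lemm:IfPlayer1WinsInReachHeWinsInModular}, if player~1 has a winning strategy in the reachability game, then player~1 has a modular winning strategy in $\wrg$. Hence player~1 has a modular winning strategy in $\wrg$ for the multidimensional mean-payoff objective if and only if player~1 wins the reachability game over $\Z^2$. Since the latter problem is undecidable and the reduction is effective, the modular winning strategy problem for WRGs with multidimensional mean-payoff objectives is undecidable, even restricted to hierarchical games with six dimensions, three modules, and at most a single exit per module.
\end{proof}
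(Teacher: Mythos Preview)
Your proposal is correct and matches the paper's approach exactly: the paper derives Theorem~\ref{thm:undec} directly from Lemma~\ref{lemm:ReachabilityOverZIsUndecidable}, Lemma~\ref{lemm:IfLoosesInReachabilityThenAlsoInModuler}, and Lemma~\ref{lemm:IfPlayer1WinsInReachHeWinsInModular} without further argument. Your write-up is in fact more explicit than the paper's in verifying the parameter bounds (six dimensions, three modules, single exits, hierarchical), which is a good thing.
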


In view of Theorem~\ref{thm:undec} we will focus on complexity and algorithms for 
WRGs under modular strategies for one-dimensional mean-payoff objectives.

\subsection{NP-hardness}
We consider WRGs under modular strategies with one-dimensional mean-payoff objectives.
It was already shown in~\cite{CV12} that if the number of modules is not bounded,
then even if all modules have at most one exit, the problem is NP-hard
even when there is only player~1 and weights are restricted to $\Set{-1,0,1}$.
We present a similar hardness result when the number of modules are restricted
to only two, but the number of exits are not bounded.
We present a simple log-space reduction from 3SAT to WRGs with two modules.
The objective we will consider is the reachability objective, where the mean-payoff
objective is satisfied once a vertex $r$ is reached (i.e., $r$ has a self-loop 
with weight~0 and all other transitions have negative weight).

\begin{figure}
\begin{center}
\begin{picture}(50,80)(80,-100)

\node[Nw=180.0,Nh=100.0,Nmr=15.0](A1)(105,-64.0){}

\node[Nw=4.0,Nh=4.0,Nmr=2.0](En)(15.0,-64.0){}


\node[Nw=12.0,Nh=32.0,Nmr=2.0](X1A1)(43,-44.0){$A_1$}
\node[Nw=12.0,Nh=32.0,Nmr=2.0](NotX2A1)(66,-44.0){$A_1$}
\node[Nw=12.0,Nh=32.0,Nmr=2.0](X3A1)(89,-44.0){$A_1$}

\node[Nw=12.0,Nh=32.0,Nmr=2.0](X2A1)(122,-44.0){$A_1$}
\node[Nw=12.0,Nh=32.0,Nmr=2.0](NotX3A1)(145,-44.0){$A_1$}
\node[Nw=12.0,Nh=32.0,Nmr=2.0](NotX4A1)(168,-44.0){$A_1$}


\node(Bad)(156.5,-100.0){$\neg r$}
\node(Good)(156.5,-80.666){$r$}

\node[Nw=2.0,Nh=2.0,Nmr=1.0](enX1A1)(37,-44.0){}
\node[Nw=2.0,Nh=2.0,Nmr=1.0](enNotX2A1)(60,-44.0){}
\node[Nw=2.0,Nh=2.0,Nmr=1.0](enX3A1)(83,-44.0){}
\node[Nw=2.0,Nh=2.0,Nmr=1.0](enX2A1)(116,-44.0){}
\node[Nw=2.0,Nh=2.0,Nmr=1.0](enNotX3A1)(139,-44.0){}
\node[Nw=2.0,Nh=2.0,Nmr=1.0](enNotX4A1)(162,-44.0){}

\node[Nw=5.75,Nh=5.75,Nmr=2.875](exX1A1)(49,-36.0){$x_1$}
\node[Nw=5.75,Nh=5.75,Nmr=2.875](exNotX2A1)(72,-36.0){$\neg x_2$}
\node[Nw=5.75,Nh=5.75,Nmr=2.875](exX3A1)(95,-36.0){$x_3$}
\node[Nw=5.75,Nh=5.75,Nmr=2.875](exX2A1)(128,-36.0){$\neg x_2$}
\node[Nw=5.75,Nh=5.75,Nmr=2.875](exNotX3A1)(151,-36.0){$x_3$}
\node[Nw=5.75,Nh=5.75,Nmr=2.875](exNotX4A1)(174,-36.0){$x_4$}

\node[Nw=5.75,Nh=5.75,Nmr=2.875](NexX1A1)(49,-52.0){$\neg x_1$}
\node[Nw=5.75,Nh=5.75,Nmr=2.875](NexNotX2A1)(72,-52.0){$x_2$}
\node[Nw=5.75,Nh=5.75,Nmr=2.875](NexX3A1)(95,-52.0){$\neg x_3$}
\node[Nw=5.75,Nh=5.75,Nmr=2.875](NexX2A1)(128,-52.0){$x_2$}
\node[Nw=5.75,Nh=5.75,Nmr=2.875](NexNotX3A1)(151,-52.0){$\neg x_3$}
\node[Nw=5.75,Nh=5.75,Nmr=2.875](NexNotX4A1)(174,-52.0){$\neg x_4$}


\drawedge(En,enX1A1){}
\drawedge(NexX1A1,enNotX2A1){}
\drawedge(NexNotX2A1,enX3A1){}

\drawedge(exX3A1,enX2A1){}

\drawedge(exX2A1,enNotX3A1){}
\drawedge(exNotX3A1,enNotX4A1){}

\drawedge[curvedepth=-8.0](NexX3A1,Bad){}

\drawbpedge(exX1A1,0,0,enX2A1,100,60){}
\drawbpedge(exNotX2A1,0,0,enX2A1,100,45){}

\drawedge(NexX2A1,Good){}
\drawedge[curvedepth=8.0](NexNotX3A1,Good){}
\drawedge[curvedepth=10.0](NexNotX4A1,Good){}

\drawedge[curvedepth=20.0](exNotX4A1,Bad){}

\drawloop[loopdiam=4,loopangle=270.0](Good){$0$}
\drawloop[loopdiam=4,loopangle=270.0](Bad){$-1$}

\end{picture}
\caption{Module $A_1$ for 
$(x_1 \vee \neg x_2 \vee x_3)\wedge ( x_2 \vee \neg x_3 \vee \neg x_4)$}\label{fig4}
\end{center}
\end{figure}
\begin{figure}
\begin{center}
\begin{picture}(107,90)(50,-110)

\node[Nw=180.0,Nh=120.0,Nmr=15.0](A1)(105,-64.0){}

\node[Nw=4.0,Nh=4.0,Nmr=2.0](En)(15.0,-64.0){}

\node[Nw=8.0,Nh=8.0,Nmr=4.0](exX1)(195,-22.0){$x_1$}
\node[Nw=8.0,Nh=8.0,Nmr=4.0](exNotX1)(195,-34.0){$\neg x_1$}

\node[Nw=10.0,Nh=10.0,Nmr=0](X1)(135,-28.0){$x_1$}

\node[Nw=8.0,Nh=8.0,Nmr=4.0](exX2)(195,-46.0){$x_2$}
\node[Nw=8.0,Nh=8.0,Nmr=4.0](exNotX2)(195,-58.0){$\neg x_2$}

\node[Nw=10.0,Nh=10.0,Nmr=0](X2)(135,-52.0){$x_2$}

\node[Nw=8.0,Nh=8.0,Nmr=4.0](exX3)(195,-70.0){$x_3$}
\node[Nw=8.0,Nh=8.0,Nmr=4.0](exNotX3)(195,-82.0){$\neg x_3$}

\node[Nw=10.0,Nh=10.0,Nmr=0](X3)(135,-76.0){$x_3$}

\node[Nw=8.0,Nh=8.0,Nmr=4.0](exX4)(195,-94.0){$x_4$}
\node[Nw=8.0,Nh=8.0,Nmr=4.0](exNotX4)(195,-106.0){$\neg x_4$}

\node[Nw=10.0,Nh=10.0,Nmr=0](X4)(135,-100.0){$x_4$}

\drawedge(En,X1){}
\drawedge(En,X2){}
\drawedge(En,X3){}
\drawedge(En,X4){}

\drawedge(X1,exX1){}
\drawedge(X1,exNotX1){}

\drawedge(X2,exX2){}
\drawedge(X2,exNotX2){}

\drawedge(X3,exX3){}
\drawedge(X3,exNotX3){}

\drawedge(X4,exX4){}
\drawedge(X4,exNotX4){}
\end{picture}
\caption{Module $A_0$}\label{fig5}
\end{center}
\end{figure}

\Heading{The reduction.}
For a 3SAT formula $\varphi(x_1,\dots,x_n) = \bigwedge_{i=1}^m C_i$ we construct a WRG with two modules, namely $A_0$ and $A_1$.
\begin{itemize}
\item Module $A_1$: The module has $2n$ exits namely, $\Ex_{x_1},\Ex_{\neg x_1},\dots,\Ex_{x_n},\Ex_{\neg x_n}$, an entrance node that is owned by player 2, and $n$ player-1 nodes $x_1,\dots,x_n$.
From the entrance node there is a transition $(\En,x_i)$, for $i=1,\dots,n$;
and from every node $x_i$ there is one transition to $\Ex_{x_i}$ and one transition to $\Ex_{\neg x_i}$.
Intuitively, a modular strategy for player 1 is to decide on a True/False value for every $x_i$.
\item Module $A_0$: This is the initial module; it consists of $m$ gadgets $C_1,\dots,C_m$ (note that these are gadgets and not modules), and two sink states, namely $r$ and $\neg r$, where $r$ is the reachability objective.
A gadget $C_i = y^1_i \vee y^2_i \vee y^3_i$ consists of three sub-gadgets, namely, $y^1_i, y^2_i, y^3_i$;
gadget $y^j_i$ invokes module $A_1$ and the exits $(\Set{\Ex_{x_1},\Ex_{\neg x_1},\dots,\Ex_{x_n},\Ex_{\neg x_n}} \setminus \Set{y^j_i, \neg y^j_i})$ 
of $A_1$ leads to the good sink node $r$, the exit $y^j_i$ leads to gadget $C_{i+1}$ (or to node $r$ if $i=m$), and the exit $\neg y^j_i$ leads to sub-gadget $y^{j+1}_i$ 
(or to the bad sink node $\neg r$ if $j = 3$).
\end{itemize}
The reduction is illustrated in Figure~\ref{fig4} and Figure~\ref{fig5}.
It is an easy observation that player 1 has a modular winning strategy iff the formula $\varphi$ is satisfiable.

\begin{thm}\label{thm:np-hard}
The decision problem of existence of modular winning strategies in WRG's
with one-dimensional mean-payoff objectives is NP-hard even for 
WRG's with two modules and weights restricted to $\Set{0,-1}$.
\end{thm}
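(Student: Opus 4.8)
The plan is to establish the ``easy observation'' underlying the reduction, namely that player~1 has a modular winning strategy in $\wrg$ if and only if the formula $\varphi$ is satisfiable; the remaining claims -- that the construction is computable in logarithmic space, has exactly two modules, and uses only weights in $\Set{0,-1}$ -- are immediate from the description. First I would record how the mean-payoff objective encodes reachability: the global game graph is acyclic apart from the two self-loops at the sinks $r$ (weight $0$) and $\neg r$ (weight $-1$), since module $A_1$ has no cycles and the boxes of $A_0$ are ordered lexicographically by $(i,j)$ with every return edge going to a strictly later gadget or to a sink. Hence every play reaches exactly one of $r,\neg r$ in finitely many steps and then loops there forever, so $\LimAvg(\pat)\ge\VEC 0$ holds iff $\pat$ reaches $r$. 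The modular winning strategy problem thus becomes: can player~1 force the play into $r$ with a modular strategy?

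Second, I would identify player-1 modular strategies with truth assignments. Module $A_1$ contains no boxes, so $\Returns_1=\emptyset$ and its local strategy depends only on the sequence of nodes seen in the current invocation; the single player-1 node $x_\ell$ is always reached with local history $\atuple{\En,x_\ell}$, so the strategy commits once and for all -- the same in every invocation, because a modular strategy cannot observe the invoking box, the call stack, or the global history -- to either the edge $(x_\ell,\Ex_{x_\ell})$ or the edge $(x_\ell,\Ex_{\neg x_\ell})$. Declaring $\nu(x_\ell)=\true$ iff the strategy chooses $\Ex_{x_\ell}$ gives a bijection between player-1 modular strategies (module $A_0$ itself offers no choices, and internal memory inside $A_1$ is useless) and truth assignments $\nu$ on $x_1,\dots,x_n$.

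Third, I would analyse the play produced by a fixed $\nu$ against an arbitrary (general, possibly history-dependent) player-2 strategy. Player~2's only choices occur at the entry node of $A_1$: each time $A_1$ is invoked from a sub-gadget $y_i^j$, whose literal is over the variable $x_k$, player~2 picks a node $x_\ell$. If $\ell\neq k$, then player~1 leaves $A_1$ through $\Ex_{x_\ell}$ or $\Ex_{\neg x_\ell}$, and since neither lies in $\Set{\Ex_{x_k},\Ex_{\neg x_k}}=\Set{\Ex_{y_i^j},\Ex_{\neg y_i^j}}$, the play is routed directly to $r$ and player~1 wins. So either player~2 at some point probes a wrong variable -- and player~1 wins -- or player~2 always probes $\ell=k$, in which case player~1 exits via $\Ex_{x_k}$ or $\Ex_{\neg x_k}$ according to $\nu(x_k)$, and this exit equals $\Ex_{y_i^j}$ exactly when $\nu$ satisfies the literal $y_i^j$. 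Following this, the play advances from $C_i$ to $C_{i+1}$ (or to $r$ when $i=m$) as soon as one of $y_i^1,y_i^2,y_i^3$ is satisfied by $\nu$, and is sent to $\neg r$ only if $\nu$ falsifies all three literals of $C_i$.

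Putting this together: some player-2 strategy reaches $\neg r$ iff $\nu$ falsifies all literals of some clause, i.e.\ iff $\nu\not\models\varphi$; otherwise every play reaches $r$. Hence the modular strategy given by $\nu$ is winning for player~1 iff $\nu\models\varphi$, so player~1 has a modular winning strategy in $\wrg$ iff $\varphi$ is satisfiable, which yields the NP-hardness. I expect the only delicate point to be the argument in the third paragraph -- checking that the ``wrong-variable trap'' routes every such deviation to $r$ and that an adaptive player~2 gains nothing from the general-strategy freedom (it does not, since once player~1 is committed to $\nu$ the residual game is finite-horizon and essentially positional) -- together with stating the strategy-to-assignment correspondence of the second paragraph carefully, as that is exactly where the modularity restriction is used.
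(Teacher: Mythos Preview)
Your proposal is correct and follows exactly the approach of the paper: the paper gives the same 3SAT reduction and then states ``It is an easy observation that player~1 has a modular winning strategy iff the formula $\varphi$ is satisfiable,'' without further detail. Your three paragraphs supply precisely those details---the acyclicity/reachability reformulation, the identification of modular strategies in $A_1$ with truth assignments via the fixed local history $\atuple{\En,x_\ell}$, and the case analysis on player~2's probe---so there is nothing to add or correct.
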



\subsection{Algorithm for one-dimensional dimensional mean-payoff objectives} 
Given the undecidability result, we focus on WRGs with one-dimensional 
mean-payoff objectives, and given the hardness results for either unbounded 
number of modules or unbounded number of exits, our goal is to present an algorithm
that runs in polynomial time if both the number of modules and the number of exits are bounded.
For the rest of this section we denote the number of game modules by $\NumModules$, the number of exits and boxes 
(in the entire graph) by $\NumExits$ and $\NumBoxes$, respectively, and 
by $n$ and $m$ the maximal size of $|V_i|$ and $|\delta_i|$ 
(number of vertices and transitions) respectively that a module has. 
If $\NumModules$, $\NumExits$ and $W$ (the maximal absolute weight) are bounded, 
then our algorithm runs in polynomial time. 
We first present a theorem from~\cite{CV12} that will be useful in our result
and then present the notion of cycle-free memoryless modular strategy.

\begin{thm}[\cite{CV12}]\label{thm:memoryless-modular}
Given a WRG $\wrg$ with a one-dimensional weight function, if there is a modular 
winning strategy for the objective $\LimAvg$, then there is a memoryless
modular winning strategy.
\end{thm}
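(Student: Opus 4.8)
\emph{Proof plan.} We establish the statement in two stages: first that a modular winning strategy may be assumed to be \emph{cycle-independent}, and then that a cycle-independent modular winning strategy can be converted into a memoryless one. Throughout we use that recursive games are isomorphic to pushdown games and that a one-dimensional $\LimAvg$ objective is prefix-independent and shift-invariant.

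\emph{Stage 1: cycle-independent modular winning strategies.} Call a modular strategy $\tau=\{\tau_i\}_{i=1}^{\NumModules}$ cycle-independent if, for every module $A_i$ and every local history $h\in(N_i\cup\Returns_i)^*$, the prescribed transition $\tau_i(h)$ depends only on the \emph{loop-erasure} $\widehat h$ of $h$ (the simple path obtained by iteratively deleting cyclic segments). Given a modular winning strategy, choose one that is minimal in a suitable well-founded order on local strategies. One argues this strategy is cycle-independent: if some local history $h\cdot c\cdot g\cdot c$ of $A_i$ (revisiting the node/return $c$) were answered differently from $h\cdot c$, then the cyclic segment $g$ is either beneficial for player~1 — in which case player~1 can instead absorb it, realizing the same effect on the module's accumulated-weight/exit behavior with a strictly smaller witness, contradicting minimality — or non-beneficial, in which case splicing $g$ out yields a winning strategy of smaller witness, again a contradiction. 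The delicate point is that the accumulated weights of bounded local plays are \emph{not} irrelevant to the global limit-average, precisely because a modular strategy reuses the same local behavior in every invocation of the module; one therefore proceeds by induction on the recursion structure of plays, maintaining as hypothesis that each called module realizes, under the modified strategy, the same worst-case exit-and-mean-payoff behavior (a module summary function, analogous to those used for pushdown graphs) as under the original~$\tau$.

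\emph{Stage 2: from cycle-independent to memoryless.} A cycle-independent $\tau_i$ is a function of $\widehat h$, i.e.\ of simple paths in $A_i$, of which there are only finitely many; hence $\tau$ is finite-memory. To eliminate the memory, abstract each module $A_i$ into a finite two-player game $\widehat A_i$ by replacing every box $b$ with $Y_i(b)=j$ by a gadget that exposes the finitely many worst-case behaviors player~2 can force module $A_j$ into — reach a given exit with a given accumulated weight, or loop forever with a given mean-payoff — i.e.\ its module summary. The objective in $\widehat A_i$ is a one-dimensional mean-payoff condition together with a reachability component to the exits, and finite games with such objectives are memoryless-determined (memoryless determinacy of mean-payoff games~\cite{EM79}, the reachability part being positional for free). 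Player~1 thus has, in each module, a memoryless local strategy attaining the best summary still compatible with the global win. Since the global winning condition decomposes along module summaries — the set of attainable ``good'' summary-tuples is the greatest fixed point of a monotone operator on the finite lattice of summaries, each approximation step being one of the games $\widehat A_i$ — selecting these memoryless local strategies simultaneously at the fixed point produces a memoryless modular strategy, which is winning by the same summary-composition argument used in Stage~1.

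\emph{Main obstacle.} The crux is Stage~1: because a modular strategy reuses the same local strategy in every calling context, editing the local play of one invocation perturbs the global play at every recursive occurrence of that module, so the argument that the edited strategy stays winning cannot be purely local; it must be carried out by an induction over the recursion structure that tracks module summaries and isolates negative cycles as the only genuine obstruction. Stage~2 is comparatively routine given the standard memoryless determinacy of finite mean-payoff games; the only care needed is verifying soundness of the finite summary abstraction and that the simultaneous memoryless choices compose into a modular winning strategy.
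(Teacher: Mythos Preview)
The paper does not prove this theorem; it is quoted from~\cite{CV12} and used as a black box. So there is no in-paper proof to compare against. That said, your two-stage plan---first reduce to cycle-independent modular strategies, then to memoryless ones via a finite module-summary abstraction---is exactly the route advertised for~\cite{CV12} (and is consistent with how the present paper subsequently uses signatures in Lemmas~\ref{lemm:SigIsBounded}--\ref{lemm:WinIffThereIsASignature}).

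As a proof, however, your write-up has real gaps. In Stage~1 the minimality argument is not pinned down: you never specify the well-founded order, and the ``beneficial / non-beneficial'' dichotomy for a local cycle segment is not well-posed, because the same local cycle is replayed in every invocation of the module and its global effect on $\LimAvg$ depends on the (infinitely many) calling contexts. You acknowledge this in your ``main obstacle'' paragraph, but the induction you gesture at---over recursion structure, tracking module summaries---is precisely the content of the proof, and you have not supplied it. In Stage~2 you assert that the abstracted per-module game has a ``mean-payoff plus reachability'' objective that is memoryless-determined; this is not a standard memoryless-determinacy result and needs an argument (in~\cite{CV12} this is handled by a dedicated lemma, and the present paper invokes that lemma explicitly in the proof of Lemma~\ref{lemm:SignatureFeasability}). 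You also silently assume the module summaries are finite objects; for that you need a bound on the accumulated entry-to-exit weights, which is exactly Lemma~\ref{lemm:SigIsBounded} here and is nontrivial (it uses a pumping argument). So: right architecture, but the load-bearing steps are still missing.
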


\smallskip\noindent{\bf Negative-cycle-free memoryless modular strategy.}
A player-1 memoryless modular strategy $\tau$ is called 
\emph{negative-cycle-free memoryless modular strategy} if in the recursive graph $\wrg^\tau$ 
there are no proper cycles $C$ with negative weights, i.e., $w(C)<0$.

\smallskip\noindent{\bf Signature of a negative-cycle-free memoryless modular strategy.}
The \emph{signature} of a negative-cycle-free memoryless modular strategy 
$\tau = \Set{\tau_i}_{i=1}^\NumModules$ is an $\NumModules$-tuple of function 
$\Sig(\tau) = \Set{\Sig_i : \Ex_i \to \Z\cup \Set{-\omega,+\infty}}_{i=1}^\NumModules$ such 
that for an exit node $x$ in module $A_i$ we have $\Sig_i(x) = z$ if
\begin{itemize}
\item $z\in\Z$ and the non-decreasing path with the minimal weight in $\wrg^\tau$ from $\En_i$ to $x$ 
(in the same stack height) has weight $z$.
\item $z = +\infty$ and there is no non-decreasing path in $\wrg^\tau$ from  $\En_i$ to $x$.
\item $z = -\omega$ and for every integer $j$ there is a non-decreasing path in $\wrg^\tau$ from $\En_i$ to $x$ (at the same stack height), 
with weight at most $j$.
\end{itemize}

The next lemma demonstrates an important property of signature functions.
\begin{lem}\label{lemm:SigIsBounded}
Let $\ell = (\NumModules \cdot n) ^{\NumModules \cdot \NumExits + 1}$;
let $\tau$ be a negative-cycle-free memoryless modular strategy; and 
let $W$ denotes the maximal weight (in absolute value) that occur in $\wrg$.
Then $\Sig(\tau)$ has the following property:
\begin{quote}
For every $i\in\RangeSet{1}{\NumModules}$, the image (range) of $\Sig_i$ is $\{-\omega,+\infty\}\cup (\Z \cap [-W\cdot \ell,W\cdot \ell])$
\end{quote}
\end{lem}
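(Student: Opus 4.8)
\emph{Proof plan.} Since $-\omega$ and $+\infty$ already belong to the claimed image, the plan is to fix an exit node $x$ of a module $A_i$ with $\Sig_i(x)=z\in\Z$ and prove $|z|\le W\cdot\ell$, where $\ell=(\NumModules\cdot n)^{\NumModules\cdot\NumExits+1}$. By definition $z$ is the least weight of a non-decreasing path in $\wrg^\tau$ from $\En_i$ to $x$ at the same stack height; I would pick such a path $\pi$ of weight $z$ that is, moreover, of minimal length, so that it suffices to bound $|\pi|\le\ell$, because $|z|=|w(\pi)|\le W\cdot|\pi|$.

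The first step is to show that $\pi$ admits no pumpable pair of paths (allowing, by the convention used for proper cycles, one of the two paths in the pair to be empty). Suppose $(p_1,p_2)$ were such a pair and set $v=w(p_1)+w(p_2)$. If $v<0$, pumping gives valid non-decreasing paths $\pi^j_{(p_1,p_2)}$ from $\En_i$ to $x$ whose weights tend to $-\infty$, contradicting $\Sig_i(x)=z\in\Z$; the subcase where $p_2$ is empty is already impossible, since then $p_1$ would be a proper cycle of negative weight, contradicting that $\tau$ is negative-cycle-free. If $v\ge 0$, then $\pi^0_{(p_1,p_2)}$ is a strictly shorter non-decreasing path from $\En_i$ to $x$ of weight $z-v\le z$, contradicting minimality of $z$ or of $|\pi|$. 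Hence $\pi$ is free of proper cycles and of pumpable pairs.

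The second step turns this into a bound on the additional stack height and then on the length of $\pi$. Using the polynomial isomorphism between recursive and pushdown game graphs~\cite{AlurReach} and the single-entrance normalization, I would argue that if $\ASH(\pi)\ge\NumExits$, then at the deepest point of $\pi$ there are at least $\NumExits+1$ module invocations open on the stack, each of which is eventually left through a definite exit along $\pi$; since there are only $\NumExits$ pairs of the form (module, one of its exits), two \emph{nested} open invocations must share a module $A_m$ and an exit $x'$, say an outer one realized from $(\beta,\En_m)$ to $(\beta,x')$ and an inner one from $(\beta\mu,\En_m)$ to $(\beta\mu,x')$ with $\mu$ non-empty. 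The initial ascent $p_1$ from $(\beta,\En_m)$ to $(\beta\mu,\En_m)$ together with the final descent $p_2$ from $(\beta\mu,x')$ to $(\beta,x')$ then form a genuine pumpable pair, contradicting the first step; hence $\ASH(\pi)<\NumExits$. Finally, viewing $\pi$ at the level of its outermost module and using that no configuration repeats at a fixed level (no proper cycle), a routine induction on the additional stack height — each level contributes a number of outermost segments polynomial in $\NumModules,n,\NumExits$, each segment being either a single transition or a call that recurses with strictly smaller additional stack height — gives $|\pi|\le(\NumModules\cdot n)^{O(\NumExits)}$, which the exponent $\NumModules\cdot\NumExits+1$ of $\ell$ comfortably accommodates; therefore $|z|\le W\cdot\ell$.

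The main obstacle I anticipate is the second step: extracting a bona fide pumpable pair from deep nesting. One must check that the ascent and descent segments are non-empty and occur in the correct order, and that pumping them produces valid (and still non-decreasing) paths — here it helps that transitions of a recursive game graph are stack-insensitive except at returns, so that $p_1$ and $p_2$, which avoid the final returns of the outer invocation, can be re-executed at an elevated stack level while the enclosed part runs unchanged. One must also take the pigeonhole over the right finite set, namely (module, exit) pairs after the single-entrance reduction, so that the exponent is governed by $\NumModules\cdot\NumExits$ rather than by $|Q|\cdot|\Gamma|$ as in the generic pushdown estimate of Lemma~\ref{lemm:EveryDeepPathHasAPumpablePair}.
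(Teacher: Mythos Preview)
Your proof is correct and follows essentially the same approach as the paper: pick a minimal-weight, shortest non-decreasing witness path, show that any pumpable pair would contradict either $\Sig_i(x)\in\Z$, negative-cycle-freeness, minimality of the weight, or minimality of the length, and conclude that the path is short and hence its weight is bounded by $W\cdot\ell$. The only difference is organizational---the paper handles the lower and upper bounds on $z$ separately and simply asserts that a path of length exceeding $\ell$ must contain a pumpable pair, whereas you go further and sketch the pigeonhole argument on (module,~exit) pairs that justifies this in the recursive-game setting; your version is therefore more complete on the point you correctly identified as the main obstacle.
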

\begin{proof}
We fix the strategy $\tau$ in $\wrg$, and obtain the player-2 recursive game 
graph $\wrg^\tau$.
To show the result we need to prove that if there is a path 
(in $\wrg^\tau$) from $\En_i$ (the entrance of $A_i$) to 
$x \in  \Ex_i$ with weight less than $-W \cdot \ell$, 
then for every $r\in\Z$ there exists a path from 
$\En_i$ to $x$, consistent with $\tau$, and with weight less than $r$;
and that it is impossible that the path with the minimal weight from 
$\En_i$ to $x$ has weight at least $W\cdot \ell + 1$.

The proof is as follows: 
let $\pi$ be the shortest path in $\wrg^\tau$ 
from $\En_i$ to $x$ with weight $w(\pi) < -W\cdot \ell$ 
(note that $\pi$ corresponds to a play consistent with $\tau$).
Since $w(\pi) < -W\cdot \ell$, it must be that $|\pi| > \ell$, 
therefore $\pi$ must have a pumpable pair of paths, and 
since $\pi$ is the shortest path from $\En_i$ to $x$ with such weight, 
the weight of the pumpable pair must be strictly negative, and 
thus we can construct paths from $\En_i$ to $x$ with arbitrary small weights.

Similarly, we show that if there is a path from $\En_i$ to $x$, 
then there is a path with weight at most $W\cdot \ell - 1$.
Towards contradiction, let $\pi$ be the path with minimal weight between 
$\En_i$ and $x$ and $w(\pi)\geq W\cdot \ell$ and $\pi$ is the shortest path 
with minimal weight.
As $|\pi|\geq \ell$ it follows that it has a pumpable pair of paths $P$.
If $w(P) > 0$ or $w(P) < 0$, 
then we get a contradiction to the fact that $\pi$ has minimal weight (either by omitting 
$P$ if $w(P)<0$ or pumping $P$ arbitrarily if $w(P)>0$).
If $W(P) = 0$, then we get a contradiction to the assumption that $\pi$ is the shortest path by
simply omitting $P$. 
The desired result follows.
\end{proof}

\Heading{Feasibility of signature.}
We say that a function $\Sig : \Ex\to \Set{-\omega,+\infty}\cup \Z$ is \emph{feasible} if there is a negative-cycle-free memoryless modular strategy $\tau$ such that $\Sig(\tau) = \Sig$.

\begin{lem}\label{lemm:SignatureFeasability}
Given a threshold vector $\VEC{\nu}\in (\Set{-\omega,+\infty}\cup \Z)^\NumExits$,
we can verify in 
$(\NumModules \cdot n)^{O(\NumModules \cdot \NumExits^2)} \cdot W^{O(\NumExits)}$
time if there exists 
a feasible signature function $\Sig : \Ex\to \Set{-\omega,+\infty}\cup\Z$ 
such that $\Sig \geq \VEC{\nu}$ (i.e., for every $x\in\Ex$ we have $\Sig(x) \geq \nu_x$).
\end{lem}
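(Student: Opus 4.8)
We reduce the problem to an exhaustive search over candidate signatures together with a per-module consistency check. By Lemma~\ref{lemm:SigIsBounded}, the signature of any negative-cycle-free memoryless modular strategy takes values in $D=\{-\omega,+\infty\}\cup(\Z\cap[-W\cdot\ell,W\cdot\ell])$ with $\ell=(\NumModules\cdot n)^{\NumModules\cdot\NumExits+1}$. There are therefore at most $|D|^{\NumExits}=(2W\ell+3)^{\NumExits}$ functions $\Sig:\Ex\to D$, a number bounded by $(\NumModules\cdot n)^{O(\NumModules\cdot\NumExits^2)}\cdot W^{O(\NumExits)}$. The algorithm enumerates all such $\Sig$ with $\Sig\geq\vec{\nu}$, tests each for feasibility, and answers affirmatively iff some tested $\Sig$ is feasible. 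This reduction is correct: if a feasible $\Sig'\geq\vec{\nu}$ exists then, by Lemma~\ref{lemm:SigIsBounded}, $\Sig'$ is itself one of the enumerated candidates; and any enumerated feasible $\Sig\geq\vec{\nu}$ is a direct witness.

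\textbf{The feasibility test.}
Given a full candidate $\Sig=\{\Sig_i\}_{i=1}^{\NumModules}$, for each module $A_i$ we construct a finite two-player game graph $\wh{G}_i$: the nodes of $A_i$ are kept, and every box $b\in B_i$ with $Y_i(b)=j$ is replaced by a gadget in which the call vertex of $b$ has, for each $x\in\Ex_j$, an edge to the return vertex $(b,x)$ of weight $\Sig_j(x)$ (the edge is dropped when $\Sig_j(x)=+\infty$ and flagged as an $\omega$-edge when $\Sig_j(x)=-\omega$); vertex ownership is inherited. We then decide whether player~1 has a memoryless strategy $\tau_i$ in $\wh{G}_i$ such that (a)~in the part of $\wh{G}_i^{\tau_i}$ reachable from $\En_i$ there is no negative cycle and no $\omega$-edge lying on a cycle, and (b)~for every $x\in\Ex_i$ the minimum weight of a path from $\En_i$ to $x$ in $\wh{G}_i^{\tau_i}$ equals $\Sig_i(x)$ (with the conventions $-\omega$ for ``unbounded below'' and $+\infty$ for ``no path''). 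The candidate $\Sig$ is declared feasible iff this holds for every $i\in\RangeSet{1}{\NumModules}$. Correctness of this decomposition follows from the context-independence of modular strategies (cf.\ Theorem~\ref{thm:memoryless-modular} and~\cite{CV12}): a sub-path of a play that enters box $b$ and returns through $x$ corresponds exactly to a non-decreasing path from $\En_j$ to $x$, whose minimal weight is $\Sig_j(x)$; consequently the minimal weight of a non-decreasing path from $\En_i$ to an exit in $\wrg^{\tau}$ equals the minimal path weight in $\wh{G}_i^{\tau_i}$, and $\wrg^{\tau}$ contains a negative proper cycle iff some $\wh{G}_i^{\tau_i}$ contains a negative cycle or a reachable $\omega$-edge on a cycle. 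Hence $\Sig(\tau)=\Sig$ with $\tau$ negative-cycle-free iff (a) and (b) hold for all modules. The apparent circularity when a module calls itself is handled as usual for recursive systems: $\Sig$ is a \emph{guessed} globally consistent valuation, and the local strategies $\{\tau_i\}$ that pass the test assemble into a single modular $\tau$ realizing it.

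\textbf{The per-module check and complexity.}
Condition (a) together with the lower-bound half of (b) --- no path from $\En_i$ to $x$ of weight below $\Sig_i(x)$, and no negative cycle --- is, via the standard correspondence between shortest distances and feasible node potentials, equivalent to the joint existence of a memoryless $\tau_i$ and a potential function $\phi$ on $\wh{G}_i$ with $\phi(\En_i)=0$ and $\phi(x)\geq\Sig_i(x)$ for every exit $x$; deciding this is a finite two-player ``potential/energy'' game, solvable in time polynomial in $|\wh{G}_i|$ and $W\cdot\ell$. The upper-bound half of (b) is a bounded-weight reachability requirement, and the at most $\NumExits$ exits of a module are discharged simultaneously by a multi-objective refinement of the one-dimensional summary-function analysis of~\cite{CV12}. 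Since $\wh{G}_i$ has $O(n\cdot\NumExits)$ vertices and integer weights of absolute value at most $W\cdot\ell$, each per-module check runs in time $(\NumModules\cdot n)^{O(\NumModules\cdot\NumExits)}\cdot W^{O(1)}$; multiplying by the $\NumModules$ modules and by the $(\NumModules\cdot n)^{O(\NumModules\cdot\NumExits^2)}\cdot W^{O(\NumExits)}$ enumerated candidate signatures yields the claimed running time. The main obstacle is making this per-module check rigorous: showing the gadget abstraction is faithful --- in particular the handling of the $-\omega$ and $+\infty$ entries and of proper cycles that interleave push and pop steps --- and proving that the simultaneous constraints over all exits of a module can be met within the stated time bound, for which the summary-function machinery of~\cite{CV12} must be adapted.
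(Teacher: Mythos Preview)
Your overall structure --- bound the signature values by Lemma~\ref{lemm:SigIsBounded}, enumerate the $(2W\ell+3)^{\NumExits}$ candidates, and reduce feasibility to a per-module check in a gadget game where boxes are replaced by edges weighted with the guessed signature --- matches the paper exactly. The difference, and the reason your argument is incomplete, is in what you ask of the per-module check.

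You require that the minimum-weight path in $\wh{G}_i^{\tau_i}$ from $\En_i$ to each exit $x$ \emph{equals} $\Sig_i(x)$, i.e., both a lower and an upper bound. You then split (b) into a lower-bound half handled by a potential/energy game and an upper-bound half that you defer to an unspecified ``multi-objective refinement'' of~\cite{CV12}; you yourself flag this as ``the main obstacle'' and do not discharge it. This is a genuine gap: the simultaneous upper-bound reachability constraints across all exits of a module, together with the treatment of $-\omega$ entries on paths versus on cycles, are not established within the claimed time bound.

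The paper avoids this entirely by observing that the upper-bound half is unnecessary. Its claim is: for a candidate $\Sig$, there exist memoryless local strategies $\tau_i$ with no negative cycle and every path from $\En_i$ to $x$ of weight \emph{at least} $\Sig_i(x)$, if and only if there is a feasible signature $\Sig'\geq\Sig$. Since the question is only whether some feasible $\Sig'\geq\vec{\nu}$ exists, checking this one-sided condition suffices; if it holds for a candidate $f\geq\vec{\nu}$, the resulting $\tau$ has $\Sig(\tau)\geq f\geq\vec{\nu}$ automatically. The one-sided per-module check is then a single one-dimensional mean-payoff game (via~\cite[Lemma~31]{CV-Corr}) with weights at most $2nW\ell$, which immediately gives the stated complexity. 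So the missing idea is monotonicity: test ``$\geq\Sig$'' rather than ``$=\Sig$'', which collapses your unresolved upper-bound obligation.
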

\begin{proof}
By Lemma~\ref{lemm:SigIsBounded} we may assume that the input is restricted for 
$\VEC{\nu}\in (\Set{-\omega,+\infty}\cup (\Z\cap[-W\ell,+W\ell]))^\NumExits$
and $\Sig : \Ex \to \Set{-\omega,+\infty}\cup(\Z\cap[-W\ell,+W\ell])$.
The proof of the lemma will use the idea of signature verification 
games.

\smallskip\noindent{\em The signature verification games.}
For a recursive game $\wrg$ and a function $\Sig : \Ex \to \Set{-\omega,+\infty}\cup \Z$ 
we construct $\NumModules$ game modules $G_1,\dots,G_\NumModules$, 
such that $G_i$ is formed from the module $A_i$ by replacing every box $b$, 
that invokes module $A_j$ and its $k$-th return node leads to node $v_k$, with a player-2 
node $v_b$ and edges $(v_b,v_k)$ with weight $\Sig_j(\Ex_k)$. 
Intuitively every game module is like a finite-state game with thresholds for exit
vertices. 
We first prove a claim related to signature verification games.

\smallskip\noindent{\bf Claim.}
For every game module $G_i$ there exists a strategy $\tau_i$ that satisfies $\Sig$, i.e., it assures:
\begin{itemize}
\item every path in $G_i^{\tau_i}$ from $\En_i$ to $\Ex_j$ has weight at least $\Sig_i(\Ex_j)$; and
\item there are no cycles with negative weight in $G_i^{\tau_i}$;
\end{itemize}
if and only if there exists a feasible signature function $\Sig '$ such that 
$\Sig ' \geq \Sig$. 

\smallskip\noindent{\em Proof of claim.} 
We prove both the directions of the claim. We start with the left to the right direction.
By Theorem~\ref{thm:memoryless-modular} such strategies $\Set{\tau_i}_{i=1}^\NumModules$ exist 
iff there exist memoryless strategies $\Set{\tau^*_i}_{i=1}^\NumModules$ that satisfies the above.
Clearly, $\tau^*$ is also a modular strategy.
In addition, for every path $\pi$ in $\wrg^{\tau^*}$, the path does not 
contain negative proper cycles (and hence, $\tau^*$ is a negative-cycle-free strategy), 
and the path does not violates the constraints according to $\Sig$.
The proof is by a straightforward induction on the additional stack height of $\pi$.
Hence we have $\Sig(\tau^*) \geq \Sig$. 
The other direction is simpler.
Clearly if there exists a negative-cycle-free modular strategy $\tau$ 
such that $\Sig(\tau) \geq \Sig$, then $\tau_i$ satisfies both items for every game module $G_i$.
This proves the desired claim.

The results of~\cite[Lemma 31]{CV-Corr} provides an algorithm that decides 
if for a given function $f:\Ex \to \Set{-\omega,+\infty}\cup(\Z\cap[-W\ell,+W\ell])$ and 
a game module $G_i$ there is a memoryless strategy that satisfies $f$;
this is done by solving a (finite-state) mean-payoff game with one-dimensional 
objective with weights at most $2\cdot n\cdot W\cdot \ell$.
Hence, we can sequentially go over all the functions 
$f:\Ex \to \Set{-\omega,+\infty}\cup(\Z\cap[-W\ell,+W\ell])$ such that 
$f \geq \VEC{\nu}$ and check if $f$ is satisfiable.
By the claim a signature $\Sig \geq \VEC{\nu}$ exists if and only if such $f$ was found.

\emph{Complexity.} The complexity analysis is as follows:
there are $(2\cdot W\cdot \ell + 2)^\NumExits$ functions to verify;
in the verification process we solve $\NumModules$ mean-payoff games with weights at most 
$2\cdot n\cdot W\cdot \ell$ and at most $n$ vertices and $m$ edges; and
every mean-payoff game can be solved in $O(m\cdot n^2\cdot W\cdot \ell)$ 
time~\cite{BCDGR11}.
Thus the the overall complexity is 
\[
O(n^2 \cdot m \cdot (W\cdot \ell)^{\NumExits+1} \cdot \NumModules) = 
O(n^{\NumModules\cdot \NumExits^2 + \NumModules\cdot \NumExits + \NumExits + 3} \cdot m \cdot
\NumModules^{\NumModules\cdot \NumExits^2 + \NumModules\cdot \NumExits + \NumExits + 2} \cdot 
W^{\NumExits +1})= 
(\NumModules \cdot n)^{O(\NumModules \cdot \NumExits^2)} \cdot W^{O(\NumExits)}
\]
The desired result follows.
\end{proof}

\smallskip\noindent{\bf Reduction from modular games to signature problem.}
Intuitively, for a given WRG $\wrg$, we would like to construct a new WRG $\wrg '$, 
such that player 1 is the winner in $\wrg$ iff there exists a feasible signature in 
$\wrg'$ with certain properties.
We construct $\wrg '$ in the following way:
Let $(A_1,\dots,A_\NumModules)$ be the modules of $\wrg$, then we construct the modules 
$(A '_1,\dots,A '_\NumModules)$ from $(A_1,\dots,A_\NumModules)$ as follows:
\begin{itemize}
\item Add $\NumModules$ exit nodes $x_1,\dots,x_\NumModules$ for every module. 
\item For every box node $b$, in module $A_j$, if $b$ invokes module $A_i$, then 
for all $k\neq i$, the exit $x_i$ is connected (by an edge with weight $0$) 
to the exit $x_i$ in the module $A_j$, and if $k=i$, then the exit leads 
to a sink state (and the weight of the self-loop is positive).
\item W.l.o.g we assume that all the entrances are player-2 nodes, and we add edges 
with zero weight from each entrance to all the new exits $x_1,\dots,x_\NumModules$.
\end{itemize}

We note that the number of exits $\NumExits'$ in $\wrg'$ is $\NumExits + \NumModules^2$.
The following lemma establishes winning in $\wrg$ and properties of signature function 
in $\wrg'$.

\begin{lem}\label{lemm:WinIffThereIsASignature}
Player 1 has a memoryless modular winning strategy in $\wrg$ iff there is a feasible signature 
$\Sig$ in $\wrg '$ such that for every module $A '_i$ we have $\Sig_i(x_i) \geq 0$. 
\end{lem}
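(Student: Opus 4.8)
The plan is to route through Theorem~\ref{thm:memoryless-modular}, reduce to negative-cycle-free strategies, and then prove the two implications by relating the only obstruction to winning that survives negative-cycle-freeness --- an unbounded negative ``dive'' by player~2 through the module stack --- to the summary-function event ``there is a negative-weight non-decreasing path from $\En_i$ to $x_i$'' that the new exits of $\wrg '$ are built to expose. First I would observe that a memoryless modular winning strategy $\tau$ may be taken negative-cycle-free: $\wrg^{\tau}$ is a player-2 recursive game graph, so a reachable proper cycle $C$ with $w(C)<0$ would let player~2 reach $C$ and loop it forever, yielding a play $\pi$ with $\LimAvg(\pi)<0$. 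Thus the left-hand side is equivalent to ``player~1 has a negative-cycle-free memoryless modular winning strategy for $\wrg$'', and for such a $\tau$ the signature $\Sig(\tau)$ is defined. Any such $\tau$ is trivially extended to a strategy $\tau '$ on $\wrg '$; since the material added in $\wrg '$ consists of weight-$0$ bubbling edges and sinks that loop positively, $\tau '$ remains negative-cycle-free, so $\Sig(\tau ')$ is a feasible signature of $\wrg '$ and $\wrg^{\tau}$ is a subgraph of $(\wrg ')^{\tau '}$.

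For the direction ($\Rightarrow$), let $\tau$ be a negative-cycle-free memoryless modular winning strategy for $\wrg$ and put $\Sig := \Sig(\tau ')$. Suppose $\Sig_i(x_i)<0$ for some module $A_i$. Then $(\wrg ')^{\tau '}$ contains a negative-weight non-decreasing path from $\En_i$ to the exit $x_i$ of $A_i$; inspecting the only edges that can enter an $x_i$-node (weight-$0$ edges from entrances, weight-$0$ bubbling edges, and the direct weight-$0$ edge $\En_i \to x_i$), such a path must consist of a genuine $\wrg^{\tau}$-descent from $\En_i$ down through a chain of module invocations, followed by weight-$0$ bubbling back to the starting stack level --- so the descent part already carries negative weight. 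Player~2 can then replay this descent over and over in $\wrg$, producing an infinite play whose prefix averages stay bounded away from $0$ from below, i.e.\ with $\LimAvg<0$, contradicting that $\tau$ is winning. Hence $\Sig_i(x_i)\ge 0$ for every $i$.

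For the direction ($\Leftarrow$), take a feasible signature $\Sig$ of $\wrg '$ with $\Sig_i(x_i)\ge 0$ for all $i$, realized by a negative-cycle-free memoryless modular strategy $\tau '$, and let $\tau$ be its restriction to $\wrg$ --- still memoryless modular and negative-cycle-free. Assume $\tau$ is not winning, so player~2 forces a play $\pi$ with $\LimAvg(\pi)<0$, and study $\pi$ through its infinitely many local minima. Negative-cycle-freeness makes every return between two consecutive occurrences of a recurring local-minimum configuration a non-negative proper cycle, and a routine averaging argument then shows that all ``bounded'' contributions to $\pi$ average to at least $0$; hence the negativity stems from an unbounded phenomenon, and a pigeonhole/Ramsey argument on the stack heights of the local minima of $\pi$ produces a module $A_i$ that remains permanently on the stack at infinitely many levels, together with a single non-decreasing path in $\wrg^{\tau}$ from $\En_i$ to an $A_i$-invocation at a strictly larger stack height and of negative weight (made arbitrarily negative if necessary by pumping a reachable negative pumpable pair, with Lemma~\ref{lemm:SigIsBounded} keeping the bookkeeping finite). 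Running this descent through the $x_i$-gadget of $\wrg '$ gives a negative-weight non-decreasing path from $\En_i$ to $x_i$ in $(\wrg ')^{\tau '}$, so $\Sig_i(x_i)<0$, contradicting the hypothesis; therefore $\tau$ is winning in $\wrg$.

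I expect the main obstacle to be the precise two-way correspondence between player-2's unbounded negative dives in $\wrg^{\tau}$ and negative-weight non-decreasing paths to the new exits $x_i$ in $(\wrg ')^{\tau '}$: one must verify that the bubbling rule propagates the $x_i$-token up exactly along the module chains that witness a bad dive and dead-ends it at a positively-looping sink along the others, and, in the ($\Leftarrow$) direction, one must distill a \emph{single, repeatable} bad descent (or negative proper cycle) out of an \emph{arbitrary} losing play $\pi$ --- a Ramsey-type argument over the local minima of $\pi$ that must also absorb internal module behaviour concealing negative pumpable pairs, i.e.\ the $\omega$ and $-\omega$ entries of the summary function.
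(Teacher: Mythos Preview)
Your proposal is correct and follows essentially the same route as the paper: both directions hinge on the equivalence between ``player~2 can force $\LimAvg<0$ against the fixed memoryless modular $\tau$'' and ``$\wrg^{\tau}$ contains either a negative proper cycle or a negative non-decreasing path $\En_i\leadsto\En_i$ at a deeper level'', with the latter being exactly what the new exits $x_i$ of $\wrg'$ detect via $\Sig_i(x_i)<0$.

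The only difference is one of presentation in the $(\Leftarrow)$ direction. The paper argues by a bare dichotomy --- a losing play in the negative-cycle-free graph $\wrg^{\tau}$ must contain a negative non-decreasing cycle, and any such cycle, read at module entrances, yields a negative $\En_i\to\En_i$ segment by pigeonhole on the $\NumModules$ modules --- whereas you spell out the local-minimum decomposition and the pigeonhole explicitly. That is fine, but you overshoot in two places. First, the appeal to pumpable pairs and to Lemma~\ref{lemm:SigIsBounded} (``made arbitrarily negative if necessary'') is unnecessary: $\Sig_i(x_i)$ is defined as a minimum, so a \emph{single} negative non-decreasing path from $\En_i$ to $x_i$ already forces $\Sig_i(x_i)<0$. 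Second, the ``routine averaging argument'' you allude to is really just the observation that an infinite play in $\wrg^{\tau}$ decomposes along its local minima into proper cycles and entrance-to-entrance non-decreasing segments; if both kinds are non-negative the $\LimAvg$ cannot be negative, so one of them must be negative --- no Ramsey is needed beyond pigeonhole on the finitely many module indices. Trimming these two points would bring your argument to essentially the paper's.
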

\begin{proof}
We first prove the direction from left to right.
Let $\tau$ be a memoryless modular winning strategy (and therefore also negative-cycle-free) in $\wrg$. 
We note that $\tau$ is a modular negative-cycle free strategy also for $\wrg '$.
We claim that (the feasible signature function) $\Sig = \Sig(\tau)$ satisfies $\Sig_i(x_i) \geq 0$.
Indeed, if $\Sig_i(x_i) < 0$, then by the construction of $\wrg'$, there is a play $\rho$ from 
$\En_i$ to $\En_i$ (at an higher stack height) with negative weight, that is consistent with $\tau$.
Since $\tau$ is a modular strategy we get that $\rho^\omega$ is a play with a negative mean-payoff 
that is consistent with $\tau$, which contradicts the assumption that $\tau$ is a winning strategy.

To prove the converse direction, let $\tau$ be a memoryless negative-cycle-free strategy in $\wrg '$ such that $\Sig(\tau) = \Sig$.
We note that $\tau$ is a modular strategy also for $\wrg$ and we claim that it is a winning strategy for $\wrg$.
Indeed, let $\wrg^\tau$ be the player-2 game according to $\tau$;
if in $\wrg^\tau$ there is a path with negative mean-payoff then either
\begin{itemize}
\item there is a proper cycle in $\wrg^\tau$ with negative weight, which contradicts the assumption that $\tau$ is negative-cycle-free strategy; or
\item there is a non-decreasing cycle $\wrg^\tau$ with negative weight.
If this is the case then for some module $A_i$ there is a non-decreasing path in $\wrg^\tau$ from $\En_i$ to $\En_i$ with negative weight, and thus in $\wrg '^\tau$ there is a path with negative weight from $\En_i$ to $x_i$ and therefore $\Sig_i(x_i) < 0$, in contradiction to the assumption. 
\end{itemize}
The desired result follows.
\end{proof}

\begin{thm}\label{thm:SolveBoundedRecursiveGamesWith}
Given a WRG $\wrg$ with a one-dimensional mean-payoff objective,
whether player~1 has a modular winning strategy can be decided in $(n\cdot \NumModules)^{O(\NumModules^5 + \NumModules\cdot \NumExits^2)}\cdot W^{O(\NumModules^2 +\NumExits)}$ 
time.
\end{thm}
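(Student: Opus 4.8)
The plan is to combine the reduction from modular games to the signature problem (the construction of $\wrg'$ with Lemma~\ref{lemm:WinIffThereIsASignature}) with the decision procedure for feasibility of signatures (Lemma~\ref{lemm:SignatureFeasability}), and then carefully track the blow-up in the parameters. First I would invoke Theorem~\ref{thm:memoryless-modular}: it suffices to decide whether player~1 has a \emph{memoryless} modular winning strategy in $\wrg$. By Lemma~\ref{lemm:WinIffThereIsASignature}, this holds iff there is a feasible signature $\Sig$ in $\wrg'$ with $\Sig_i(x_i)\geq 0$ for every module $A'_i$; equivalently, setting the threshold vector $\VEC{\nu}\in(\Set{-\omega,+\infty}\cup\Z)^{\NumExits'}$ to be $0$ on the new exits $x_1,\dots,x_\NumModules$ of every module and $-\omega$ on the original exits, this amounts to asking whether there is a feasible signature $\Sig\geq\VEC{\nu}$ in $\wrg'$. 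So the algorithm is: construct $\wrg'$ in polynomial time, then run the procedure of Lemma~\ref{lemm:SignatureFeasability} on $\wrg'$ with this threshold vector.

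Next I would do the complexity bookkeeping, which is the bulk of the work. The key point is that $\wrg'$ has $\NumModules$ modules, but its number of exits is $\NumExits' = \NumExits + \NumModules^2$ (as noted just before Lemma~\ref{lemm:WinIffThereIsASignature}), while the number of vertices per module grows only by an additive $\NumModules$ and the maximal absolute weight $W$ and number of modules $\NumModules$ are unchanged. Plugging $\NumModules$ for the number of modules, $\NumExits' = \NumExits+\NumModules^2$ for the number of exits, $n+\NumModules = O(n\cdot\NumModules)$ for the per-module size, and $W$ for the maximal weight into the bound of Lemma~\ref{lemm:SignatureFeasability}, namely $(\NumModules\cdot n')^{O(\NumModules\cdot(\NumExits')^2)}\cdot W^{O(\NumExits')}$, gives running time
\[
(n\cdot\NumModules)^{O\bigl(\NumModules\cdot(\NumExits+\NumModules^2)^2\bigr)}\cdot W^{O(\NumExits+\NumModules^2)}
= (n\cdot\NumModules)^{O(\NumModules^5 + \NumModules\cdot\NumExits^2)}\cdot W^{O(\NumModules^2+\NumExits)},
\]
where for the exponent in the base one uses $(\NumExits+\NumModules^2)^2 = O(\NumExits^2 + \NumModules^4)$, so $\NumModules\cdot(\NumExits+\NumModules^2)^2 = O(\NumModules\cdot\NumExits^2 + \NumModules^5)$, and for the exponent of $W$ one uses $\NumExits+\NumModules^2 = O(\NumExits+\NumModules^2)$. (One should also check that the cross term $\NumModules^2\cdot\NumExits$ arising from $(\NumExits+\NumModules^2)^2$ is absorbed, e.g. by $\NumModules^2\cdot\NumExits \le \NumModules\cdot\NumExits^2 + \NumModules^3 \le \NumModules\cdot\NumExits^2+\NumModules^5$ via the AM--GM-type bound $2ab\le a^2+b^2$.) Adding the polynomial cost of building $\wrg'$ and of the threshold substitution does not change the bound, since everything is dominated by the signature-feasibility call.

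Finally I would assemble the pieces into a clean statement: correctness is immediate from Theorem~\ref{thm:memoryless-modular}, Lemma~\ref{lemm:WinIffThereIsASignature}, and the correctness part of Lemma~\ref{lemm:SignatureFeasability} (the latter already accounts for the range restriction on signatures via Lemma~\ref{lemm:SigIsBounded}), and the running-time bound is exactly the arithmetic above. The main obstacle, such as it is, is not conceptual but arithmetic: one must be careful that the parameter $\NumExits'$ of the \emph{derived} game $\wrg'$, not $\NumExits$ of $\wrg$, is what enters Lemma~\ref{lemm:SignatureFeasability}, and that after substituting $\NumExits' = \NumExits+\NumModules^2$ all the resulting exponents collapse into the claimed $O(\NumModules^5+\NumModules\cdot\NumExits^2)$ and $O(\NumModules^2+\NumExits)$ forms; the non-obvious step is recognizing that the $\NumModules^2$-many new exits per pair of modules is the source of the $\NumModules^5$ and $\NumModules^2$ terms in the final bound, which is what makes the algorithm polynomial precisely when $\NumModules$, $\NumExits$, and $W$ are all fixed.
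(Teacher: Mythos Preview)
Your proposal is correct and follows essentially the same approach as the paper: construct $\wrg'$, apply Lemma~\ref{lemm:WinIffThereIsASignature} to reduce to a signature-feasibility question, and invoke Lemma~\ref{lemm:SignatureFeasability} on $\wrg'$. The paper's own proof is extremely terse (it simply cites the two lemmas), whereas you have spelled out the threshold vector and, in particular, the complexity bookkeeping showing how substituting $\NumExits' = \NumExits + \NumModules^2$ into the bound of Lemma~\ref{lemm:SignatureFeasability} yields the stated exponents; this is exactly the intended argument.
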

\begin{proof}
We first construct the modular game graph $\wrg '$ and then we check if there is a signature function $\Sig$ such that $\Sig_i(x_i) \geq 0$ for every $i\in\RangeSet{1}{\NumModules}$.
The correctness and complexity follows from Lemma~\ref{lemm:WinIffThereIsASignature} and Lemma~\ref{lemm:SignatureFeasability}.
\end{proof}

\subsection{Hardness for fixed parameter tractability}
Given Theorem~\ref{thm:SolveBoundedRecursiveGamesWith} (algorithm to solve
in polynomial time when $\NumModules$ and $\NumExits$ are fixed) an interesting question 
is whether it is possible to show that WRGs under modular strategies 
is fixed parameter tractable (i.e., to obtain an algorithm that runs in time 
$O(f(\NumModules,\NumExits) \cdot \text{poly}(n,m,W))$).
We show the hardness of fixed parameter tractability, again by a reduction 
from parity games, implying that fixed parameter tractability would imply
the solution of the long-standing open problem of fixed parameter tractability
of parity games.

\smallskip\noindent{\bf Parity games to mean-payoff games with large weights.}
In~\cite{Jur98} a reduction of finite-state parity games to finite-state 
mean-payoff games was presented, and the weights for the mean-payoff game 
used were $\set{(-n)^0,(-n)^1,\dots,(-n)^i,\dots,(-n)^k}$, where $k$ is the number of 
priorities of the parity function.
The reduction was a $O(k\cdot n \cdot \log n)$ time reduction.

\smallskip\noindent{\bf The reduction.}
Given a finite state mean-payoff game $G$ with $n$ vertices and weights 
$(-n)^0,(-n)^1,\dots,(-n)^i,\dots,(-n)^k$ we construct a recursive game 
graph $\wrg = \Tuple{A_0,P_1,\dots,P_k,N_1,\dots,N_k}$ with $2\cdot k +1$ modules in the following 
way.

\begin{itemize}
\item 
The $P_i$ modules: all the nodes  in the $P_i$ modules have out-degree $1$ (so the owner is irrelevant), 
and all the modules have only one exit.
In module $P_1$ the out-edge of the entrance node leads to the exit node and has weight $+n$ 
(equivalently, it has a path with length $n$ to the exit node, and the weight of each edge in the path is $+1$).
For $i > 1$, the module $P_i$ invokes $n$ times the module $P_{i-1}$ and goes to the exit node.

\item
The $N_i$ modules: all the nodes  in the $N_i$ modules have out-degree $1$ (so the owner is irrelevant), 
and all the modules have only one exit.
In module $N_1$ the out-edge of the entrance node leads to the exit node and has weight $-n$ 
(equivalently, it has a path with length $n$ to the exit node, and the weight of each edge in the path is $-1$).
For $i > 1$, the module $N_i$ invokes $n$ times the module $N_{i-1}$ and goes to the exit node.

\item The $A_0$ module: $A_0$ is formed from the vertices of the finite state game graph $G$, 
and every transition $(u,v)$ in $G$, with weight $(-n)^i$ is replaced by a transition from $u$ to a 
box $b$ and by a transition from the return node of $b$ to $v$ (both with weight $0$), 
where $b$ invokes $P_i$ if $i$ is even, and invokes $N_i$ if $i$ is odd.

\end{itemize}

\begin{rem}\label{rem:WeightsOfModules}
The path from the entrance of module $P_i$ (resp. $N_i$) to its exit has weight $n^i$ (resp. $-(n^i)$).
\end{rem}
\begin{proof}
The proof is by a trivial induction on $i$.
\end{proof}

We observe that all strategies in $\wrg$ are modular strategies, and that a modular winning strategy in 
$\wrg$ is a winning strategy in $G$, 
and vice versa. 
We have the following result.

\begin{thm}\label{thm:parity-modular-reduction}
Given a finite-state parity game $G$ with $n$ vertices and priority function 
of $k$-priorities, we can construct in polynomial time a WRG $\wrg$ with $2\cdot k+1$ 
modules, with $O(k\cdot n)$ nodes and weights restricted to $\Set{-1,0,+1}$ 
such that a vertex $v$ is winning for player~1 in the parity game iff there is a 
modular winning strategy in $\wrg$ with $v$ as the initial node.
\end{thm}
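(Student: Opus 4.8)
The plan is to compose the construction sketched above with Jurdzi\'nski's reduction of parity games to mean‑payoff games~\cite{Jur98}, and then verify that the recursive encoding simulates the resulting mean‑payoff game exactly while keeping all edge weights in $\Set{-1,0,+1}$.

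First I would apply the reduction of~\cite{Jur98} to the given parity game, obtaining in polynomial time a finite‑state mean‑payoff game $G$ on $O(n)$ vertices whose edge weights lie in $\Set{(-n)^0,(-n)^1,\dots,(-n)^k}$ and such that a vertex is winning for player~1 in the parity game iff it is winning for player~1 (for the objective mean‑payoff $\ge 0$) in $G$; by determinacy and memoryless determinacy of mean‑payoff games~\cite{EM79,ZP95} the two winning sets are complementary and each player has a memoryless winning strategy from all of its winning vertices. Then I would instantiate the WRG $\wrg=\Tuple{A_0,P_1,\dots,P_k,N_1,\dots,N_k}$ on $G$ as described. From the description it is immediate that $\wrg$ has $2\cdot k+1$ modules, that every magnitude module $P_i$ or $N_i$ uses $O(n)$ nodes and has out‑degree $1$ at every vertex (so ownership is irrelevant and its local strategy is forced), that $A_0$ has the size of the game graph of $G$, so $\wrg$ has polynomially many nodes (with $O(kn)$ in the magnitude modules), that all edge weights are in $\Set{-1,0,+1}$, and that the whole object is computable in polynomial time. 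Remark~\ref{rem:WeightsOfModules} then supplies the crucial arithmetic fact: a traversal of $P_i$ (resp.\ $N_i$) from entrance to exit accumulates weight exactly $+n^i$ (resp.\ $-n^i$) along edges of weight in $\Set{0,+1}$ (resp.\ $\Set{0,-1}$), so that replacing a $G$‑edge of weight $(-n)^i$ by a call to $P_i$ when $i$ is even, and to $N_i$ when $i$ is odd — exactly what $A_0$ does — reproduces precisely the weight $(-n)^i$.

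Next I would set up the play/strategy correspondence. Every branching vertex of $\wrg$ lies in $A_0$, and $A_0$ is never itself pushed onto the stack (its boxes invoke only the deterministic modules $P_i,N_i$, which always return), so every choice in the global game is made at a configuration of stack height $0$ whose state is a vertex of $G$; consequently every strategy is equivalent to a modular one, a modular strategy of $\wrg$ restricted to $A_0$ is the same object as a (history‑dependent) strategy of $G$, and memoryless strategies match on the two sides. Under this correspondence a play $\rho$ of $G$ started at $v$ maps to the unique play of $\wrg$ started at the configuration $(v)$ obtained by expanding each $G$‑edge into the matching deterministic module traversal; at the ``checkpoint'' positions of this $\wrg$‑play that lie in $A_0$ the accumulated weight equals $w_G$ of the corresponding prefix of $\rho$, each expansion has length at most a constant $L$ depending only on $n$ and $k$, and since partial sums are non‑decreasing throughout any $P_i$‑traversal and non‑increasing throughout any $N_i$‑traversal, the accumulated weight at an arbitrary position is at least the smaller of the two surrounding checkpoint weights.

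Finally I would conclude both implications. If $v$ is winning for player~1 in the parity game, fix a memoryless player‑1 winning strategy $\sigma$ in $G$; decomposing prefixes into simple cycles (all of non‑negative weight) plus a short acyclic remainder shows $w_G$ along any $\sigma$‑consistent play stays bounded below by a fixed constant, hence along the corresponding $\wrg$‑play the accumulated weight is bounded below by a fixed constant (using the checkpoint bound above also at interior positions) while the length tends to infinity, so $\LimAvg\ge\VEC{0}$; thus the memoryless modular strategy induced by $\sigma$ (and forced on the $P_i,N_i$) is a modular winning strategy from $(v)$, and it defeats even non‑modular player‑2 strategies since their projection to $G$ is a legal $\sigma$‑consistent play. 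Conversely, if $v$ is winning for player~2, fix a memoryless player‑2 winning strategy $\sigma'$ in $G$; then every simple cycle reachable under $\sigma'$ has weight $\le -1$, so $w_G$ along any $\sigma'$‑consistent play decreases linearly, and since the $\wrg$‑length of the prefix up to the $t$‑th checkpoint is at most $t\cdot L$, the averages at checkpoints are bounded above by a negative constant, whence $\LimAvg\ge\VEC{0}$ fails; as the projection to $G$ of any modular player‑1 strategy against the player‑2 strategy induced by $\sigma'$ is again a legal $\sigma'$‑consistent play, player~1 has no modular winning strategy from $(v)$. Combining these with the complementarity of the winning sets in $G$ yields the stated equivalence. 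The main obstacle, and the only step requiring real care, is precisely this sign‑preservation of the long‑run average under the uniformly bounded but exponentially deep ``stretching'' of plays — in particular verifying it at positions interior to module traversals and against arbitrary non‑modular counter‑strategies of player~2; the remaining items (the module count $2k+1$, the $O(kn)$ node bound, the weight restriction, and polynomial‑time constructibility) are routine.
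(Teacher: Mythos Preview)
Your proposal is correct and follows essentially the same route as the paper: compose Jurdzi\'nski's reduction with the deterministic ``magnitude'' modules $P_i,N_i$, use Remark~\ref{rem:WeightsOfModules} to see that each module traversal reproduces the weight $(-n)^i$, and observe that all branching in $\wrg$ occurs in $A_0$ so that strategies coincide on both sides. The paper in fact stops at the one-line observation that ``a modular winning strategy in $\wrg$ is a winning strategy in $G$, and vice versa,'' whereas you spell out the sign-preservation of the long-run average under the exponential stretching of plays (via the checkpoint argument and the monotonicity of partial sums inside each $P_i$/$N_i$); this extra care is warranted and your treatment of it is sound.
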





\smallskip\noindent{\bf Concluding remarks.}
In this work we considered the fundamental algorithmic 
questions related to multidimensional mean-payoff objectives
in finite-state games, pushdown graphs, and pushdown games.
We presented algorithms that precisely characterized the parameters 
that need to be constant for polynomial-time algorithms.
Moreover, we also established the hardness of fixed parameter tractability 
for the relevant problems.

{\small
\smallskip\noindent{\bf Acknowledgement.}
The research was supported by Austrian Science Fund (FWF) Grant No P 23499-N23,
FWF NFN Grant No S11407-N23 (RiSE), ERC Start grant (279307: Graph Games), 
Microsoft faculty fellows award, the RICH Model Toolkit (ICT COST Action IC0901),
and was carried out in partial fulfillment of the requirements for the Ph.D. degree of 
the second author.
}

{\footnotesize
\bibliographystyle{plain}
\bibliography{diss}

\begin{thebibliography}{10}

\bibitem{AB03}
P.~A. Abdulla, A.~Bouajjani, and J.~d'Orso.
\newblock Deciding monotonic games.
\newblock In {\em CSL}, pages 1--14, 2003.

\bibitem{AlurParity}
R.~Alur, S.~La Torre, and P.~Madhusudan.
\newblock Modular strategies for infinite games on recursive graphs.
\newblock In {\em CAV}, pages 67--79, 2003.

\bibitem{AlurReach}
R.~Alur, S.~La Torre, and P.~Madhusudan.
\newblock Modular strategies for recursive game graphs.
\newblock {\em Theor. Comput. Sci.}, 354(2):230--249, 2006.

\bibitem{Bee80}
C.~Beeri.
\newblock On the membership problem for functional and multivalued dependencies
  in relational databases.
\newblock {\em ACM Trans. on Database Systems}, 5:241--259, 1980.

\bibitem{BCHJ09}
R.~Bloem, K.~Chatterjee, T.~A. Henzinger, and B.~Jobstmann.
\newblock Better quality in synthesis through quantitative objectives.
\newblock In {\em CAV}, pages 140--156, 2009.

\bibitem{BBFR13}
A.~Bohy, V.~Bruy\'ere, E.~Filiot, and J-F. Raskin.
\newblock Synthesis from ltl specifications with mean-payoff objectives.
\newblock In {\em TACAS}, 2013.

\bibitem{BCHK11}
U.~Boker, K.~Chatterjee, T.~A. Henzinger, and O.~Kupferman.
\newblock Temporal specifications with accumulative values.
\newblock In {\em LICS}, pages 43--52, 2011.

\bibitem{BBFK08}
T.~Br{\'a}zdil, V.~Brozek, V.~Forejt, and A.~Kucera.
\newblock Reachability in recursive {M}arkov decision processes.
\newblock {\em Inf. Comput.}, 206(5):520--537, 2008.

\bibitem{BBKO11}
T.~Br{\'a}zdil, V.~Brozek, A.~Kucera, and J.~Obdrz{\'a}lek.
\newblock Qualitative reachability in stochastic {BPA} games.
\newblock {\em Inf. Comput.}, 209(8):1160--1183, 2011.

\bibitem{BCKN12}
T.~Br{\'a}zdil, K.~Chatterjee, A.~Kucera, and P.~Novotn{\'y}.
\newblock Efficient controller synthesis for consumption games with multiple
  resource types.
\newblock In {\em CAV (CoRR abs/1202.0796)}, 2012.

\bibitem{BCDGR11}
L.~Brim, J.~Chaloupka, L.~Doyen, R.~Gentilini, and J-F. Raskin.
\newblock Faster algorithms for mean-payoff games.
\newblock {\em Formal Methods in System Design}, 38(2):97--118, 2011.

\bibitem{BuchiLandweber69}
J.R. B{\"u}chi and L.H. Landweber.
\newblock Solving sequential conditions by finite-state strategies.
\newblock {\em Transactions of the AMS}, 138:295--311, 1969.

\bibitem{CDH10}
K.~Chatterjee, L.~Doyen, and T.~A. Henzinger.
\newblock Quantitative languages.
\newblock {\em ACM Trans. Comput. Log.}, 11(4), 2010.

\bibitem{CDHR10}
K.~Chatterjee, L.~Doyen, T.~A. Henzinger, and J-F. Raskin.
\newblock Generalized mean-payoff and energy games.
\newblock In {\em FSTTCS}, pages 505--516, 2010.

\bibitem{CV12}
K.~Chatterjee and Y.~Velner.
\newblock Mean-payoff pushdown games.
\newblock In {\em LICS}, 2012.

\bibitem{CV-Corr}
K.~Chatterjee and Y.~Velner.
\newblock Mean-payoff pushdown games.
\newblock {\em CoRR}, abs/1201.2829, 2012.

\bibitem{CM93}
E.~Cohen and N.~Megiddo.
\newblock Strongly polynomial-time and nc algorithms for detecting cycles in
  periodic graphs.
\newblock {\em J. ACM}, 40(4):791--830, 1993.

\bibitem{CLRS-Book}
T.~H. Cormen, C.~E. Leiserson, R.~L. Rivest, and C.~Stein.
\newblock {\em Introduction to Algorithms}.
\newblock The MIT Press, 2001.

\bibitem{DM10}
M.~Droste and I.~Meinecke.
\newblock Describing average- and longtime-behavior by weighted {MSO} logics.
\newblock In {\em MFCS}, pages 537--548, 2010.

\bibitem{EM79}
A.~Ehrenfeucht and J.~Mycielski.
\newblock Positional strategies for mean payoff games.
\newblock {\em Int. Journal of Game Theory}, 8(2):109--113, 1979.

\bibitem{EJ91}
E.A. Emerson and C.~Jutla.
\newblock Tree automata, mu-calculus and determinacy.
\newblock In {\em FOCS}, pages 368--377. IEEE, 1991.

\bibitem{EY05}
K.~Etessami and M.~Yannakakis.
\newblock Recursive {M}arkov decision processes and recursive stochastic games.
\newblock In {\em ICALP'05}, LNCS 3580, Springer, pages 891--903, 2005.

\bibitem{EY09}
K.~Etessami and M.~Yannakakis.
\newblock Recursive {M}arkov chains, stochastic grammars, and monotone systems
  of nonlinear equations.
\newblock {\em J. ACM}, 56(1), 2009.

\bibitem{Gordan}
P.~Gordan.
\newblock Ueber die auflosung linearer gleichungen mit reellen coeffizienten.
\newblock {\em Mathematische Annalen}, 6:23--28, 1873.

\bibitem{GLS81}
M.~Gr{\"o}tschel, L.~Lov{\'a}sz, and A.~Schrijver.
\newblock The ellipsoid method and its consequences in combinatorial
  optimization.
\newblock {\em Combinatorica}, 1(2):169--197, 1981.

\bibitem{GKK88}
V.~A. Gurvich, A.~V. Karzanov, and L.~G. Khachiyan.
\newblock {Cyclic games and an algorithm to find minimax cycle means in
  directed graphs}.
\newblock {\em USSR Comput. Math. Math. Phys.}, 28(5):85--91, April 1990.

\bibitem{Immerman81}
N.~Immerman.
\newblock Number of quantifiers is better than number of tape cells.
\newblock {\em Journal of Computer and System Sciences}, 22:384--406, 1981.

\bibitem{Jur98}
M.~Jurdzinski.
\newblock Deciding the winner in parity games is in {UP} $\cap$ co-{UP}.
\newblock {\em Information Processing Letters}, 68(3):119--124, 1998.

\bibitem{JPZ08}
M.~Jurdzinski, M.~Paterson, and U.~Zwick.
\newblock A deterministic subexponential algorithm for solving parity games.
\newblock {\em SIAM J. Comput.}, 38(4):1519--1532, 2008.

\bibitem{Karp78}
R.M. Karp.
\newblock A characterization of the minimum cycle mean in a digraph.
\newblock {\em Discrete Mathematics}, 23:309--311, 1978.

\bibitem{KS88}
S.~R. Kosaraju and G.~F. Sullivan.
\newblock Detecting cycles in dynamic graphs in polynomial time.
\newblock In {\em STOC}, pages 398--406, 1988.

\bibitem{Papa81}
C.~H. Papadimitriou.
\newblock On the complexity of integer programming.
\newblock {\em J. ACM}, 28(4):765--768, 1981.

\bibitem{PnueliRosner89}
A.~Pnueli and R.~Rosner.
\newblock On the synthesis of a reactive module.
\newblock In {\em POPL}, pages 179--190. ACM Press, 1989.

\bibitem{RamadgeWonham87}
P.~J. Ramadge and W.~M. Wonham.
\newblock Supervisory control of a class of discrete-event processes.
\newblock {\em SIAM Journal of Control and Optimization}, 25(1):206--230, 1987.

\bibitem{Schewe07}
S.~Schewe.
\newblock Solving parity games in big steps.
\newblock In {\em FSTTCS}, pages 449--460, 2007.

\bibitem{CGHRT12}
P.~\v{C}ern\'{y}, S.~Gopi, T.A. Henzinger, A.~Radhakrishna, and N.~Totla.
\newblock Synthesis from incompatible specifications.
\newblock In {\em EMSOFT '12}, pages 53--62. ACM, 2012.

\bibitem{VR11}
Y.~Velner and A.~Rabinovich.
\newblock Church synthesis problem for noisy input.
\newblock In {\em FOSSACS}, pages 275--289, 2011.

\bibitem{Wal00}
I.~Walukiewicz.
\newblock Model checking {CTL} properties of pushdown systems.
\newblock In {\em FSTTCS}, pages 127--138, 2000.

\bibitem{Wal01}
I.~Walukiewicz.
\newblock Pushdown processes: Games and model-checking.
\newblock {\em Inf. Comput.}, 164(2):234--263, 2001.

\bibitem{ZP95}
U.~Zwick and M.~Paterson.
\newblock The complexity of mean payoff games on graphs.
\newblock {\em Theoretical Computer Science}, 158:343--359, 1996.

\end{thebibliography}
}


\end{document}